\newif\ifconf
\newenvironment{proofof}[1]{\noindent{\bf Proof of {#1}:}}
{\qed

}
\newcommand{\confoption}[2]{{\ifconf #1 \else #2 \fi}}
\newcounter{note}[section]
\newtheorem{theorem}{Theorem}%[section]
\newtheorem{definition}{Definition}
\newtheorem{cor}[theorem]{Corollary}
\newtheorem{lemma}[theorem]{Lemma}%[section]
\def\rank{{\rm rank }} % rank
\def\tr{{\rm tr}} % trace
\def\E{\mathbb{E}} % expectation
\def\Pr{{\rm Pr}} % probability
\def\R{{\mathds{R}}} % set of real number
\newcommand{\junk}[1]{}
\def\b0{{\bf 0}}
\renewcommand{\S}{\mathbb{S}}
\DeclareMathOperator{\conv}{conv}
\DeclareMathOperator{\disc}{disc}
\DeclareMathOperator{\herdisc}{herdisc}
\DeclareMathOperator{\vecdisc}{vecdisc}
\DeclareMathOperator{\hvecdisc}{hvecdisc}
\DeclareMathOperator{\specLB}{specLB}
\DeclareMathOperator{\diag}{diag}
\DeclareMathOperator{\range}{range}
\newcommand{\cut}[1]{}
\begin{document}
\title{Approximating Hereditary Discrepancy via Small Width Ellipsoids}
\author{Aleksandar Nikolov\\Rutgers University \and Kunal
  Talwar\\Microsoft Research}
\date{}
\maketitle
\begin{abstract}
The {\em Discrepancy} of a hypergraph is the minimum attainable value, over  two-colorings of its vertices, of the maximum absolute imbalance of any hyperedge. The {\em Hereditary Discrepancy} of a hypergraph, defined as the maximum discrepancy of a restriction of the hypergraph to a subset of its vertices, is a measure of its complexity. Lov\'{a}sz, Spencer and Vesztergombi (1986) related the natural extension of this quantity to matrices  to rounding algorithms for linear programs, and gave a determinant based lower bound on the hereditary discrepancy. Matou\v{s}ek (2011) showed that this bound is tight up to a polylogarithmic factor, leaving open the question of actually computing this bound. Recent work by Nikolov, Talwar and Zhang (2013) showed a polynomial time $\tilde{O}(\log^3 n)$-approximation to hereditary discrepancy, as a by-product of their work in differential privacy. In this paper, we give a direct simple $O(\log^{3/2} n)$-approximation algorithm for this problem. We show that up to this approximation factor, the hereditary discrepancy of a matrix $A$ is characterized by the optimal value of  simple geometric convex program that seeks to minimize the largest $\ell_{\infty}$ norm of any point in a ellipsoid containing the columns of $A$. This characterization promises to be a useful tool in discrepancy theory.
\end{abstract}
\thispagestyle{empty}
\setcounter{page}{0}
\newpage
\section{Introduction}

Discrepancy theory, in the broadest sense, studies the fundamental
limits to approximating a ``complex measure'' (i.e.~continuous, or
with large support) with a ``simple measure'' (i.e. counting measure,
or a measure with small support) with respect to a class of
``distinguishers''. A prototypical ``continuous discrepancy'' question
is how uniform can a set $P$ of $n$ points in the unit square $[0,
1)^2$ be, where uniformity is measured with respect to a class of
geometric shapes, e.g.~axis-aligned rectangles~\cite{schmidt}. A
prototypical ``discrete discrepancy'' question asks whether we can
color the $n$ vertices of a hypergraph of $O(n)$ edges with two
colors, red and blue, so that each edge has approximately the same
number of red vertices as blue vertices~\cite{spencer-six}. The two
kinds of questions are deeply related, and transference theorems
between different discrepancy measures are known~\cite{beck-sos}.

Questions related to discrepancy theory are raised throughout
mathematics, e.g.~number theory, Diophantine approximation, numerical
integration. Unsurprisingly, they also naturally appear in computer
science -- questions about the (im)possibility of approximating
continuous, average objects with discrete ones are central to
pseudorandomness, learning theory, communication complexity,
approximation algorithms, among others. For a beautiful survey of
applications of discrepancy theory to computer science, we refer the reader to
Chazelle's The Discrepancy Method~\cite{Chazelle}.

Despite discrepancy theory's many applications in computer science, we
have only recently began to understand the computational complexity of
measures of discrepancy themselves. In this work, we address the
problem of approximately computing hereditary discrepancy, one of the
fundamental discrepancy measures. Hereditary discrepancy is a robust
version of combinatorial discrepancy, which is the hypergraph coloring
problem mentioned above. More precisely, the combinatorial discrepancy
$\disc(\mathcal{H})$ of a hypergraph $\mathcal{H} = (H_1, \ldots,
H_m)$ on the vertices $[n] = \{1, \ldots, n\}$ is the minimum over
colorings $\chi:[n] \rightarrow \{-1, 1\}$ of the maximum
``imbalance'' over hyperedges $\max_{i = 1}^m{|\sum_{j \in
    H_i}{\chi(i)}|}$.  While relatively simple, $\disc(\mathcal{H})$
is a brittle quantity, which can make it intractable to
estimate.\confoption{}{We may wish to say that $\disc(\mathcal{H})$
measures the complexity of $\mathcal{H}$, but it can be $0$ for
intuitively complex $\mathcal{H}$ for trivial reasons. For example let
$(V,E)$ be a complex hypergraph all of whose sets have equal size,
such as ${[n] \choose n/2}$. Consider the hypergraph formed by taking
two identical copies of $(V, E)$, say $(V_1, E_1)$ and $(V_2, E_2)$
and defining the new hypergraph as $(V_1 \cup V_2, E'\triangleq\{e_1
\cup e_2: e_1 \in E_1, e_2 \in E_2\})$. By coloring $V_1$ as $+1$ and
$V_2$ as $-1$, we get discrepancy zero for each edge in $E'$, despite
the intuitive complexity of $E$.  }For this reason it is often more
convenient to work with the more robust \emph{hereditary
  discrepancy}. Hereditary discrepancy is the maximum discrepancy over
restricted hypergraphs, i.e. $\herdisc(\mathcal{H}) = \max_{W\subseteq
  [n]} \disc(\mathcal{H}|_W)$, where $\mathcal{H}|_W = (H_1 \cap W,
\ldots, H_m \cap W)$.\confoption{}{Notice, for example, that the hereditary
discrepancy of the above example is in fact $\Omega(n)$ -- a more
fitting measure of the complexity of the
hypergraph.} %Hereditary discrepancy

Discrepancy and hereditary discrepancy have natural generalizations to
matrices. The discrepancy of a matrix $A$ is equal to $\disc(A) =
\min_{x\in \{-1, 1\}^n}{\|Ax\|_\infty}$, and hereditary discrepancy is
equal to $\max_{S \subseteq [n]}{\disc(A|_S)}$, where $A|_S$ is the
submatrix of $A$ consisting of columns indexed by elements of
$S$. These quantities coincide with hypergraph discrepancy when
evaluated on the incidence matrix of the hypergraph, and are also
natural themselves. For example, a classical result of Lov\'{a}sz,
Spencer, and Vesztergombi~\cite{LSV} states that for any matrix $A$,
any vector $c \in [-1, 1]^n$ can be rounded to $x \in \{-1,1\}^n$ so
that $\|Ax-Ac\|_\infty \leq 2\herdisc(A)$. In the context of a linear
program, this means that we can round fractional solutions to integral
ones while still approximately satisfying the linear constraints defined by $A$. This
fact was recently used by Rothvo{\ss} to design an improved
approximation algorithm for bin packing~\cite{rothvoss-binpacking}.

The robustness of hereditary discrepancy in comparison with
discrepancy is evident in the hardness of approximating each of the
two measures. By an important result of Spencer~\cite{spencer-six},
whenever the number of edges in $\mathcal{H}$ is $m = O(n)$, the
discrepancy is at most $O(\sqrt{n})$. It turns our that it is $\mathsf{NP}$-hard to
distinguish between this worst-case upper bound and 
discrepancy zero~\cite{CNN}. By contrast, recently Nikolov, Talwar, and
Zhang gave a polylogarithmic approximation to hereditary
discrepancy~\cite{NTZ}. At first glance, this is surprising, because
hereditary discrepancy is a maximum over exponentially many
$\mathsf{NP}$-hard 
minimization problems. Thus, hereditary discrepancy is not even obviously in
$\mathsf{NP}$ (but is $\mathsf{NP}$-hard to approximate within a
factor of 2, see~\cite{AustrinGH13}). However, the structure and robustness of hereditary
discrepancy explain its more tractable nature. As one classical
illustration to this, we note that the hypergraphs with hereditary
discrepancy $1$ are exactly the hypergraphs with totally unimodular
incidence matrices~\cite{gh-h-tum}, and are recognizable by a
polynomial time algorithm~\cite{seymour-tum}.

\paragraph{Our Results and Techniques.} When approximating a function
$f$, we need to provide (nearly matching) upper bounds and lower
bounds on $f$. When we approximate $\mathsf{NP}$-optimization
problems, usually proving either the upper (for maximization problems)
or the lower bound (for minimization problems) is relatively
straightforward: it is given by a combinatorial lower (or upper) bound
or a convex relaxation. The challenge is to design a bound which is
nearly tight. On the other hand, in a max-min problem like hereditary
discrepancy, both upper and lower bounds are challenging to
prove. Nevertheless, a convex relaxation of \emph{discrepancy} still
turns out to be very useful. The relaxation, vector discrepancy, is
derived by relaxing the condition on the coloring $\chi:[n]\rightarrow
\{-1, 1\}$ to the weaker $\chi:[n] \rightarrow \S^{n-1}$, where
$\S^{n-1}$ is the unit sphere in $\R^n$. Then the vector discrepancy
$\vecdisc(\mathcal{H})$ is the minimum over such $\chi$ of $\max_{i =
  1}^m{\|\sum_{j \in H_i}{\chi(i)}\|_2}$. Similarly, the extension
$\vecdisc(A)$ to matrices $A$ is the minimum over $\chi:[n]
\rightarrow \S^{n-1}$ of $\max_{i =
  1}^m{\|\sum_{j=1}^n{A_{ij}\chi(i)}\|_2}$.  These quantities can be
efficiently approximated to within any prescribed accuracy by solving
a semidefinite program. The \emph{hereditary} vector discrepancy
$\hvecdisc(A)$ is defined analogously to hereditary discrepancy as the
maximum vector discrepancy over submatrices. Clearly, $\vecdisc(A) \leq \disc(A)$ for any matrix $A$.
There exist matrices $A$ with $m = O(n)$ for which $\vecdisc(A) = 0$
and $\disc(A) = \Omega(\sqrt{n})$\footnote{This is the case, for
  example, for the matrix which contains three copies of each column
  of a Hadamard matrix.}. Nevertheless, in a recent breakthrough,
Bansal showed that an upper bound on hereditary vector discrepancy is
useful in efficient discrepancy minimization.
\begin{theorem}[\cite{Bansal10}]\label{thm:bansal}
  Let $A$ be an $m$ by $n$ matrix with $\hvecdisc(A) \leq
  \lambda$. Then there exists a randomized polynomial time algorithm
  that computes $x \in \{-1, 1\}^n$ with
  discrepancy at most  $\|Ax\|_\infty \leq O(\log m) \cdot\lambda$.
\end{theorem}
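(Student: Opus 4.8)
The plan is to combine the semidefinite program that certifies $\hvecdisc(A)\le\lambda$ with a random-walk rounding inside the cube $[-1,1]^n$. Recall that $\hvecdisc(A)\le\lambda$ means precisely that for \emph{every} subset $S\subseteq[n]$ there are unit vectors $\{v_j\}_{j\in S}$ with $\|\sum_{j\in S}A_{ij}v_j\|_2\le\lambda$ for every row $i$; to any prescribed accuracy these vectors are the solution of an SDP and are computable in polynomial time. The rounding maintains a point $x_t\in[-1,1]^n$, starting from $x_0=0$, and at each step adds a tiny Gaussian increment ``shaped'' by the current SDP solution. A coordinate is \emph{frozen} as soon as it reaches $\pm1$, and is never moved again; the goal is that after $T$ steps every coordinate is frozen --- so $x_T\in\{-1,1\}^n$ --- while $|\langle A_i,x_T\rangle|\le O(\log m)\cdot\lambda$ for every row $i$.

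Concretely, at step $t$ let $S_t$ be the set of alive coordinates, let $\{v_j\}_{j\in S_t}$ be an SDP solution for the submatrix $A|_{S_t}$, and let $g_t$ be a standard Gaussian in the ambient space of the $v_j$'s (truncated so that all the linear functionals below are bounded; this loses only $1/\poly(mn)$ in probability and is harmless). Take the increment $u_t$ with $u_t(j)=\gamma\langle v_j,g_t\rangle$ for $j\in S_t$ and $u_t(j)=0$ otherwise, for a small step size $\gamma=1/\poly(mn)$, clamping any coordinate that would leave $[-1,1]$. Two facts are then immediate. First, each alive coordinate $x_t(j)$ is a martingale confined to $[-1,1]$ with per-step conditional variance $\approx\gamma^2$ (since $\|v_j\|_2=1$), so it behaves like a lazy $\pm1$ random walk. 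Second, for each row $i$ the process $M^{(i)}_t=\langle A_i,x_t\rangle$ is a martingale whose increment $\langle A_i,u_t\rangle=\gamma\bigl\langle\sum_{j\in S_t}A_{ij}v_j,\,g_t\bigr\rangle$ is (conditionally) Gaussian with variance at most $\gamma^2\lambda^2$. This last point is the only place the hypothesis is used, and it is crucial that it survives the freezing of coordinates: the frozen coordinates are simply dropped from the SDP, and it is precisely the \emph{hereditary} vector discrepancy bound that still bounds the value of the residual SDP on $A|_{S_t}$.

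What remains is bookkeeping. A gambler's-ruin estimate shows that, from any interior position, an alive coordinate reaches $\pm1$ within $O(\gamma^{-2})$ steps in expectation, hence --- by a restart/strong-Markov argument --- within $O(\gamma^{-2}\log n)$ steps with probability $1-1/\poly(n)$; a union bound over the $n$ coordinates then gives $S_T=\emptyset$ for $T=O(\gamma^{-2}\log n)$ with high probability, so $x_T$ is a genuine $\{-1,1\}$-coloring. For the rows, $M^{(i)}_T$ is a sum of mean-zero (sub)Gaussian increments whose variances add up to at most $\gamma^2\lambda^2\cdot T=O(\lambda^2\log n)$, so by Gaussian concentration and a union bound over the $m$ rows, $\|Ax_T\|_\infty=O\bigl(\lambda\sqrt{\log n\log m}\bigr)$, which is $O(\lambda\log m)$ whenever $n$ is polynomially bounded in $m$ (the regime relevant here). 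To shave the extra $\sqrt{\log n}$ in general one additionally \emph{freezes rows}: once $|M^{(i)}_t|$ crosses a threshold $\Delta$, all subsequent increments are restricted to the hyperplane $\langle A_i,u\rangle=0$, which caps $\|Ax_T\|_\infty$ at $\Delta$ plus a single bounded overshoot.

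The crux --- and the main obstacle --- is making this last refinement consistent: the increment $u_t$ must lie in the span of the alive coordinates intersected with the orthogonal complement of $\{A_i : i \text{ is currently frozen}\}$, and this subspace must stay nonempty throughout, the hard regime being the end of the walk when only a few coordinates are alive. This forces $\Delta$ to be chosen so that, via Doob's maximal inequality, $\Pr[\text{row }i\text{ is ever frozen}]\le\E[(M^{(i)}_T)^2]/\Delta^2$ is small enough that only a vanishing fraction of rows is ever frozen, and then a charging argument that ties the freezing of a row to the simultaneous progress of coordinates toward the boundary, so that frozen rows can never outnumber alive coordinates. Balancing the threshold against the length of the walk in this way is the delicate heart of the argument; the remaining ingredients --- solving the SDP to sufficient accuracy in polynomial time, truncating the Gaussians, and the martingale concentration --- are routine.
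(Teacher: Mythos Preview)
The paper does not give its own proof of this theorem: it is quoted as a result of Bansal~\cite{Bansal10} and used as a black box (the only consequence drawn is Corollary~\ref{cor:bansal}). So there is nothing in the paper to compare your argument against.

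That said, your sketch is recognizably Bansal's SDP-guided random walk, and you have the right architecture: solve the SDP on the alive coordinates, take a small Gaussian step shaped by the SDP vectors, freeze coordinates at $\pm 1$, and track each row as a martingale with per-step variance $\le \gamma^2\lambda^2$. You also correctly isolate where the hereditary hypothesis enters (the SDP value stays $\le\lambda$ after coordinates are dropped).

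The gap is exactly the one you flag in your last paragraph, but you understate how much of the work lives there. Your ``bookkeeping'' analysis gives total row variance $O(\lambda^2\log n)$ and hence $\|Ax_T\|_\infty=O(\lambda\sqrt{\log n\,\log m})$; this is \emph{not} $O(\lambda\log m)$ in general (there is no assumption in the theorem that $n=\poly(m)$), so the row-freezing refinement is not optional. And once you freeze rows, the increment must lie in the intersection of $\{u:\supp(u)\subseteq S_t\}$ with the orthogonal complement of the frozen rows, and the SDP solution on $S_t$ need not produce a vector in that subspace; one has to argue that projecting (or otherwise restricting) the SDP direction onto this subspace does not destroy the per-coordinate variance lower bound that drives coordinates to the boundary. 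The ``charging argument that ties the freezing of a row to the simultaneous progress of coordinates'' is the actual content of Bansal's proof, and your sketch asserts its existence without indicating why the accounting closes. So as written this is a correct outline of the approach but not a proof: the $O(\log m)$ bound is not established until the row-freezing analysis is carried out.
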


Theorem~\ref{thm:bansal} implies that the gap between
$\hvecdisc$ and $\herdisc$ is at most logarithmic.
\begin{cor}\label{cor:bansal}
  For any $m \times n$ matrix $A$
  \begin{equation*}
    \hvecdisc(A) \leq \herdisc(A) \leq O(\log m)
    \hvecdisc(A). 
  \end{equation*}
\end{cor}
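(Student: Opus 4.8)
The plan is to prove the two inequalities separately, in each case reducing to a statement about a single column-submatrix $A|_S$ and then taking the maximum over $S \subseteq [n]$.

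First I would dispatch the lower bound $\hvecdisc(A) \le \herdisc(A)$. The only fact needed is that vector discrepancy relaxes discrepancy: for any matrix $B$, $\vecdisc(B) \le \disc(B)$, since a coloring $\chi : [n] \to \{-1,1\}$ can be turned into a map $[n] \to \S^{n-1}$ by sending $\chi(j)$ to $\chi(j)\cdot u$ for an arbitrary fixed unit vector $u$, and this substitution leaves the objective $\|B\chi\|_\infty$ unchanged. Applying this to $B = A|_S$ for every $S$ and taking maxima gives $\hvecdisc(A) = \max_S \vecdisc(A|_S) \le \max_S \disc(A|_S) = \herdisc(A)$.

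For the upper bound $\herdisc(A) \le O(\log m)\, \hvecdisc(A)$, fix $S \subseteq [n]$ and write $\lambda = \hvecdisc(A)$. The key structural observation is that $\hvecdisc$ is monotone under deleting columns: every submatrix of $A|_S$ is also a submatrix of $A$, so $\hvecdisc(A|_S) \le \hvecdisc(A) = \lambda$. Now invoke Theorem~\ref{thm:bansal} with the matrix $A|_S$, which has the same number of rows $m$ as $A$; it guarantees a vector $x \in \{-1,1\}^{|S|}$ with $\|(A|_S)x\|_\infty \le O(\log m)\cdot\lambda$ (the theorem provides a randomized algorithm outputting such an $x$ with positive probability, which is all that is needed for the existential bound on discrepancy). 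Hence $\disc(A|_S) \le O(\log m)\cdot\hvecdisc(A)$, and taking the maximum over $S$ finishes the proof.

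The only place to be a little careful is the logarithmic factor: one must check that the ``$\log m$'' in Theorem~\ref{thm:bansal} counts the rows of the matrix being colored, and that this count does not increase when passing from $A$ to a column-submatrix $A|_S$. Since removing columns does not change the number of rows, this is immediate. I do not expect a genuine obstacle here — the corollary is essentially a bookkeeping combination of Theorem~\ref{thm:bansal} with the trivial monotonicity of $\hvecdisc$ under column restriction and the relaxation inequality $\vecdisc \le \disc$.
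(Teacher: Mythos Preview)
Your proposal is correct and is precisely the intended argument: the paper presents the corollary as an immediate consequence of Theorem~\ref{thm:bansal}, and the derivation you spell out --- the relaxation inequality $\vecdisc \le \disc$ for the left inequality, and monotonicity of $\hvecdisc$ under column restriction together with Theorem~\ref{thm:bansal} applied to each $A|_S$ for the right --- is exactly how one unpacks that implication.
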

While $\vecdisc(A)$ can be approximated to
within any degree of accuracy in polynomial time, it is not clear if
$\hvecdisc(A)$ can be computed efficiently: notice that
hereditary  vector discrepancy is the maximum of the
objective functions of an exponential number of convex minimization
problems. Nevertheless, by using vector discrepancy we remove
one of the two quantifiers over exponentially large sets. 

In this paper we prove the following approximation result for
$\hvecdisc$. 
\begin{theorem}\label{thm:vecdisc-apx}
  There exists a polynomial time algorithm that approximates
  $\hvecdisc(A)$ within a factor of $O(\log m)$ for any $m\times n$
  matrix $A$. Moreover, the algorithm finds a submatrix $A|_S$ of $A$,
  such that $\hvecdisc(A) = O(\log m) \vecdisc(A|_S)$.
\end{theorem}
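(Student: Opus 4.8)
The plan is to show that, up to a factor of $O(\log m)$, $\hvecdisc(A)$ equals the optimal value $\mathcal{E}(A)$ of the \emph{ellipsoid program}: the minimum, over centrally symmetric ellipsoids $E=\{y\in\R^m: y^\top W y\le 1\}$ containing every column $A^1,\dots,A^n$ of $A$, of $\max_{y\in E}\|y\|_\infty=\max_i\sqrt{(W^{-1})_{ii}}$. Writing $M=W^{-1}$ and using a Schur complement, the constraint $(A^j)^\top W A^j\le 1$ becomes $M\succeq A^j(A^j)^\top$, so $\mathcal{E}(A)^2=\min\{\max_i M_{ii}: M\succeq A^j(A^j)^\top\text{ for all }j\}$, a polynomial-size SDP; hence $\mathcal{E}(A)$ and an almost-optimal $E$ are computable in polynomial time. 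Two features get used repeatedly: (i) $\mathcal{E}(A|_S)\le\mathcal{E}(A)$, since an ellipsoid through the columns of $A$ passes through those of $A|_S$; and (ii) with a Cholesky factor $M=GG^\top$ one has $A_{ij}=\langle g_i,u_j\rangle$, where $g_i$ is the $i$-th row of $G$ (so $\|g_i\|_2^2=M_{ii}\le\mathcal{E}(A)^2$) and $u_j=G^{-1}A^j$ (so $\|u_j\|_2\le 1$, as $A^j\in E$).

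\emph{Upper bound $\hvecdisc(A)\le\mathcal{E}(A)$.} By (i) it suffices to prove $\vecdisc(B)\le\mathcal{E}(B)$ for every matrix $B$ and apply it to $B=A|_S$. Take the optimal $E$ for $B$ and the factorization $B_{ij}=\langle g_i,u_j\rangle$ of (ii). If we can find \emph{unit} vectors $v_1,\dots,v_n$ (in any dimension) with $\bigl\|\sum_j v_j u_j^\top\bigr\|_{\mathrm{op}}\le 1$, then the coloring $\chi(j)=v_j$ satisfies, for every row $i$, $\bigl\|\sum_j B_{ij}\chi(j)\bigr\|_2=\bigl\|\bigl(\sum_j v_j u_j^\top\bigr)g_i\bigr\|_2\le\|g_i\|_2\le\mathcal{E}(B)$, so $\vecdisc(B)\le\mathcal{E}(B)$. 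Thus the heart of the upper bound is the geometric lemma: given $u_j$ with $\|u_j\|_2\le 1$ there exist unit $v_j$ with $\bigl\|\sum_j v_j u_j^\top\bigr\|_{\mathrm{op}}\le 1$ — equivalently, a positive semidefinite $X$ with unit diagonal and $UXU^\top\preceq I$, where $U=[u_1\,\cdots\,u_n]$. This is immediate when the $u_j$ are nearly orthogonal (take $X=I$); when many columns of $B$ are nearly parallel one must instead use that such columns nearly cancel, which I would arrange by decomposing $[n]$ along a spectral partition of $\sum_j u_j u_j^\top$ and balancing the high-curvature directions separately.

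\emph{Lower bound $\hvecdisc(A)\ge\mathcal{E}(A)/O(\log m)$.} By Sion minimax, $\vecdisc(A|_S)^2=\max_{\mu\in\Delta_m}\min_X\langle A|_S^\top D_\mu A|_S,\,X\rangle$ over positive semidefinite $X$ with unit diagonal, so it is enough to produce a row-weighting $\mu$ under which the elliptope-minimum of $A|_S^\top D_\mu A|_S$ is large. I would read $S$ and $\mu$ off the \emph{optimal} ellipsoid $E^\star$: strong duality for the SDP above exhibits a probability vector $\mu$ over the rows together with positive semidefinite ``contact'' matrices, supported on columns meeting $\partial E^\star$, certifying that the width is actually attained. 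The task is then to extract, losing only a single logarithm, a coordinate direction $i^\star$ and a subset $S$ of contact columns that stretch $E^\star$ along $e_{i^\star}$ in a well-conditioned, near-orthogonal way, so that $A|_S^\top D_\mu A|_S$ remains bounded below on the elliptope by $(\mathcal{E}(A)/O(\log m))^2$; a restricted-invertibility / column-selection argument in the spirit of Bourgain--Tzafriri applied to the contact points is the natural engine for this step.

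\emph{Putting it together.} The algorithm solves the SDP to obtain $\mathcal{E}(A)$ and $E^\star$, then runs the extraction above to output $S$; since $\vecdisc(A|_S)\le\hvecdisc(A)\le\mathcal{E}(A)\le O(\log m)\,\vecdisc(A|_S)$, this is an $O(\log m)$-approximation and exhibits the promised submatrix. The factor $O(\log m)$ is unavoidable — it is already $\Theta(\log m)$ for the lower-triangular all-ones matrix, where $\hvecdisc$ is $O(1)$ while $\mathcal{E}$ is $\Theta(\log m)$ — so the main obstacle is the lower bound: turning the continuous optimality conditions of the ellipsoid program into a discrete, robustly hard subsystem while paying only one logarithm (and in particular keeping the upper bound at $O(1)$ rather than $O(\log m)$, which is why the near-parallel-column case of the geometric lemma must be handled without loss).
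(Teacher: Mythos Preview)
Your overall architecture matches the paper's exactly: define the ellipsoid quantity $\mathcal{E}(A)$, prove $\hvecdisc(A)\le\mathcal{E}(A)$ via a vector-Koml\'os lemma, prove $\mathcal{E}(A)\le O(\log m)\,\hvecdisc(A)$ via convex duality plus Bourgain--Tzafriri restricted invertibility, and compute $\mathcal{E}(A)$ as a convex program. Your Schur-complement SDP and the Cholesky reduction to the statement ``given $\|u_j\|_2\le 1$ there is $X\succeq 0$ with unit diagonal and $UXU^\top\preceq I$'' are both correct and coincide with the paper's Lemma~\ref{lm:ellips-bound-vector} and Theorem~\ref{thm:komlos-sdp}; your tightness example (lower-triangular all-ones) is also the right one.

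The gaps are in the two hard steps, where your sketches do not obviously go through. For the upper bound, the existence of such an $X$ is precisely Nikolov's result, and its proof (Appendix~\ref{app:vector-komlos}) proceeds via SDP duality together with a majorization argument comparing $\prod_{i\le k}\sigma_i(P)$ to $\prod_{i\le k}q_i$ through Hadamard's inequality---not via any ``spectral partition of $\sum_j u_ju_j^\top$''; it is unclear how partitioning and locally balancing would produce a \emph{global} $X$ with $UXU^\top\preceq I$ rather than $\preceq CI$ for some $C$ growing with the number of buckets. For the lower bound, the paper's dual is not a contact-point description but the closed form $\mathcal{E}(A)^2=\max_{P,Q}\|P^{1/2}AQ^{1/2}\|_{S_1}^2$ over diagonal probability matrices $P,Q$ (Theorem~\ref{thm:nuclear}); the single $\log m$ then comes from a concrete preprocessing step you omit: bucket the singular values of $P^{1/2}AQ^{1/2}$ dyadically, project onto a bucket carrying a $\ge 1/\log m$ fraction of the nuclear norm so the spectrum is flat up to a factor~$2$, and only then apply Bourgain--Tzafriri (Lemma~\ref{lm:bt-lb}) to obtain $S$ with $|S|\,\sigma_{\min}(P^{1/2}A|_S)^2\gtrsim \mathcal{E}(A)^2/(\log m)^2$. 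In particular, selecting a single coordinate direction $i^\star$ is not enough---you must retain the full row-weighting $P$ and invoke the spectral lower bound $\vecdisc(A|_S)\ge\sqrt{|S|}\,\sigma_{\min}(P^{1/2}A|_S)$.
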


Theorem~\ref{thm:vecdisc-apx} follows from a geometric characterization of
hereditary vector discrepancy. We show that, up to a
factor of $O(\log m)$, $\hvecdisc(A)$ is equal to the
smallest value of $\|E\|_\infty$ over all ellipsoids that contain the
columns of $A$. Here, $\|E\|_\infty$ is just the maximum
$\ell_\infty^m$ norm of all points in $E$, or, equivalently, the
maximum width of $E$ in the directions of the standard basis vectors
$e_1, \ldots, e_m$. {\em A priori}, it is not clear how to relate this quantity in either direction to the $\hvecdisc(A)$, as it is not a fractional ``relaxation'' in the traditional sense. It is in fact non-trivial to prove either of the two inequalities relating the geometric quantity to $\hvecdisc(A)$.

Proving that this quantity is an upper bound on hereditary discrepancy relies on a recent result of Nikolov
that upper bounds the vector discrepancy of matrices with columns
bounded in Euclidean norm by $1$~\cite{komlos-sdp}. We need a slight
generalization of Nikolov's result that shows that the vector
discrepancy of such matrices can be bounded by $1$ \emph{in any
  direction}. We then transform linearly the containing ellipsoid $E$ to a
unit ball, so that Nikolov's result applies; because of the transformation,
we need to make sure that in the transformed space the vector
discrepancy is low in a set of directions different from the standard
basis. While, on the face of things, this argument only upper bounds
the vector discrepancy of $A$, it in fact also upper bounds the vector
discrepancy of \emph{any submatrix} as well, because if $E$ contains all
columns of $A$, it also contains all the columns of any submatrix of
$A$. This simple observation is crucial to the success of our arguments.

To show that the smallest value of $\|E\|_\infty$ over all containing
ellipsoids also gives a lower bound on hereditary vector discrepancy,
we analyze the \emph{convex dual} of the problem of finding containing
ellipsoids of small width and show that we can transform dual
certificates for this problem to dual certificates for vector
discrepancy of some submatrix of $A$. The dual of the problem of
minimizing $\|E\|_\infty$ for a matrix $A$ is a problem of maximizing
the \emph{nuclear norm} (i.e. the sum of singular values) over
re-weightings of the columns and rows of $A$. To get dual certificates for
vector discrepancy for some submatrix, we need to be able to extract a
submatrix with a large least singular value from a matrix of large
nuclear norm. We accomplish this using the \emph{restricted
  invertibility principle} of Bourgain and Tzafriri~\cite{bour-tza}: a
powerful theorem from functional analysis which states, roughly, that
any submatrix with many approximately equal singular values contains a
large well-conditioned submatrix. Using a constructive proof of the
theorem by Spielman and Srivastava~\cite{bt-constructive}, we can also find the well-conditioned submatrix in deterministic polynomial time;
this gives us a submatrix of $A$ on which hereditary vector
discrepancy is approximately maximized.

Theorem~\ref{thm:vecdisc-apx} immediately implies a $O(\log^2 m)$
approximation of $\herdisc$ via Bansal's theorem. However, we can
improve this bound to an $O(\log^{3/2} m)$ approximation.
\begin{theorem}\label{thm:herdisc-apx}
  There exists a polynomial time algorithm that approximates
  $\herdisc(A)$ within a factor of $O(\log^{3/2} m)$ for any $m \times
  n$ matrix $A$. Moreover, the algorithm finds a submatrix $A|_S$ of
  $A$, such that $\herdisc(A) \leq O(\log^{3/2} m) \vecdisc(A|_S)$. 
\end{theorem}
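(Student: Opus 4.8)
The plan is to use the same geometric quantity that underlies Theorem~\ref{thm:vecdisc-apx}, but to round it to an integral coloring with only a $\sqrt{\log m}$ loss instead of the $\log m$ loss one would incur by invoking Theorem~\ref{thm:bansal}. Write $\gamma(A)$ for the optimal value of the convex program that minimizes $\|E\|_\infty$ over origin-symmetric ellipsoids $E\supseteq\{A_1,\dots,A_n\}$ containing the columns of $A$; in the natural parametrization $E=\{M^{1/2}y:\|y\|_2\le 1\}$ with $M\succeq 0$, the containment constraints $A_j^\top M^{-1}A_j\le 1$ are SDP-representable by a Schur complement, so $\gamma(A)$ is computable to any prescribed accuracy in polynomial time. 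The algorithm simply computes $\gamma(A)$ and runs the algorithm of Theorem~\ref{thm:vecdisc-apx} to obtain a submatrix $A|_S$; it outputs $\gamma(A)$ as the estimate of $\herdisc(A)$ and $S$ as the certifying submatrix. Two inequalities remain to be proved: a lower bound $\herdisc(A)\ge\Omega(\gamma(A)/\log m)$ witnessed by $A|_S$, and an upper bound $\herdisc(A)\le O(\sqrt{\log m})\,\gamma(A)$.

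The lower bound is already implicit in the proof of Theorem~\ref{thm:vecdisc-apx}: the restricted-invertibility argument applied to the nuclear-norm dual of the ellipsoid program produces exactly the submatrix $A|_S$ with $\gamma(A)\le O(\log m)\,\vecdisc(A|_S)$, and $\vecdisc(A|_S)\le\hvecdisc(A)\le\herdisc(A)$. Hence $\herdisc(A)\ge\Omega(\gamma(A)/\log m)$, and $\vecdisc(A|_S)$ — itself computable by an SDP — serves as the certificate of this lower bound.

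The upper bound is the crux and the source of the improvement over the naive $O(\log^2 m)$ bound: rather than running Theorem~\ref{thm:bansal} on a matrix of small hereditary vector discrepancy, I would round directly inside the small-width ellipsoid using Banaszczyk's vector balancing theorem. Take $E=\{M^{1/2}y:\|y\|_2\le 1\}$ to be an (approximately) optimal ellipsoid, so that every column satisfies $A_j=M^{1/2}v_j$ with $\|v_j\|_2\le 1$ and every row of $M^{1/2}$ has Euclidean norm $\sqrt{M_{ii}}\le\gamma(A)$. For a threshold $t$, the body $K_t=\{w:\|M^{1/2}w\|_\infty\le t\}$ is a symmetric convex polytope (an intersection of $2m$ slabs), and for $g\sim N(0,I_m)$ each coordinate $(M^{1/2}g)_i$ is a centered Gaussian with variance $M_{ii}\le\gamma(A)^2$, so a union bound over the $m$ rows gives $\Pr[g\notin K_t]\le 2m\exp(-t^2/2\gamma(A)^2)\le 1/2$ as soon as $t=\Theta(\gamma(A)\sqrt{\log m})$. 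Banaszczyk's theorem then yields $x\in\{-1,1\}^n$ with $\sum_j x_j v_j\in O(1)\cdot K_t$, i.e. $\|Ax\|_\infty=\|M^{1/2}\sum_j x_j v_j\|_\infty=O(\gamma(A)\sqrt{\log m})$. The point is that since $E$ contains every column of $A$ it contains every column of every submatrix $A|_S$, so the same $M$, the same vectors $\{v_j\}_{j\in S}$, and the same body $K_t$ work for each restriction; hence $\disc(A|_S)=O(\gamma(A)\sqrt{\log m})$ for all $S$, that is, $\herdisc(A)=O(\sqrt{\log m})\,\gamma(A)$. Note that Banaszczyk's theorem is used only to establish this inequality, so its non-constructive nature is harmless: the algorithm never needs to produce the coloring.

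Combining the two bounds, $\Omega(\gamma(A)/\log m)\le\herdisc(A)\le O(\sqrt{\log m})\,\gamma(A)$, so the computed number $\gamma(A)$ approximates $\herdisc(A)$ within $O(\log^{3/2}m)$; moreover $\herdisc(A)\le O(\sqrt{\log m})\,\gamma(A)\le O(\log^{3/2}m)\,\vecdisc(A|_S)$, which gives the claimed certifying submatrix. I expect the main obstacle to be the upper-bound step: one has to set up the Banaszczyk application so that a single convex body $K_t$ simultaneously controls all $m$ standard-basis directions and is at the same time oblivious to which submatrix one restricts to. This is precisely the situation in which the ``columns contained in $E$'' formulation pays off, and it is what lets us replace one of the two logarithmic factors of the naive bound by a square root.
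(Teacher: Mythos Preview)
Your proposal is correct and follows essentially the same approach as the paper: compute the minimum-$\ell_\infty$-width ellipsoid $E$ containing the columns of $A$, use the restricted-invertibility/nuclear-norm-dual argument (Theorem~\ref{thm:main-lb}) for the lower bound $\gamma(A)=O(\log m)\vecdisc(A|_S)$, and apply Banaszczyk's theorem after the linear change of variables $w=M^{-1/2}y$ to get the $O(\sqrt{\log m})$ upper bound (Lemma~\ref{lm:ellips-bound-bana} with $v_i=e_i$, Theorem~\ref{thm:main-ub}). Your body $K_t=\{w:\|M^{1/2}w\|_\infty\le t\}$ is exactly the paper's $K=F^{-1}P$ specialized to the standard basis directions, and your Gaussian-tail union bound is the same computation the paper carries out via the polar body; the only technicality you elide is ensuring $M$ is full rank so that the $v_j$ exist, which the paper handles by a preliminary perturbation of $A$.
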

To prove Theorem~\ref{thm:herdisc-apx}, we lower bound hereditary vector
discrepancy as before, in order to lower bound hereditary discrepancy. However,
for the upper bound, rather than upper bounding vector discrepancy in
terms of $\|E\|_\infty$ for a containing ellipsoid, and then upper
bounding discrepancy in terms of vector discrepancy, we directly upper
bound discrepancy in terms of $\|E\|_\infty$. For this purpose, we use
another discrepancy bound -- this time a theorem due to
Banaszczyk~\cite{bana} that shows that for any convex body $K$ of large Gaussian
volume, and a matrix $A$ with columns of at most unit Euclidean norm,
there exists a $x \in \{-1, 1\}^n$ such that $Ax \in CK$ for a
constant $C$. We use this theorem analogously to the way we used
Nikolov's theorem: we linearly transform $E$ to the unit ball, and we specify a
body $K$ such that if some $\pm 1$ combination of the columns of $A$
is in $K$ after the transformation, then in the preimage
the combination is in an infinity ball scaled by $O(\sqrt{\log m})$. 

After a preliminary version of this paper was made available,
Matou\v{s}ek~\cite{matousek-says} has shown that our analysis of both
the upper and the lower bound on $\herdisc(A)$ in terms of the minimum
of $\|E\|_\infty$ over containing ellipsoids $E$ is tight. He also used our characterization of $\herdisc(A)$ in terms of the minimum
of $\|E\|_\infty$ and our analysis of the dual to give new proofs of
classical and new results in discrepancy theory.

\paragraph{Comparison with Related Works.}

Lov\'{a}sz, Spencer and Vesztergombi~\cite{LSV} defined a determinant
based lower bound on the hereditary discrepancy of a
matrix. Matou\v{s}ek~\cite{Matousek11} showed that this lower bound is
tight up to $O(\log^{3/2} m)$. These results did not immediately yield
an approximation algorithm for hereditary discrepancy, as the determinant lower
bound is a maximum over exponentially many quantities and not known to
be efficiently computable. 

Nikolov, Talwar and Zhang~\cite{NTZ} recently studied hereditary
discrepancy as a tool for designing near optimal differentially
private mechanisms for linear queries, and as a by-product, derived an
$\tilde{O}(\log^3 n)$-approximation algorithm for hereditary
discrepancy, where the $\tilde{O}$ notation hides sub-logarithmic
factors. Small width containing ellipsoids were implicit in their
work. The current paper is the first that explicitly considers this
natural geometric object in the context of discrepancy. While the
proof of the upper bound on discrepancy in~\cite{NTZ} was via a
connection between discrepancy and differential privacy due to Nikolov
and Muthukrishnan~\cite{halfspaces}, here we give a \emph{tight} and
\emph{direct} argument using results of Nikolov~\cite{komlos-sdp} and
Banaszczyk~\cite{bana} on the Koml\'{o}s problem. The arguments via
differential privacy cannot give the tight relationship between the
minimum width of a containing ellipsoid and hereditary discrepancy:
they necessarily lose a logarithmic factor, because the relationship
between discrepancy and privacy is itself not tight. Moreover, our
arguments are simpler and more transparent. The proof of our lower
bound is also more natural: we relate the duals of the two convex
optimization problems under consideration, i.e.~the problem of
minimizing vector discrepancy, and the problem of minimizing the width
of a containing ellipsoid. Via this approach we arrive at a new
discrepancy lower bound, which is at least as strong as the
determinant lower bound (up to a logarithmic factor), is tight with
respect to hereditary discrepancy up to the same asymptotic factor of
$O(\log^{3/2} m)$, and is efficiently computable. We believe our lower
bound will have future applications in discrepancy theory.

A separate but related line of recent works~\cite{Bansal10,lovettmeka,Rothvoss14-giann} gives constructive versions of existence proofs in discrepancy theory. 

%We refer the reader to texts by Chazelle~\cite{Chazelle} and Matou\v{s}ek~\cite{Matousek} and the chapter by Beck and S\'{o}s~\cite{beck-sos} for an introduction to discrepancy theory.% Bansal~\cite{Bansal10} showed that a semidefinite relaxation can be used to design a pseudo-approximation algorithm for hereditary discrepancy. Matou\v{s}ek~\cite{Matousek11} showed that the determinant based lower bound of Lov\'{a}sz, Spencer and Vesztergombi~\cite{lovasz1986discrepancy} is tight up to polylogarithmic factors. Larsen~\cite{disc-larsen} showed applications of hereditary discrepancy to data structure lower bounds, and Chandrasekaran and  Vempala~\cite{disc-ip} recently showed applications of hereditary discrepancy to problems in integer programming.

\section{Preliminaries}

We start by introducing some basic notation.

For a $m \times n$ matrix $A$ and a set $S \subseteq [n]$, we denote
by $A|_S$ the submatrix of $A$ consisting of those columns of $A$
indexed by elements of $S$. $\mathcal{P}_k$ is the set of orthogonal
projection matrices onto $k$-dimensional subspaces of $\R^m$.  We use
$\range(A)$ for the range, i.e. the span of the columns, of $A$.  By
$\sigma_{\min}(A)$ and $\sigma_{\max}(A)$ we denote, respectively, the
smallest and largest singular value of $A$. I.e., $\sigma_{\min}(A) =
\min_{x: \|x\|_2 = 1}{\|Ax\|_2}$ and $\sigma_{\max}(A) = \max_{x:
  \|x\|_2 = 1}{\|Ax\|_2}$. In general, we use $\sigma_i$ for the
$i$-th largest singular value of $A$.

By $X \succeq 0$ ($X\succ 0$) we denote that $X$ is a positive
semidefinite (resp.~positive definite) matrix,
and by $X \preceq Y$ that $Y - X \succeq 0$. 

\confoption{}{Recall that for a block
matrix \[X = \left(\begin{array}{cc}A &B\\B^T &C\end{array}\right),\]
the \emph{Schur complement} of an invertible block $C$ in $X$ is $A -
B^TC^{-1}B$. When $C \succ 0$, $X \succeq 0$ if and only if $A -
B^TC^{-1}B \succeq 0$.}

For a positive semidefinite (PSD) matrix $X \succeq 0$, we denote by
$X^{1/2}$ the principal square root of $X$, i.e. the matrix $Y \succeq
0$ such that $Y^2 = X$.

\subsection{Matrix Norms and Restricted Invertibility}

\junk{For a symmetric matrix $M$, we denote by $\lambda_i(M)$ the
$i$-th largest eigenvalue of $M$. 

We recall the variational (minimax)
characterization of eigenvalues:
\begin{equation*}
  \lambda_i(M) = \max_{\mathcal{V}: \dim \mathcal{V} = i}\min_{y \in
    \mathcal{V}:\|x\|_2 = 1}{x^TMx}.
\end{equation*}

By $\sigma_{\min}(A)$ and $\sigma_{\max}(A)$ we denote, respectively,
the smallest and largest singular value of $A$. I.e.,
$\sigma_{\min}(A) = \min_{x: \|x\|_2 = 1}{\|Ax\|_2}$ and
$\sigma_{\max}(A) = \max_{x: \|x\|_2 = 1}{\|Ax\|_2}$. In general,
$\sigma_i(A) = \lambda_i(A^TA) = \lambda_i(AA^T)$. 
By the spectral theorem, any real $m \times n$ matrix $A$ of
rank $r$ can be factorized as $A = U\Sigma V^T$, where $U$ and $V$ are
respectively $m \times r$ and $n \times r$ unitary matrices
(i.e. $U^TU = V^TV = I_r$) and $\Sigma$ is a $r \times r$ diagonal matrix with
$\Sigma_{ii} = \sigma_i > 0$. }

The Schatten 1-norm of a matrix $A$,
also known as the trace norm or the nuclear norm, is equal to
$\|A\|_{S_1} = \sum_{i}{\sigma_i(A)} = \tr((AA^T)^{1/2})$.

For a matrix $A$, we denote by $\|A\|_2=\sigma_{\max}(A)$ the spectral
norm of $A$ and $\|A\|_{HS} = \sqrt{\sum_i \sigma_i^2(A)} =
\sqrt{\sum_{i,j} a_{i,j}^2}$ the Hilbert-Schmidt (or Frobenius) norm
of $A$. We use $\|A\|_{1 \rightarrow 2}$ for the maximum Euclidean length
of the columns of the matrix $A = (a_i)_{i = 1}^n$, i.e.~$\|A\|_{1 \rightarrow 2} = \max_{x:
  \|x\|_1 = 1}{\|Ax\|_2} = \max_{i \in [n]}{\|A_i\|_2}$.

A matrix $A$ trivially contains an invertible submatrix of $k$ columns
as long as $k \leq \rank(A)$. An important result of Bourgain and
Tzafriri~\cite{bour-tza} (later strengthened by
Vershynin~\cite{vershynin}, and Spielman and
Srivastava~\cite{bt-constructive}) shows that when $k$ is strictly
less than the robust rank $\|A\|_{HS}^2/\|A\|_2^2$ of $A$, we can find
$k$ columns of $A$ that form a \emph{well-invertible} submatrix. This
result is usually called the \emph{restricted invertibility
  principle}. Next we state a weighted version of it, which can be
proved by slightly modifying the proof of Spielman and
Srivastava~\cite{bt-constructive}. Rather than describe the
modification, we give a reduction of the weighted version to the
standard statement in Appendix~\ref{app:weighted-rip}.

\begin{theorem}\label{thm:bt}
  Let $\epsilon > 0$, let $A$ be an $m$ by $n$ real matrix, and let
  $Q$ be a diagonal matrix such that $Q \succeq 0$ and $\tr(Q) =
  1$. For any integer $k$  such that $k \leq \epsilon^2
  \frac{\|AQ^{1/2}\|_{HS}^2}{\|AQ^{1/2}\|_2^2}$ there exists a subset $S
  \subseteq [n]$ of size $S = k$ such that 
  %\begin{equation*}
  $\sigma_{\min}(A|_S)^2 \geq (1-\epsilon)^2\|AQ^{1/2}\|_{HS}^2$. 
%\end{equation*}
  Moreover, $S$ can be computed in deterministic polynomial time. 
\end{theorem}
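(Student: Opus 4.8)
The plan is to derive the weighted statement from the \emph{unweighted} restricted invertibility principle of Spielman and Srivastava~\cite{bt-constructive} by a column-replication reduction, avoiding any new barrier-function argument. I will use their principle in the following matrix form: for any real $m \times n'$ matrix $B$ and any integer $k \leq \epsilon^2 \|B\|_{HS}^2/\|B\|_2^2$ there is a set $T \subseteq [n']$ with $|T| = k$ and $\sigma_{\min}(B|_T)^2 \geq (1-\epsilon)^2 \|B\|_{HS}^2/n'$, and $T$ can be found in deterministic polynomial time. (We may assume $\epsilon \in (0,1)$ and, to avoid a vacuous statement, $\|AQ^{1/2}\|_{HS} > 0$.)

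Next I would reduce to the case of rational weights $q_i = Q_{ii}$: there are only finitely many candidate sets $S$, and both the hypothesis on $k$ and the conclusion vary continuously with $Q$ over the simplex, so a pigeonhole-and-limit argument passes from a sequence of rational weight vectors converging to $Q$ to the general case (perturbing $\epsilon$ by a vanishing amount to keep the inequality on $k$ valid along the sequence). So write $q_i = p_i/N$ with $p_i \in \Z_{\geq 0}$ and $\sum_i p_i = N$, and let $B$ be the $m \times N$ matrix consisting of $p_i$ copies of the rescaled column $A_i/\sqrt{N}$ for each $i$ (indices with $p_i = 0$ contributing no columns). A direct computation gives $\|B\|_{HS}^2 = \sum_i (p_i/N)\|A_i\|_2^2 = \tr(Q A^T A) = \|AQ^{1/2}\|_{HS}^2$ and $BB^T = \tfrac1N \sum_i p_i A_i A_i^T = A Q A^T = (AQ^{1/2})(AQ^{1/2})^T$, so $\|B\|_2 = \|AQ^{1/2}\|_2$. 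Hence the hypothesis $k \leq \epsilon^2 \|AQ^{1/2}\|_{HS}^2/\|AQ^{1/2}\|_2^2$ is precisely $k \leq \epsilon^2 \|B\|_{HS}^2/\|B\|_2^2$, and the unweighted principle produces $T \subseteq [N]$ with $|T| = k$ and $\sigma_{\min}(B|_T)^2 \geq (1-\epsilon)^2 \|B\|_{HS}^2/N = (1-\epsilon)^2 \|AQ^{1/2}\|_{HS}^2/N > 0$.

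The crucial point is that $\sigma_{\min}(B|_T) > 0$ forces the $k$ chosen columns of $B$ to be linearly independent, hence pairwise distinct; in particular $T$ cannot contain two copies of the same vector $A_i/\sqrt N$, so $T$ corresponds to a set $S$ of $k$ \emph{distinct} indices in $[n]$ with $B|_T = \tfrac{1}{\sqrt N} A|_S$. Therefore $\sigma_{\min}(A|_S)^2 = N\,\sigma_{\min}(B|_T)^2 \geq (1-\epsilon)^2 \|AQ^{1/2}\|_{HS}^2$, which is the claimed bound. I expect the running time to be the one point requiring real care: applying the unweighted theorem to $B$ verbatim costs time polynomial in $N = \sum_i p_i$, which need not be polynomial in the bit-length of $Q$. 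The remedy is to observe that the Spielman--Srivastava greedy procedure interacts with the columns of $B$ only through the at most $n$ distinct vectors $A_i/\sqrt N$ and their integer multiplicities $p_i$ (which enter the barrier potentials as weights), so a ``weight-aware'' run of their algorithm directly on $A$ and the $p_i$ takes time polynomial in $m$, $n$, and $\log N$, i.e.\ in the input size; alternatively, one first rescales $A$ so that $\|AQ^{1/2}\|_2 = 1$ and rounds each $q_i$ to a multiple of $1/\poly(n)$, charging the perturbation of $\|AQ^{1/2}\|_{HS}$ and $\|AQ^{1/2}\|_2$ to a vanishing loss in $\epsilon$. Everything else --- the two norm identities and the linear-independence observation --- is routine.
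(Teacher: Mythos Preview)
Your proposal is correct and follows essentially the same route as the paper's Appendix~\ref{app:weighted-rip}: perturb $Q$ to be rational, replicate each column $a_j$ according to its (integer-scaled) weight to form a matrix whose Hilbert--Schmidt and spectral norms match those of $AQ^{1/2}$, apply the unweighted Spielman--Srivastava theorem, and use the positivity of $\sigma_{\min}$ on the chosen columns to conclude they come from distinct indices of $A$. Your treatment of the running-time subtlety (either rounding $q_i$ to multiples of $1/\poly(n)$ or running a weight-aware version of the barrier algorithm) is in fact more explicit than the paper's appendix, which glosses over the size of the common denominator and defers the truly polynomial-time claim to the remark that one can ``slightly modify'' the Spielman--Srivastava proof.
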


\subsection{Geometry}

%\subsubsection{Basics}

Let $\conv\{a_1, \ldots a_n\}$ be the convex hull of the vectors $
a_1, \ldots, a_n$. 

A \emph{convex body} is a convex compact subset of $\R^m$. For a
convex body $K \subseteq \R^m$, the \emph{polar body} $K^\circ$ is
defined by $K^\circ = \{y: \langle y, x \rangle \leq 1~\forall x \in
K\}$. A basic fact about polar bodies  is that for any
two convex bodies $K$ and $L$, $K \subseteq L \Leftrightarrow L^\circ
\subseteq K^\circ$. Moreover, a symmetric convex body $K$ and its polar body are
dual to each other, in the sense that $(K^\circ)^\circ = K$.

A convex body $K$ is \emph{(centrally) symmetric} if $-K = K$. The
\emph{Minkowski norm} $\|x\|_K$ induced by a symmetric convex body $K$
is defined as $\|x\|_K \triangleq \min\{r \in \R: x \in rK\}$. The
Minkowski norm induced by the polar body $K^\circ$ of $K$ is the
\emph{dual norm} of $\|x\|_K$ and also has the form $\|y\|_{K^\circ} =
\max_{x \in K}{\langle x, y\rangle}$. It follows that we can also
write $\|x\|_K$ as $\|x\|_K = \max_{y \in K^\circ}{\langle x, y
  \rangle}$. For a vector $y$ of unit Euclidean length,
$\|y\|_{K^\circ}$ is the \emph{width} of $K$ in the direction of $y$,
i.e.~half the Euclidean distance between the two supporting hyperplanes of $K$
orthogonal to $y$. For symmetric body $K$, we denote by $\|K\| = \max_{x \in  K}{\|x\|}$ the diameter of $K$ under the norm $\|\cdot\|$.

Of special interest are the $\ell_p^m$ norms, defined for any $p \geq
1$ and any $x \in \R^m$ by $\|x\|_p = \left(\sum_{i =
    1}^m{|x|^p}\right)^{1/p}$. The $\ell_\infty^m$ norm is defined for
as $\|x\|_\infty = \max_{i = 1}^m{|x_i|}$. The norms
$\ell_p^m$ and $\ell_q^m$ are dual if and only if $\frac{1}{p} +
\frac{1}{q} = 1$, and $\ell_1^m$ is dual to $\ell_\infty^m$. We denote
the unit ball of the $\ell_p^m$ norm by $B_p^m = \{x: \|x\|_p \leq
1\}$. As with the unit ball of any norm, $B_p^m$ is convex and
centrally symmetric for $p \in [1, \infty]$.

An \emph{ellipsoid} in $\R^m$ is the image of the ball $B_2^m$ under
an affine map. All ellipsoids we consider are symmetric, and
therefore, are equal to an image $F B_2^m$ of the ball $B_2^m$ under a
linear map $F$. A full dimensional ellipsoid $E = FB_2^d$ can be
equivalently defined as $E = \{x: x^T(FF^T)^{-1}x \leq 1\}$.  The
polar body of a symmetric ellipsoid $E = F B_2^d$ is the ellipsoid
$E^\circ = \{x: x^TFF^Tx \leq 1\}$. It follows that for $E = FB_2^m$
and for any $x$, $\|x\|_E = \sqrt{x^T(FF^T)^{-1}x}$ and for any $y$,
$\|y\|_{E^\circ} = \sqrt{y^T(FF^T)y}$.

\confoption{}{\subsection{Convex Duality}

Assume we are given the following optimization problem:
\begin{align}
  &\text{Minimize } f_0(x)\label{eq:general-obj}\\
  &\text{s.t.}\notag\\
  &\forall 1\leq i \leq m: f_i(x) \leq 0.\label{eq:general-constr}
\end{align}
The Lagrange dual function associated with
\eqref{eq:general-obj}--\eqref{eq:general-constr} is defined as 
%\begin{equation*}
  $g(y) = \inf_x f_0(x) + \sum_{i = 1}^m{y_if_i(x)}$,
%\end{equation*}
  where the infimum is over the intersection of the domains of
  $f_1,\ldots,\ldots f_m$, and $y \in \R^m$, $y \geq 0$. Since $g(y)$
  is the infimum of affine functions, it is a concave
  function. Moreover, $g$ is upper semi-continuous, and therefore
  continuous over the convex set $\{y: g(y) > -\infty\}$. 

  For any $x$ which is feasible for
  \eqref{eq:general-obj}--\eqref{eq:general-constr}, and any $y \geq
  0$, $g(y) \leq f_0(x)$. This fact is known as \emph{weak
    duality}. The \emph{Lagrange dual problem} is defined as
\begin{align}
  &\text{Maximize } g(y)
  \text{ s.t. }
  y \geq 0.\label{eq:L-dual}
\end{align}
\emph{Strong duality} holds when the optimal value of
\eqref{eq:L-dual} equals  the optimal
value of \eqref{eq:general-obj}--\eqref{eq:general-constr}. Slater's
condition is a commonly used sufficient condition for strong
duality. We state it next.

\begin{theorem}[Slater's Condition]\label{thm:slater}
  Assume $f_0, \ldots, f_m$ in the problem
  \eqref{eq:general-obj}--\eqref{eq:general-constr} are convex
  functions over their respective domains, and for some $k \geq 0$,
  $f_1, \ldots, f_k$ are affine functions. Let there be a point $x$ in
  the relative interior of the domains of $f_0, \ldots, f_m$, so that
  $f_i(x) \leq 0$ for $1 \leq i \leq k$ and $f_j(x) < 0$ for $k+1 \leq
  j \leq m$. Then the minimum of
  \eqref{eq:general-obj}--\eqref{eq:general-constr} equals the maximum
  of \eqref{eq:L-dual}, and the maximum of
  \eqref{eq:L-dual} is achieved if it is
  finite.
\end{theorem}

For more information on convex programming and duality, we refer the
reader to the book by Boyd and Vandenberghe.
}

\subsection{Hereditary Discrepancy and Relaxations}

\confoption{Hereditary discrepancy and (hereditary) vector discrepancy
  are defined as in the Introduction. We note that $\vecdisc(A)$ can
  also be equivalently characterized as the minimum of $\max_{i =
    1}^m{\sqrt{e_i^T(AXA^T)e_i}}$ over matrices $X \succeq 0$, where
  $e_i$ is the $i$-th standard basis vector. }{For a
  $m \times n$ matrix $A$, discrepancy and hereditary discrepancy are
  defined as
\begin{align*}
  &\disc(A) \triangleq \min_{x \in \{\pm 1\}^n}{\|Ax\|_\infty}
  &\herdisc(A) \triangleq \max_{S \subseteq [n]}{\disc(A|_S)}.
\end{align*}

Vector discrepancy is a convex relaxation of discrepancy in which one
can assign arbitrary unit vectors rather than $\pm 1$ to the columns
of $A$. Formally, let $\S^{n-1}$ be the unit sphere in $\R^n$, and
define
\begin{equation*}
  \vecdisc(A) \triangleq \min_{u_1, \ldots, u_n \in \S^{n-1}} \max_{i = 1}^m
  \left\| \sum_{j = 1}^n{A_{ij} u_j} \right\|_2
\end{equation*}

From a computational complexity perspective, the important property of
vector discrepancy $\vecdisc(A)$ is that it can be approximated to within an
arbitrarily small additive constant $\epsilon$ in time polynomial in
$\log \frac{1}{\epsilon}$ by (approximately) solving a semidefinite
program. }The following lower bound on vector discrepancy follows by weak convex
duality for semidefinite programming (see~\cite{Matousek11}). 
\begin{lemma}\label{lm:speclb}
  For any $m \times n$ matrix $A$, and any $m\times m$ diagonal matrix
  $P \geq 0$ with $\tr(P) = 1$, we have
  %\begin{equation*}
    $\vecdisc(A) \geq \sqrt{n} \sigma_{\min}(P^{1/2}A)$.
  %\end{equation*}
\end{lemma}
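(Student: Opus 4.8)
The plan is to argue directly from the Gram-matrix (semidefinite) reformulation of vector discrepancy; exhibiting an explicit feasible dual solution built from $P$ is exactly the weak-duality argument referenced just before the lemma. First I would replace the vector assignment by a positive semidefinite matrix. Given unit vectors $u_1,\dots,u_n\in\S^{n-1}$, let $U$ be the $n\times n$ matrix with columns $u_1,\dots,u_n$ and set $X=U^TU$. Then $X\succeq 0$ and $X_{jj}=\|u_j\|_2^2=1$ for every $j$, so $\tr(X)=n$. Moreover $\sum_{j=1}^n A_{ij}u_j = U A^Te_i$, hence
\begin{equation*}
  \Bigl\|\sum_{j=1}^n A_{ij}u_j\Bigr\|_2^2 = e_i^T A U^TU A^T e_i = e_i^T(AXA^T)e_i = (AXA^T)_{ii}.
\end{equation*}
So it suffices to prove $\max_{i}(AXA^T)_{ii}\ge n\,\sigma_{\min}(P^{1/2}A)^2$ for every $X\succeq 0$ with unit diagonal.

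Fix such an $X$. Since $P$ is diagonal, $P\succeq 0$, and $\tr(P)=\sum_i P_{ii}=1$, the quantity $\sum_i P_{ii}(AXA^T)_{ii}$ is a convex combination of the diagonal entries $(AXA^T)_{ii}$, and therefore
\begin{equation*}
  \max_i (AXA^T)_{ii} \;\ge\; \sum_{i=1}^m P_{ii}(AXA^T)_{ii} \;=\; \tr\bigl(P\,AXA^T\bigr) \;=\; \tr\bigl(A^TPA\cdot X\bigr),
\end{equation*}
where the last two steps use that $P$ is diagonal (so $\sum_i P_{ii}M_{ii}=\tr(PM)$) together with cyclicity of the trace.

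It remains to bound $\tr(A^TPA\cdot X)$ from below. Writing $B=A^TPA=(P^{1/2}A)^T(P^{1/2}A)\succeq 0$ and decomposing $X=\sum_k \mu_k v_kv_k^T$ with $\mu_k\ge 0$, each term satisfies $v_k^TBv_k=\|P^{1/2}Av_k\|_2^2\ge \sigma_{\min}(P^{1/2}A)^2\|v_k\|_2^2$ by definition of the smallest singular value, so
\begin{equation*}
  \tr(BX)=\sum_k \mu_k\, v_k^TBv_k \;\ge\; \sigma_{\min}(P^{1/2}A)^2\sum_k \mu_k\|v_k\|_2^2 \;=\; \sigma_{\min}(P^{1/2}A)^2\,\tr(X) \;=\; n\,\sigma_{\min}(P^{1/2}A)^2.
\end{equation*}
Chaining the three displays gives $\max_i(AXA^T)_{ii}\ge n\,\sigma_{\min}(P^{1/2}A)^2$ for every admissible $X$; since $\vecdisc(A)^2$ is the minimum of the left side over all such $X$ (equivalently, over all choices of $u_1,\dots,u_n$), we conclude $\vecdisc(A)^2\ge n\,\sigma_{\min}(P^{1/2}A)^2$, which is the claimed bound. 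I do not expect a real obstacle here: the only points needing a little care are the equivalence between a unit-vector assignment and a PSD matrix $X$ with unit diagonal, and the elementary inequality $\tr(BX)\ge\sigma_{\min}(P^{1/2}A)^2\,\tr(X)$ for $X\succeq 0$; the rest is bookkeeping with traces.
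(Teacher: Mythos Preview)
Your argument is correct and is exactly the weak-duality computation the paper alludes to (``follows by weak convex duality for semidefinite programming''): you pass to the Gram-matrix formulation $\vecdisc(A)^2=\min_X\max_i(AXA^T)_{ii}$, use $P$ as dual multipliers to replace the max by the weighted average $\tr(PAXA^T)=\tr(A^TPA\,X)$, and then invoke $A^TPA\succeq\sigma_{\min}(P^{1/2}A)^2 I$ together with $\tr(X)=n$. This is precisely the intended one-paragraph proof, so there is nothing to add.
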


We define a \emph{spectral lower bound} based on
Lemma~\ref{lm:speclb}.
\begin{equation*}
  \specLB(A) \triangleq \max_{k = 1}^n \max_{S \subseteq [n]: |S| = k}
  \max_P \sqrt{k} \sigma_{\min}(P^{1/2}A|_S),
\end{equation*}
where $P$ ranges over positive (i.e.~$P\succeq 0$) $m \times m$
diagonal matrices satisfying $\tr(P) = 1$. Lemma~\ref{lm:speclb}
implies immediately that $\hvecdisc(A) \geq \specLB(A)$.

Notice that it is not clear whether $\specLB(A)$ can be computed
efficiently. One of our main contributions is to develop a lower bound
on hereditary (vector) discrepancy which is tractable and can be
related to $\specLB(A)$. 

\junk{\begin{cor}\label{cor:speclb}
  For any matrix $A$,
  %\begin{equation*}
  $\hvecdisc(A) \geq \specLB(A)$.
  %\end{equation*}
\end{cor}

Giving a ``global'' upper bound simultaneously for the vector
discrepancy of all submatrices of $A$ is more challenging, and is one
of the main technical challenges of our work.}

\subsection{Vector Balancing and Banaszczyk's Theorem}

A well-known conjecture by Koml\'{o}s states that $\disc(A) \leq C
\|A\|_{1\rightarrow 2}$, for an absolute constant $C$. The
conjecture remains open, and the best known result towards resolving
it is a discrepancy bound of $O(\sqrt{\log n}\ \|A\|_{1\rightarrow 2})$ due to
Banaszczyk~\cite{bana}. Banaszczyk's result in fact concerns the more
general vector balancing problem of determining sufficient conditions
under which there exists an assignment of signs $x \in \{\pm 1\}^n$
  such that $Ax \in K$ for given $m \times n$ matrix $A$ and convex
  body $K$. This general version of Banaszczyk's result is crucial to
our argument.

\begin{theorem}[\cite{bana}]\label{thm:bana}
  There exists a universal constant $C$ such that the following
  holds. Let $A$ be an $m$ by $n$ real matrix such
  that $\|A\|_{1 \rightarrow 2}$, and let $K$ be a convex
  body in $\R^m$ such that $\Pr[g \in K] \geq 1/2$ where $g \in \R^m$
  is a standard $m$-dimensional Gaussian random vector, and the
  probability is taken over the choice of $g$. Then there exists $x
  \in \{-1, 1\}^n$ such that $Ax \in CK$.
\end{theorem}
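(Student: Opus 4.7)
The plan is to follow Banaszczyk's sequential-signing strategy driven by a carefully designed potential. Given the columns $a_1, \ldots, a_n$ of $A$ with $\|a_i\|_2 \leq 1$ and the convex body $K$ with $\gamma_m(K) \geq 1/2$, I would process the columns one at a time, maintaining a running partial sum $s_k = \sum_{i=1}^k \epsilon_i a_i$ and choosing each $\epsilon_k \in \{\pm 1\}$ to keep a potential $\phi(s_k)$ under control. The target is that $\phi(s_n)$ is large enough to force $s_n \in CK$.

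The first step is to construct a nonnegative function $\phi:\R^m \to \R_{\geq 0}$ with three properties:
(i) $\phi(0)$ is bounded below by a universal constant;
(ii) $\phi(x) < \tau$ whenever $x \notin CK$, for some threshold $\tau < \phi(0)$;
(iii) the midpoint inequality $\phi(x+v) + \phi(x-v) \geq 2\phi(x)$ holds for every $x \in \R^m$ and every $v \in \R^m$ with $\|v\|_2 \leq 1$.
Property (iii) is what makes the induction work: at step $k$, at least one of the two candidates $s_k \pm a_{k+1}$ satisfies $\phi(s_{k+1}) \geq \phi(s_k)$, so choosing that sign gives $\phi(s_n) \geq \phi(0) > \tau$, which by (ii) forces $s_n \in CK$.

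The natural candidate for $\phi$ is a Gaussian-weighted measure of convex translates: something like $\phi(x) = \gamma_m\bigl((tK - x) \cap B\bigr)$ for an appropriate scaling $t$ and truncation $B$, calibrated so that (i) follows from $\gamma_m(K) \geq 1/2$ and (ii) follows from a standard Gaussian tail estimate outside the dilate $CK$. The main obstacle, and by far the hardest part, is verifying (iii). This is the technical core of Banaszczyk's argument and rests on log-concavity of Gaussian integrals over convex sets (a Pr\'ekopa--Leindler / Brascamp--Lieb style inequality): for any convex body $B$, the function $x \mapsto \gamma_m(B - x)$ is log-concave in $x$. Applied to the displacements $\pm v$ with $\|v\|_2 \leq 1$, together with the Gaussian normalization, this yields the midpoint inequality with a universal constant loss — and this constant is exactly what becomes $C$ in the theorem.

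Finally I would assemble these pieces: fix the scale $t$ and constant $C$ so that (i)--(iii) all hold simultaneously (this is essentially an optimization balancing Gaussian mass inside the truncation against the required dilation), then run the greedy signing procedure driven by $\phi$. The correctness of each greedy step is immediate from (iii), and the conclusion $s_n \in CK$ is immediate from (i) and (ii). Beyond this, I have no shortcut: all the difficulty is concentrated in producing a $\phi$ that simultaneously satisfies the Gaussian lower bound at the origin, the Gaussian tail bound outside $CK$, and the midpoint-superharmonic property — balancing the three is the real art of the proof.
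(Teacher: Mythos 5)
The paper does not prove Theorem~\ref{thm:bana} at all: it is imported as a black box from Banaszczyk's paper \cite{bana} and used only as a tool (in Lemma~\ref{lm:ellips-bound-bana}). So there is no in-paper argument to compare against; what matters is whether your sketch would actually constitute a proof, and it would not. You have correctly identified the shape of Banaszczyk's argument (sequential signing controlled by Gaussian measures of translates of a convex body, with the constant $C$ arising from having to shrink the vectors relative to $K$), but the entire burden of the proof sits in your property (iii), and the mechanism you propose for it is wrong in direction. Log-concavity of $x \mapsto \gamma_m(B-x)$ (Pr\'ekopa--Leindler) gives $\gamma_m(B-x-v)\,\gamma_m(B-x+v) \leq \gamma_m(B-x)^2$, i.e.\ a midpoint \emph{concavity}-type bound $\sqrt{\phi(x+v)\phi(x-v)} \leq \phi(x)$; this is consistent with $\phi(x+v)+\phi(x-v) < 2\phi(x)$ and cannot yield the superharmonic inequality $\phi(x+v)+\phi(x-v)\geq 2\phi(x)$ that your greedy step needs. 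No "universal constant loss" rescues this, and indeed a single translation-invariant potential with properties (i)--(iii) is not how the actual proof goes.

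Banaszczyk's argument is genuinely different at the technical core: for a convex $K$ with $\gamma_m(K)\geq 1/2$ and a vector $u$ with $\|u\|_2 \leq 1/C$, he constructs a new \emph{convex} set $K * u \subseteq (K-u)\cup(K+u)$ and proves the key lemma $\gamma_m(K*u) \geq \gamma_m(K)$; iterating this over the columns $a_n, a_{n-1}, \ldots, a_1$ and unwinding the inclusions produces the signs. Proving that $K*u$ is convex and does not lose Gaussian mass is delicate (it uses the hypothesis $\gamma_m(K)\geq 1/2$ essentially, plus a reduction to low dimension), and none of that is supplied or replaceable by the tail bound and log-concavity facts you invoke. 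As written, your proposal is an accurate road map of \emph{where} the difficulty lies but does not close it, and the one concrete tool you name points the wrong way; citing \cite{bana}, as the paper does, is the appropriate resolution here.
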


\junk{Interestingly, while this is a universal bound for matrices
satisfying a specific condition, we use it to give a relative
guarantee, i.e. to approximate the discrepancy of any matrix. }

Another partial result towards the Koml\'{o}s conjecture is a recent bound
by Nikolov~\cite{komlos-sdp} on the vector discrepancy of matrices $A$
satisfying the condition $\|A\|_{1 \rightarrow 2} \leq 1$. Here we
state a version of the bound that is stronger than the one stated
in~\cite{komlos-sdp}. However, a minor variation of the same
proof shows this stronger bound; we give the argument in
Appendix~\ref{app:vector-komlos}. 

\begin{theorem}[\cite{komlos-sdp}]\label{thm:komlos-sdp}
  For any $m \times n$ matrix $A$ satisfying
  $\|A\|_{1 \rightarrow 2} \leq 1$, there exists a $n\times n$
  matrix $X \succeq 0$ such that $\forall i \in [n]: X_{jj} = 1$ and
  $\|AXA^T\|_2 \leq 1$. 
\end{theorem}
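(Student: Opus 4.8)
To prove Theorem~\ref{thm:komlos-sdp}, the plan is to show that the semidefinite program
\[
  \min\bigl\{\,t \;:\; X\succeq 0,\ X_{jj}=1\ \text{for all }j\in[n],\ AXA^T\preceq tI_m\,\bigr\}
\]
has optimal value at most $1$; since $\|A\|_{1\to2}\le1$ just says every column $A_j$ of $A$ satisfies $\|A_j\|_2\le1$, the optimal $X$ then witnesses the theorem. This program differs from the one behind the bound of~\cite{komlos-sdp} only in that there the single matrix constraint $AXA^T\preceq tI_m$ is replaced by the $m$ scalar constraints $e_i^TAXA^Te_i\le t$ (one controls the diagonal of $AXA^T$ rather than its whole spectrum). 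So the plan is to rerun Nikolov's argument while observing that it never uses the diagonal structure of the relevant dual weights, only that they are positive semidefinite of trace one.

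I would reason about the optimal solution $X^\star$ directly, through first-order optimality. Let $M^\star=AX^\star A^T\succeq0$ and $t^\star=\|M^\star\|_2=\lambda_{\max}(M^\star)$, and suppose for contradiction that $t^\star>1$. Consider first the generic situation in which $X^\star\succ0$ and $\lambda_{\max}(M^\star)$ is a simple eigenvalue, with unit eigenvector $v$. Then for every symmetric $H$ with zero diagonal both $X^\star\pm\epsilon H$ are feasible for small $\epsilon>0$, so, since $\lambda_{\max}$ is differentiable at $M^\star$ with gradient $vv^T$, minimality of $g(X)=\lambda_{\max}(AXA^T)$ forces $v^TAHA^Tv=0$ for all such $H$. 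Writing $w=A^Tv$ (so $w_j=\langle A_j,v\rangle$), this says $w_jw_k=0$ whenever $j\neq k$, i.e.\ $w$ has at most one nonzero coordinate, say $w_{j_0}$. Hence
\[
  t^\star=v^TM^\star v=w^TX^\star w=w_{j_0}^2\,X^\star_{j_0j_0}=\langle A_{j_0},v\rangle^2\le\|A_{j_0}\|_2^2\le1,
\]
a contradiction. The hypothesis is used exactly once, in the last two inequalities, and the dual weight $vv^T$ appearing here is positive semidefinite of trace one but \emph{not} diagonal --- which is precisely why the same argument proves the spectral-norm statement and not merely its diagonal weakening.

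The main obstacle is discharging the genericity assumption, that is, handling the cases where $X^\star$ lies on the boundary of the positive semidefinite cone or where $\lambda_{\max}(M^\star)$ has multiplicity $r>1$. In general only the zero-diagonal $H$ with $X^\star+\epsilon H\succeq0$ for small $\epsilon>0$ are feasible directions, the one-sided derivative of $g$ along $H$ is $\lambda_{\max}(\Pi AHA^T\Pi)$ for $\Pi$ the projection onto the top eigenspace $E$ of $M^\star$, and first-order optimality yields a full KKT system: a dual matrix $W\succeq0$ with $\tr W=1$ and $\range W\subseteq E$, a positive semidefinite multiplier $Z$ for the constraint $X\succeq0$, a vector $\mu$ for the diagonal constraints, together with $Z=A^TWA+\diag(\mu)$ and complementary slackness $ZX^\star=0$. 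One then wants to deduce $t^\star\le1$ from $A^TWA$ being diagonal (equivalently, the vectors $W^{1/2}A_j$ being pairwise orthogonal) and the column bound $\|A_j\|_2\le1$. To make this rigorous I would either (i) perturb $A$ --- rescaling slightly and adding a small generic perturbation, preserving $\|A\|_{1\to2}\le1$ --- prove $t^\star\le1$ on the resulting generic instances, and pass to the limit by continuity of the optimal value; or (ii) carry the KKT analysis through in the degenerate case directly, which is presumably the route taken by the ``minor variation of the proof'' promised in Appendix~\ref{app:vector-komlos}. In either route the column-norm hypothesis enters only through an inequality of the form $\langle A_j,v\rangle^2\le\|A_j\|_2^2$, and diagonality of the dual weights is never invoked.
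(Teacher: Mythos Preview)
Your generic-case argument is clean and correct, but the degenerate case is where the real work lies, and neither of your proposed fixes goes through as stated. For fix~(i), perturbing $A$ does not in general push $X^\star$ into the interior of the PSD cone: take $m=1$, $n=2$, $A=(a_1,a_2)$ with $a_1a_2\neq 0$; the optimum is $X^\star=\begin{pmatrix}1&-s\\-s&1\end{pmatrix}$ with $s=\operatorname{sign}(a_1a_2)$, which is rank one for every such $A$. For fix~(ii), in the same example the KKT multiplier is $W=1$ and $A^TWA=\begin{pmatrix}a_1^2&a_1a_2\\a_1a_2&a_2^2\end{pmatrix}$, which is \emph{not} diagonal; the off-diagonal part is absorbed by $Z\neq 0$, not by $\diag(\mu)$. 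So the conclusion ``$A^TWA$ is diagonal'' that you extract from KKT is simply false once $X^\star$ hits the boundary, and with it the orthogonality of the $W^{1/2}A_j$ on which your endgame relies. What KKT actually gives you is just $A^TWA\succeq\diag(q)$ with $\sum_j q_j=t^\star$, and deducing $\sum q_j\le 1$ from this and $\|A_j\|_2\le1$ is exactly the nontrivial step.

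The paper's proof takes a different route that sidesteps this obstacle entirely. From the dual it obtains $P\succeq0$, $\tr P=1$, a nonnegative diagonal $Q$ with $\tr Q=t^\star$, and $A^TPA\succeq Q$, and then compares the partial products of the ordered $q_i$'s to the top eigenvalues $p_i$ of $P$ via determinants: for each $k$, restricting to the $k$ columns with largest $q_i$ and writing $A_{[k]}=U_kU_k^TA_{[k]}$ gives
\[
q_1\cdots q_k\le\det\bigl(A_{[k]}^TPA_{[k]}\bigr)\le\det(U_k^TA_{[k]})^2\cdot\det(U_k^TPU_k)\le p_1\cdots p_k,
\]
using Hadamard's inequality (this is the only place $\|A_j\|_2\le1$ enters) and Cauchy interlacing. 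A Schur-convexity/majorization lemma then converts $\prod_{i\le k}p_i\ge\prod_{i\le k}q_i$ for all $k$ into $\sum p_i\ge\sum q_i$, i.e.\ $1\ge t^\star$. The determinantal step is precisely what handles the off-diagonal part of $A^TPA$ that your first-order argument cannot see once $Z\neq0$.
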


Note that the above theorem implies $\vecdisc(A) \leq 1$ because
$\|AXA^T\|_2 \leq 1$ implies $e_i^T(AXA^T)e_i \leq 1$ for all standard
basis vectors $e_i$. However, the spectral norm bound is formally
stronger then the vector discrepancy upper bound, and this will be
essential in our proofs.

\section{Ellipsoid Upper Bounds on Discrepancy}
\label{sect:ub}

In this section we show that small-width ellipsoids provide upper
bounds on both hereditary vector discrepancy and hereditary
discrepancy. Giving such an upper bound is in general challenging
because it must hold for all submatrices simultaneously. The proofs
use Theorems~\ref{thm:bana} and~\ref{thm:komlos-sdp}. We start with
the two main technical lemmas.

\begin{lemma}\label{lm:ellips-bound-vector}
  Let $A = (a_j)_{j=1}^n \in \R^{m \times n}$, and let $F\in
  \R^{m\times m}$ be a rank $m$ matrix such that $\forall j \in [n]:
  a_j \in E = FB_2^m$. Then there exists a matrix $X \succeq 0$ such
  that $\forall j \in [n]: X_{jj} = 1$ and $AXA^T \preceq FF^T$.
\end{lemma}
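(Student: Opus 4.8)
The plan is to reduce the statement to Theorem~\ref{thm:komlos-sdp} by the linear change of coordinates that straightens the ellipsoid $E$ into the unit ball. Since $F$ is square of rank $m$, it is invertible; I would set $B \triangleq F^{-1}A$, so that the $j$-th column of $B$ is $b_j = F^{-1}a_j$. The hypothesis $a_j \in E = FB_2^m$ means exactly that $a_j = Fu_j$ for some $u_j \in B_2^m$, whence $b_j = F^{-1}Fu_j = u_j$ satisfies $\|b_j\|_2 \le 1$. Thus $\|B\|_{1\rightarrow 2} \le 1$, and Theorem~\ref{thm:komlos-sdp} applies to $B$.

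Applying Theorem~\ref{thm:komlos-sdp} produces $X \succeq 0$ with $X_{jj} = 1$ for all $j \in [n]$ and $\|BXB^T\|_2 \le 1$. Because $X \succeq 0$, the matrix $BXB^T$ is positive semidefinite, so the spectral-norm bound is equivalent to the Loewner inequality $BXB^T \preceq I$, where $I$ is the $m \times m$ identity matrix. Substituting $B = F^{-1}A$ gives $F^{-1}(AXA^T)F^{-T} \preceq I$. I would then conjugate both sides by $F$, using the elementary fact that $M \preceq N$ implies $GMG^T \preceq GNG^T$ for any conformable $G$: with $G = F$ this yields $AXA^T = F(F^{-1}(AXA^T)F^{-T})F^T \preceq FIF^T = FF^T$. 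The diagonal constraints $X_{jj} = 1$ are untouched by this argument, so $X$ is the desired matrix.

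There is no real obstacle here beyond invoking the right form of the Koml\'os-type bound: the point is that we must use the spectral-norm (equivalently, Loewner-order) conclusion $BXB^T \preceq I$ of Theorem~\ref{thm:komlos-sdp}, not merely the weaker assertion $\vecdisc(B) \le 1$ that bounds only the diagonal entries $e_i^T BXB^T e_i$. A bound on the diagonal entries alone would not survive conjugation by $F$, since the standard basis directions in the $B$-coordinates correspond to the columns of $F^{-T}$ --- generally neither orthogonal nor of unit length --- in the original coordinates. This is precisely why the strengthened version of Nikolov's theorem stated as Theorem~\ref{thm:komlos-sdp} (and proved in Appendix~\ref{app:vector-komlos}) is what is needed for this lemma.
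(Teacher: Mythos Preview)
Your proof is correct and follows essentially the same route as the paper: set $B=F^{-1}A$, observe $\|B\|_{1\to 2}\le 1$, apply Theorem~\ref{thm:komlos-sdp} to obtain $BXB^T\preceq I$, and conjugate by $F$. Your added remark about why the Loewner-order conclusion (rather than a mere bound on diagonal entries) is essential is accurate and matches the paper's own discussion of this point.
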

\begin{proof}
  Observe that, $a_j \in E \Leftrightarrow F^{-1}a_j \in B_2^m$. This
  implies $\|F^{-1}A\|_{1 \rightarrow 2} \leq 1$, and, by
  Theorem~\ref{thm:komlos-sdp}, there exists an $X$ with $X_{jj}=1$
  for all $j$ such that $(F^{-1}A)X(F^{-1}A)^T \preceq I$. Multiplying on
  the left by $F$ and on the right by $F^T$, we have $AXA^T \preceq
  FF^T$, and this completes the proof.
\end{proof}

Lemma~\ref{lm:ellips-bound-vector} is our main tool for approximating
hereditary vector discrepancy. By the relationship between vector
discrepancy and discrepancy established by Bansal
(Corollary~\ref{cor:bansal}), this is sufficient for a
poly-logarithmic approximation to hereditary discrepancy. However, to
get tight upper bounds on discrepancy from small width ellipsoids (and
improved approximation ratio), we give a direct argument using
Banaszczyk's theorem.

\begin{lemma}\label{lm:ellips-bound-bana}
  Let $A = (a_j)_{j=1}^n \in \R^{m \times n}$, and let $F\in
  \R^{m\times m}$ be a rank $m$ matrix such that $\forall j \in [n]: a_j
  \in E = FB_2^m$. Then, for any set of vectors $v_1, \ldots, v_k \in
  \R^m$, there exists $x \in \{\pm 1\}^n$ such that $\forall i \in
  [k]: |\langle Ax, v_i\rangle| \leq C\sqrt{(v_i^TFF^Tv_i)\log k}$ for
  a universal constant $C$.
\end{lemma}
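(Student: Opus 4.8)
The plan is to reduce to Banaszczyk's theorem (Theorem~\ref{thm:bana}) applied in the coordinate system obtained by the linear transformation $F^{-1}$. Since $a_j \in E = FB_2^m$ is equivalent to $F^{-1}a_j \in B_2^m$, the matrix $B = F^{-1}A$ satisfies $\|B\|_{1\rightarrow 2} \leq 1$, so Banaszczyk's theorem applies to $B$ for any convex body $K$ of Gaussian measure at least $1/2$. The point is to choose $K$ so that $Bx \in CK$ translates back, via $x \mapsto Ax = FBx$, into the desired bounds $|\langle Ax, v_i\rangle| \leq C\sqrt{(v_i^TFF^Tv_i)\log k}$ for all $i \in [k]$.

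First I would identify the right body. We want $|\langle Ax, v_i\rangle|$ small; writing $Ax = FBx$ we have $\langle Ax, v_i\rangle = \langle Bx, F^Tv_i\rangle$. So set $w_i = F^Tv_i \in \R^m$ and note $v_i^TFF^Tv_i = \|w_i\|_2^2$. If some $w_i = 0$ the corresponding constraint is vacuous, so assume each $w_i \neq 0$ and let $\hat{w}_i = w_i/\|w_i\|_2$. Define the symmetric convex body
\begin{equation*}
  K = \{ y \in \R^m : |\langle y, \hat{w}_i\rangle| \leq t \ \ \forall i \in [k] \},
\end{equation*}
a slab intersection, where $t = c\sqrt{\log k}$ for a suitable absolute constant $c$. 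The standard Gaussian tail bound gives $\Pr[|\langle g, \hat{w}_i\rangle| > t] \leq 2e^{-t^2/2}$, and a union bound over the $k$ directions shows $\Pr[g \notin K] \leq 2k e^{-t^2/2} \leq 1/2$ once $t = \sqrt{2\ln(4k)}$, i.e. $t = O(\sqrt{\log k})$ (if $k=1$ one picks $t$ a large enough constant). Hence $\Pr[g \in K] \geq 1/2$.

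Now apply Theorem~\ref{thm:bana} to $B = F^{-1}A$ (which has $\|B\|_{1\rightarrow 2}\leq 1$) and this $K$: there is $x \in \{-1,1\}^n$ with $Bx \in CK$, i.e. $|\langle Bx, \hat{w}_i\rangle| \leq Ct$ for every $i$. Multiplying through by $\|w_i\|_2$ and using $\langle Bx, w_i\rangle = \langle Bx, F^Tv_i\rangle = \langle FBx, v_i\rangle = \langle Ax, v_i\rangle$ gives
\begin{equation*}
  |\langle Ax, v_i\rangle| \leq Ct\,\|w_i\|_2 = O\!\left(\sqrt{(v_i^TFF^Tv_i)\log k}\right),
\end{equation*}
which is exactly the claim after renaming the constant. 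The one point requiring a little care is the degenerate case $v_i^TFF^Tv_i = 0$ (equivalently $w_i = 0$): since $F$ has full rank $m$, this forces $v_i = 0$, in which case $\langle Ax, v_i\rangle = 0$ and the inequality holds trivially for any $x$; so these indices can simply be dropped before constructing $K$. I do not anticipate a serious obstacle here — the only thing to get right is the bookkeeping of the change of variables $F^{-1}$ and the choice of slab body $K$ so that the Gaussian-measure hypothesis of Theorem~\ref{thm:bana} holds with the stated $O(\sqrt{\log k})$ width.
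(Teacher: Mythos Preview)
Your proposal is correct and follows essentially the same approach as the paper: apply the change of variables $F^{-1}$ so that $\|F^{-1}A\|_{1\rightarrow 2}\le 1$, take $K$ to be the intersection of slabs orthogonal to the directions $F^Tv_i$ (the paper writes this as $K=F^{-1}P$ and computes its Minkowski norm via the polar body, which amounts to your normalization $\hat w_i$), verify $\Pr[g\in tK]\ge 1/2$ for $t=O(\sqrt{\log k})$ by a Gaussian tail plus union bound, and invoke Theorem~\ref{thm:bana}. The only differences are cosmetic bookkeeping (you fold $t$ into $K$; the paper scales $K$ afterward), and you additionally note the trivial $v_i=0$ case.
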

\begin{proof}
  Let $P = \{y: |\langle y, v_i\rangle| \leq \sqrt{v_i^TFF^Tv_i}\ \forall i
  \in [k]\}$.  We need to prove that there exists an $x \in \{-1,
  1\}^n$ such that $Ax \in (C\sqrt{\log k})P$ for a suitable constant
  $C$. Notice that the polar body of $P$ is
  \[
  P^\circ =  \conv\{\pm (v_i^TFF^Tv_i)^{-1/2}v_i\}_{i = 1}^k.
  \] 
  Set $K = F^{-1}P$. To show that there exists an $x$ such that $Ax \in
  (C\sqrt{\log k})P$, we will show that there exists an $x \in \{-1,
  1\}^n$ such that $F^{-1}Ax \in (C\sqrt{\log k})K$. For this, we will
  use Theorem~\ref{thm:bana}. As in the proof of
  Lemma~\ref{lm:ellips-bound-vector}, $\|F^{-1}A\|_{1\rightarrow 2}
  \leq 1$. To use Theorem~\ref{thm:bana}, we also need to argue that
  for a standard Gaussian $g$, $\Pr[g \in (C\sqrt{\log k})K] \geq
  \frac{1}{2}$. To this end, we compute the polar body of $K$ as
  \begin{align*}
  K^\circ% &= \{y: \langle y, x \rangle \leq 1\; \forall x \in K\}\\
  &= \{y: \langle y, F^{-1}x \rangle \leq 1\; \forall x \in P\}%\\
  = \{y: \langle (F^T)^{-1}y, x \rangle \leq 1\; \forall x \in P\}\\
  &= \{F^Tz: \langle z, x \rangle \leq 1\; \forall x \in P\}
  = F^TP^\circ
  \end{align*}
  By the definition of Minkowski norm, for any $t \in \R$, $y \in tK$,
  if and only if
  \begin{equation*}
    t \geq \|y\|_K = \sup_{z\in K^\circ}\langle y, z \rangle 
    = \sup_{z\in P^\circ}\langle y, F^Tz \rangle =
    \max_{i=1}^k{\frac{1}{\sqrt{v_i^TFF^Tv_i}} |\langle y, F^Tv_i \rangle|},
  \end{equation*}
  where the first equality is by the duality of $\|\cdot\|_K$ and
  $\|\cdot\|_{K^\circ}$, and the final equality holds because the linear functional
  $\langle y, F^Tz \rangle$ is maximized at a vertex of $P^\circ$. We
  have then that $y \in tK$ if and only if
  $\forall i \in [k]: |\langle y, F^Tv_i\rangle|^2 \leq
  t^2(v_i^TFF^Tv_i)$. Let $g$ be a standard $m$-dimensional Gaussian
  vector. Then $\E_g |\langle g, F^Tv_i\rangle|^2 = v_iFF^Tv_i$; by
  standard concentration bounds, $\Pr[|\langle g, F^Tv_i\rangle|^2 >
  t^2 (v_iFF^Tv_i)] < \exp(-t^2/2)$. Setting $t = \sqrt{2\ln 2k}$ and
  taking a union bound over all $i \in [k]$ gives us that $\Pr[g \not
  \in \sqrt{2\ln 2k}\ K] < 1/2$. By Theorem~\ref{thm:bana}, this
  implies that there exists an $x\in \{-1,1\}^n$ such that $F^{-1}Ax \in \sqrt{2\ln
    2k}\ K$, and, by multiplying on both sides by $F$, it follows that
  $Ax \in \sqrt{2\ln 2k}\ P$. 
\end{proof}

Notice that the quantity $\sqrt{v_i^TFF^Tv_i}$ is just the width of
$E$ in the direction of $v_i$. The property that all columns of a
matrix $A$ are contained in $E$ is \emph{hereditary}: if it is
satisfied for $A$, then it is satisfied for any submatrix of $A$. This
elementary fact lends the power of Lemmas~\ref{lm:ellips-bound-vector}
and~\ref{lm:ellips-bound-bana}: the bound given by ellipsoids is
\emph{universal} in the sense that the discrepancy bound for any
direction $v_i$ holds for all submatrices $A|_S$ of $A$
simultaneously.  This fact makes it possible to upper bound hereditary
discrepancy in arbitrary norms, and in the sequel we do this for
$\ell_\infty^m$, which is the norm of interest for standard
definitions of discrepancy. We consider ellipsoids $E$ that contain
the columns of $A$ and minimize the quantity $\|E\|_\infty$: the
largest $\ell_\infty$ norm of the points of $E$. Note that
$\|E\|_\infty$, for an ellipsoid $E = FB_2^m$, can be written as
\begin{equation}\label{eq:ellips-infty}
  \|E\|_\infty = \max_{x \in E, y: \|y\|_1 = 1}{\langle x, y \rangle}
  = \max_{y:\|y\|_1 = 1}{\|y\|_{E^\circ}} = \max_{i \in [n]}{\sqrt{e_i^T
    FF^T e_i}}, 
\end{equation}
where the first identity follows since $\ell_1$ is the dual norm to
$\ell_\infty$, and the final identity follows from the formula for
$\|\cdot\|_{E^\circ}$ and the fact that a convex function over the
$\ell_1$ ball is always maximized at a vertex, i.e. a standard basis
vector. The next theorem gives our main upper bound on hereditary
(vector) discrepancy, which is in terms of $\|E\|_\infty$.

\begin{theorem}\label{thm:main-ub}
  Let $A = (a_i)_{i=1}^n \in \R^{m \times n}$, and let $F$ be a rank
  $m$ matrix such that $\forall i \in [n]: a_i \in E = FB_2^m$. Then
  $\hvecdisc(A) \leq \|E\|_\infty$, and $\herdisc(A) = O(\sqrt{\log
    m}) \|E\|_\infty$.
\end{theorem}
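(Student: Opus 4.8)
The plan is to prove both inequalities one submatrix at a time, exploiting the hereditary nature of the containment hypothesis: if every column of $A$ lies in $E = FB_2^m$, then every column of any submatrix $A|_S$ lies in $E$ as well, so Lemmas~\ref{lm:ellips-bound-vector} and~\ref{lm:ellips-bound-bana} apply verbatim to $A|_S$ with the \emph{same} matrix $F$. This single observation is what makes the bound ``universal'' across submatrices, and I would state it explicitly up front.

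For the vector-discrepancy bound, I would fix $S \subseteq [n]$ and apply Lemma~\ref{lm:ellips-bound-vector} to $A|_S$, obtaining $X \succeq 0$ with unit diagonal and $(A|_S)X(A|_S)^T \preceq FF^T$. Writing $X$ as a Gram matrix of unit vectors $u_j \in \S^{|S|-1}$, this says precisely that $\|\sum_{j \in S}(A|_S)_{ij}u_j\|_2^2 = e_i^T(A|_S)X(A|_S)^Te_i \leq e_i^T FF^T e_i$ for every $i \in [m]$; by~\eqref{eq:ellips-infty} the right-hand side is at most $\|E\|_\infty^2$, so $\vecdisc(A|_S) \leq \|E\|_\infty$. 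Maximizing over $S$ then gives $\hvecdisc(A) \leq \|E\|_\infty$.

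For the discrepancy bound, I would again fix $S$ and apply Lemma~\ref{lm:ellips-bound-bana} to $A|_S$, now taking the test vectors $v_1, \ldots, v_m$ to be the standard basis vectors $e_1, \ldots, e_m$ (so $k = m$). This produces $x \in \{\pm 1\}^{|S|}$ with $|\langle (A|_S)x, e_i\rangle| \leq C\sqrt{(e_i^T FF^T e_i)\log m}$ for all $i$; invoking~\eqref{eq:ellips-infty} once more bounds this by $C\sqrt{\log m}\,\|E\|_\infty$. Since $\|(A|_S)x\|_\infty = \max_{i}|\langle (A|_S)x, e_i\rangle|$, we get $\disc(A|_S) \leq C\sqrt{\log m}\,\|E\|_\infty$, and taking the maximum over $S$ yields $\herdisc(A) = O(\sqrt{\log m})\|E\|_\infty$.

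I do not expect a genuine obstacle here: once Lemmas~\ref{lm:ellips-bound-vector} and~\ref{lm:ellips-bound-bana} are available, the theorem is a short deduction. The only thing to be careful about is the Gram-matrix reformulation of $\vecdisc$ — I want to make sure Lemma~\ref{lm:ellips-bound-vector} is read as certifying small vector discrepancy of the submatrix (via $\max_i\sqrt{e_i^TAXA^Te_i}$ over PSD $X$ with unit diagonal), not merely the stronger spectral inequality $AXA^T \preceq FF^T$; and to remember that the ``for all submatrices'' quantifier is handled for free by the hereditary containment property rather than by any additional argument.
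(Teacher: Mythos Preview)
Your proposal is correct and matches the paper's proof essentially line for line: fix an arbitrary $S$, use the hereditary containment of the columns in $E$ to apply Lemma~\ref{lm:ellips-bound-vector} (resp.\ Lemma~\ref{lm:ellips-bound-bana} with $k=m$ and $v_i=e_i$) to $A|_S$, and read off the bound via~\eqref{eq:ellips-infty}. If anything you are slightly more explicit than the paper in spelling out the Gram-matrix interpretation of $X$ and in applying Lemma~\ref{lm:ellips-bound-bana} to the submatrix rather than to $A$ itself.
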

\begin{proof}
  Let $A|_S$ be an arbitrary submatrix of $A$ ($S \subseteq
  [n]$). Since all columns of $A$ are contained in $E$, this holds for
  all columns of $A|_S$ as well, and by Lemma~\ref{lm:ellips-bound-vector},
  we have that there exists $X \succeq 0$ with $X_{jj} = 1$ for all $j
  \in S$, and $(A|_S)X(A|_S)^T \preceq FF^T$. Therefore, for all $i
  \in [m]$, $e_i^T(A|_S)X(A|_S)^Te_i \leq e_iFF^Te_i \leq
  \|E\|_\infty^2$, by \eqref{eq:ellips-infty}. Since $S$ was
  arbitrary, this implies the bound on $\hvecdisc(A)$. For bounding
  $\herdisc(A)$, in
  Lemma~\ref{lm:ellips-bound-bana}, set $k=m$ and $v_i = e_i$ for $i \in
  [m]$ and $e_i$ the $i$-th standard basis vector.
\end{proof}

\section{Lower Bounds on Discrepancy}

In Section~\ref{sect:ub} we showed that the hereditary (vector)
discrepancy of a matrix $A$ can be \emph{upper bounded} in terms of the $\|E
\|_\infty$ for any $E$ containing the columns of $A$. In this section
we analyze the properties of the minimal such ellipsoid and show that it provides
lower bounds for discrepancy as well. We use convex duality and the
restricted invertibility theorem for this purpose. The lower bound we
derive is new in discrepancy theory and of independent interest.

\subsection{The Ellipsoid Minimization Problem and Its Dual}
\label{sect:min-ellips}

\junk{In the remainder of this section we fix an $m \times n$ matrix $A =
(a_j)_{j = 1}^n$. To avoid cumbersome technicalities, we assume that
$\rank(A) = m$; this comes without loss of generality, because it can
be guaranteed by applying an arbitrary small perturbation to $A$ and
adding sufficiently many columns very close to $0$. Both these
transformations can be made small enough so that they perturb the
discrepancy of $A$ and the value of the optimization problems we
consider only negligibly.}

\confoption{}{To formulate the problem of minimizing $\|E\|_\infty = \max_{x \in
  E}{\|x\|_\infty}$ as a convex optimization problem we need the
following well-known lemma, which shows that the matrix inverse is
convex in the PSD sense.
\begin{lemma}\label{lm:inverse-convex}
  For any two $m\times m$ matrices $X \succ 0$ and $Y \succ 0$,
  $(\frac{1}{2}X + \frac{1}{2}Y)^{-1} \preceq \frac{1}{2}X^{-1} +
  \frac{1}{2}Y^{-1}$. 
\end{lemma}
\begin{proof}
  Define the matrices
  \begin{equation*}
    U = \left(
      \begin{array}{cc}
        X^{-1} &I\\
        I &X
      \end{array}
      \right)
      \;\;\;\;\;
      V = \left(
      \begin{array}{cc}
        Y^{-1} &I\\
        I &Y
      \end{array}
      \right).
  \end{equation*}
  The Schur complement of $X$ in $U$ is $0$, and
  therefore $U \succeq 0$, and analogously $V \succeq 0$. Therefore
  $U + V \succeq 0$, and the Schur complement of $X + Y$ in $U+V$ is also
  positive semidefinite, i.e.~$X^{-1} + Y^{-1} - 4(X+Y)^{-1} \succeq
  0$. This completes the proof, after re-arranging terms.
\end{proof}}

Consider a matrix $A = (a_j)_{j=1}^n \in \R^{m\times n}$ of rank
$m$. Let us formulate $\min \{\|E\|_\infty: \forall j \in [n]: a_j
\in E\}$ as a convex minimization problem. The problem is defined as
follows
\begin{align}
  &\text{Minimize } t \label{eq:ellips-obj}
  \text{ s.t. }\\
  &X\succ 0\\
  &\forall i \in [m]: e_i^TX^{-1}e_i \leq t\label{eq:ellips-width}\\
  &\forall j \in [n]: a_j^TXa_j \leq 1.\label{eq:ellips-enclose}
\end{align}

\begin{lemma}\label{lm:ellips-program}
  For a rank $m$ matrix $A = (a_j)_{j = 1}^n \in \R^{m\times n}$, the
  optimal value of the optimization problem
  \eqref{eq:ellips-obj}--\eqref{eq:ellips-enclose} is equal to $\min
  \{\|E\|_\infty^2: \forall j \in [n]: a_j \in E\}$. Moreover, the
  objective function \eqref{eq:ellips-obj} and constraints
  \eqref{eq:ellips-width}--\eqref{eq:ellips-enclose} are convex
  over $t \in \R$ and $X \succ 0$.
\end{lemma}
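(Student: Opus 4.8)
The plan is to exhibit an explicit bijection between symmetric ellipsoids $E$ containing all columns $a_1,\dots,a_n$ and feasible points of \eqref{eq:ellips-obj}--\eqref{eq:ellips-enclose}, under which the objective $t$ corresponds to $\|E\|_\infty^2$. Recall from the Preliminaries that a full-dimensional symmetric ellipsoid is exactly a set $E = FB_2^m = \{x : x^T(FF^T)^{-1}x \le 1\}$ with $F$ of rank $m$, and conversely every $X \succ 0$ gives such an ellipsoid $E_X := \{x : x^T X x \le 1\} = X^{-1/2}B_2^m$; the assignment $X \mapsto E_X$ is a bijection between $\{X : X \succ 0\}$ and full-dimensional symmetric ellipsoids. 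Since $\rank(A) = m$, any ellipsoid containing $a_1,\dots,a_n$ must be full-dimensional, so this bijection sees all relevant ellipsoids. Under it, $a_j \in E_X$ is literally the constraint $a_j^T X a_j \le 1$ of \eqref{eq:ellips-enclose}, and, applying \eqref{eq:ellips-infty} with $FF^T = X^{-1}$, we get $\|E_X\|_\infty^2 = \max_{i \in [m]} e_i^T X^{-1} e_i$, so demanding $\|E_X\|_\infty^2 \le t$ is exactly the family \eqref{eq:ellips-width}. Hence $(t,X)$ is feasible iff $X \succ 0$, $E_X \supseteq \{a_j\}_{j=1}^n$, and $t \ge \|E_X\|_\infty^2$, and minimizing $t$ yields $\min\{\|E\|_\infty^2 : \forall j, a_j \in E\}$.

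I would then address attainment: along a minimizing sequence $X$ stays in a compact subset of $\{X \succ 0\}$, because $a_j^T X a_j \le 1$ for all $j$ together with $\vspan\{a_j\} = \R^m$ bounds $X$ from above, while if $X$ approached a singular matrix then some $e_i^T X^{-1} e_i$ would blow up (the $e_i$ span $\R^m$), contradicting minimality. So the minimum on the program side is attained, matching the minimum on the right-hand side.

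For the convexity claim: the objective \eqref{eq:ellips-obj} is linear in $(t,X)$, the constraints \eqref{eq:ellips-enclose} are linear in $X$, and the domain $\{X : X \succ 0\}$ is a convex cone, so everything reduces to showing that each map $X \mapsto e_i^T X^{-1} e_i$ in \eqref{eq:ellips-width} is convex on $\{X \succ 0\}$. This is where Lemma~\ref{lm:inverse-convex} comes in: it gives $(\tfrac12 X + \tfrac12 Y)^{-1} \preceq \tfrac12 X^{-1} + \tfrac12 Y^{-1}$ for $X, Y \succ 0$, and sandwiching both sides between $e_i$ (using $M \preceq N \Rightarrow v^T M v \le v^T N v$) yields midpoint convexity of $X \mapsto e_i^T X^{-1} e_i$; since this map is continuous on $\{X \succ 0\}$, midpoint convexity upgrades to convexity. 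As $e_i^T X^{-1} e_i - t$ is then convex in $X$ and affine in $t$, it is jointly convex in $(t,X)$, as \eqref{eq:ellips-width} asserts.

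There is no genuinely hard step here; the only things one must be careful about are invoking the correct description of ellipsoids and the identity \eqref{eq:ellips-infty} so that the scalar variable $t$ tracks $\|E\|_\infty^2$ rather than $\|E\|_\infty$, and citing the operator convexity of the matrix inverse (Lemma~\ref{lm:inverse-convex}) for the width constraints. The rank-$m$ hypothesis on $A$ is what makes the ellipsoid correspondence a clean bijection and forces minimizing sequences to stay away from the boundary of the PSD cone, so that the minimum is attained.
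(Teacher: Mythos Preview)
Your proposal is correct and follows essentially the same route as the paper: the same bijection $X \leftrightarrow E_X = X^{-1/2}B_2^m$, the same identification of the constraints via \eqref{eq:ellips-infty}, and the same appeal to Lemma~\ref{lm:inverse-convex} for convexity of $X \mapsto e_i^T X^{-1} e_i$. Your extra paragraph on attainment (compactness of the feasible region away from singular $X$) is a nice addition that the paper leaves implicit.
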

\confoption{\begin{proof}[Proof Sketch]The equality of optimal values
    follows from the basic properties of ellipsoids
    and~\eqref{eq:ellips-infty}. The convexity of
    \eqref{eq:ellips-obj}--\eqref{eq:ellips-enclose} is a consequence
    of the convexity of the matrix inverse, which is well-known, see
    e.g.~\cite[Chapter 1.5]{bhatia-psdbook}. We give the full proof
    in Appendix~\ref{app:dual}.\end{proof}}{\begin{proof}
    Let $\lambda$ be the optimal value of
    \eqref{eq:ellips-obj}--\eqref{eq:ellips-enclose} and $\mu =
    \min\{\|E\|_\infty: \forall j \in [n]: a_j \in E\}$. Given a
    feasible $X$ for \eqref{eq:ellips-obj}--\eqref{eq:ellips-enclose},
    set $E = X^{-1/2}B_2^m$ (this is well-defined since $X \succ
    0$). Then for any $j \in [n]$, $\|a_j\|_E = a_j^TXa_j \leq 1$ by
    \eqref{eq:ellips-enclose}, and, therefore, $a_j \in E$. Also, by
    \eqref{eq:ellips-infty}, $\|E\|_\infty^2 = \max_{i = 1}^m
    e_i^TXe_i \leq t$. This shows that $\mu \leq \lambda$. In the
    reverse direction, let $E = FB_2^m$ be such that $\forall j\in
    [n]: a_j \in E$. Then, because $A$ is full rank, $F$ is also full
    rank and invertible, and we can define $X = (FF^T)^{-1}$ and $t =
    \|E\|_\infty^2$. Analogously to the calculations above, we can
    show that $X$ and $t$ are feasible, and therefore $\lambda \leq
    \mu$.

  The objective function and the constraints \eqref{eq:ellips-enclose}
  are affine, and therefore convex. To show \eqref{eq:ellips-width} are
  also convex, let $X_1, t_1$ and $X_2, t_2$ be two feasible solutions. Then,
  Lemma~\ref{lm:inverse-convex} implies that 
  for any $i$, $e_i^T(\frac{1}{2}X_1 + \frac{1}{2}X_2)^{-1}e_i \leq
  \frac{1}{2}X_1^{-1} + \frac{1}{2}X_2^{-1} \leq \frac{1}{2}t_1 +
  \frac{1}{2}t_2$, and constraints \eqref{eq:ellips-width} are convex as well.
\end{proof}}

\begin{theorem}\label{thm:nuclear}
  Let $A = (a_j)_{j = 1}^n \in \R^{m\times n}$ be a rank $m$
  matrix, and let $\mu = \min
  \{\|E\|_\infty: \forall j \in [n]: a_j \in E\}$. Then,
  \begin{align}
    \mu^2 =   &\max \|P^{1/2}AQ^{1/2}\|_{S_1}^2\label{eq:nuclear-obj2}%\\
    \text{ s.t.}\\
    &\tr(P) = \tr(Q) = 1\\
    &P,Q \succeq 0; P,Q \text{ diagonal}. \label{eq:nuclear-pos2}  
  \end{align}
\end{theorem}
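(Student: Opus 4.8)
The plan is to read Theorem~\ref{thm:nuclear} as a statement of convex Lagrangian duality for the ellipsoid minimization program. By Lemma~\ref{lm:ellips-program}, $\mu^2$ equals the optimal value of the convex program \eqref{eq:ellips-obj}--\eqref{eq:ellips-enclose} in the variables $t \in \R$ and $X \succ 0$, so it suffices to compute the Lagrange dual of that program, check that it is exactly the maximization \eqref{eq:nuclear-obj2}--\eqref{eq:nuclear-pos2}, and verify that strong duality holds.

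First I would form the Lagrangian. Attach a multiplier $p_i \ge 0$ to each width constraint \eqref{eq:ellips-width} and a multiplier $q_j \ge 0$ to each containment constraint \eqref{eq:ellips-enclose}, and collect them into diagonal PSD matrices $P = \diag(p_1,\dots,p_m)$ and $Q = \diag(q_1,\dots,q_n)$. Using $\sum_i p_i\, e_i^T X^{-1} e_i = \tr(PX^{-1})$ and $\sum_j q_j\, a_j^T X a_j = \tr(XAQA^T)$, the Lagrangian is
\[
  L(t,X,P,Q) = t\bigl(1 - \tr P\bigr) + \tr(PX^{-1}) + \tr(XAQA^T) - \tr Q .
\]
Minimizing over $t \in \R$ forces $\tr P = 1$ (else the dual value is $-\infty$). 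Minimizing over $X \succ 0$ uses the standard ``matrix AM--GM'' identity $\inf_{X \succ 0}\bigl[\tr(PX^{-1}) + \tr(XM)\bigr] = 2\,\tr\bigl((P^{1/2} M P^{1/2})^{1/2}\bigr)$ for $P \succ 0$, $M \succeq 0$ (attained at $X = P^{1/2}(P^{1/2}MP^{1/2})^{-1/2}P^{1/2}$ when $M \succ 0$, verifiable by differentiating). With $M = AQA^T$ one has $P^{1/2}MP^{1/2} = (P^{1/2}AQ^{1/2})(P^{1/2}AQ^{1/2})^T$, so $\tr\bigl((P^{1/2}MP^{1/2})^{1/2}\bigr) = \|P^{1/2}AQ^{1/2}\|_{S_1}$, and the dual function becomes $g(P,Q) = 2\|P^{1/2}AQ^{1/2}\|_{S_1} - \tr Q$ over diagonal $P,Q \succeq 0$ with $\tr P = 1$.

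Next I would optimize out the scale of $Q$: writing $Q = sQ_0$ with $\tr Q_0 = 1$, $s \ge 0$, homogeneity gives $g(P,sQ_0) = 2\sqrt{s}\,\|P^{1/2}AQ_0^{1/2}\|_{S_1} - s$, maximized at $s = \|P^{1/2}AQ_0^{1/2}\|_{S_1}^2$ with value $\|P^{1/2}AQ_0^{1/2}\|_{S_1}^2$. Hence $\sup_Q g(P,Q) = \max_{\tr Q = 1}\|P^{1/2}AQ^{1/2}\|_{S_1}^2$, and the supremum over diagonal $P \succeq 0$ with $\tr P = 1$ yields exactly \eqref{eq:nuclear-obj2}--\eqref{eq:nuclear-pos2}, the maxima being attained by compactness of the trace-$1$ simplices. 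Finally, to go from weak to strong duality I would invoke Slater's condition (Theorem~\ref{thm:slater}): the program is convex (Lemma~\ref{lm:ellips-program}), constraint \eqref{eq:ellips-enclose} is affine in $X$, and $X = \epsilon I$, $t = 1/\epsilon + 1$ strictly satisfies the non-affine constraints \eqref{eq:ellips-width} for small $\epsilon > 0$ while lying in the open domain $\{X \succ 0\}$; since $\mu^2$ is finite, strong duality holds. This gives $\mu^2 = \max\|P^{1/2}AQ^{1/2}\|_{S_1}^2$.

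The main obstacle is the partial minimization over $X \succ 0$: one must apply the matrix AM--GM identity correctly and handle the boundary case where $M = AQA^T$ is rank-deficient (so the infimum over $X$ is not attained) — harmless, since $g$ is still given by the same closed form, and one may restrict to $P \succ 0$ and pass to the closure using continuity of $P \mapsto \max_Q \|P^{1/2}AQ^{1/2}\|_{S_1}^2$. A second, routine point is matching the hypotheses of Slater's condition exactly (which constraints are affine; exhibiting a strictly feasible interior point). As a sanity check on weak duality one can bypass the Lagrangian: for any primal-feasible $(X,t)$ and normalized $P \succ 0$, $Q \succeq 0$, the inequality $\|B\|_{S_1} \le \sqrt{\tr(W)\,\tr(W^{-1}BB^T)}$ (from Cauchy--Schwarz applied to the singular vectors of $B$) with $B = P^{1/2}AQ^{1/2}$ and $W = P^{1/2}X^{-1}P^{1/2}$ gives $\|P^{1/2}AQ^{1/2}\|_{S_1}^2 \le \tr(PX^{-1})\cdot\tr(XAQA^T) \le t\cdot 1$, using $\tr P = \tr Q = 1$ together with \eqref{eq:ellips-width} and \eqref{eq:ellips-enclose}.
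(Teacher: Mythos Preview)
Your proposal is correct and follows essentially the same route as the paper: form the Lagrangian of \eqref{eq:ellips-obj}--\eqref{eq:ellips-enclose}, eliminate $t$ to force $\tr P = 1$, minimize over $X \succ 0$ to obtain $g(P,R) = 2\|P^{1/2}AR^{1/2}\|_{S_1} - \tr R$, then optimize out the scale of the column weights and invoke Slater's condition for strong duality. The only cosmetic differences are that you package the $X$-minimization as a ready-made matrix AM--GM identity (the paper derives it via the first-order condition $X^{-1}PX^{-1} = ARA^T$), you use a slightly different Slater point, and you add an independent Cauchy--Schwarz verification of weak duality that the paper does not include.
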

\begin{proof}%[Proof Sketch] %sketch only in conf version
  \confoption{We prove the theorem by showing that the convex
    optimization problem
    \eqref{eq:ellips-obj}--\eqref{eq:ellips-enclose} satisfies
    Slater's condition (so strong duality holds), and its Lagrange
    dual is equivalent to
    \eqref{eq:nuclear-obj2}--\eqref{eq:nuclear-pos2}. Strong duality
    then implies that the optimal value of
    \eqref{eq:ellips-obj}--\eqref{eq:ellips-enclose} is equal to the
    optimal value of the Lagrange dual, which is equal to the optimal
    value of \eqref{eq:nuclear-obj2}--\eqref{eq:nuclear-pos2}. The
    Lagrange dual problem of
    \eqref{eq:ellips-obj}--\eqref{eq:ellips-enclose} is to maximize
    $2\|P^{1/2}AR^{1/2}\|_{S_1} - \tr(R)$ over diagonal PSD matrices
    $P, R$, such that $\tr(P) = 1$. (Note that, as a Lagrange dual,
    this is a convex optimization problem.) The objective of the
    Lagrange dual is maximized when $\tr(R) =
    \|P^{1/2}AR^{1/2}\|_{S_1}^2$, and, therefore, the optimal value of
    the dual is equal to the optimal value of the optimization
    problem~\eqref{eq:nuclear-obj2}--\eqref{eq:nuclear-pos2}. The full
    proof is given in Appendix~\ref{app:dual}.}{We shall prove the
    theorem by showing that the convex optimization problem
    \eqref{eq:ellips-obj}--\eqref{eq:ellips-enclose} satisfies
    Slater's condition, and its Lagrange dual is equivalent to
    \eqref{eq:nuclear-obj2}--\eqref{eq:nuclear-pos2}. Let us first
    verify Slater's condition. We define the domain for constraints
    \eqref{eq:ellips-width} as the open cone $\{X: X \succ 0\}$, which
    makes the constraint $X \succ 0$ implicit. Let $r =
    \|A\|_{1\rightarrow 2}$, $X = \frac{1}{r}I$, and $t =
    r+\varepsilon$ for some $\varepsilon > 0$. Then the affine
    constraints \eqref{eq:ellips-enclose} are satisfied exactly, and
    the constraints \eqref{eq:ellips-width} are satisfied with slack
    since $\varepsilon > 0$. Moreover, by
    Lemma~\ref{lm:ellips-program}, all the constraints and the
    objective function are convex. Therefore,
    \eqref{eq:ellips-obj}--\eqref{eq:ellips-enclose} satisfies
    Slater's condition, and consequently strong duality holds.

  The Lagrange dual function for
  \eqref{eq:ellips-obj}--\eqref{eq:ellips-enclose} is
  \begin{equation*}
    g(p,q) = \inf_{t, X \succ 0}{t + \sum_{i = 1}^m{p_i(e_i^TX^{-1}e_i
        - t)} + \sum_{j = 1}^n{q_j(a_j^TXa_j - 1)} },
  \end{equation*}
  with dual variables $p \in \R^m$ and $q \in \R^n$, $p, q \geq
  0$. Equivalently, writing $p$ as a diagonal matrix $P \in
  \R^{m\times m}$, $P \succeq 0$, $q$ as a diagonal matrix $R \in
  \R^{n\times n}$, $R \succeq 0$, we have $g(P,R) = \inf_{t,X\succ
    0}{t + \tr(PX^{-1}) - \tr(tP) + \tr(ARA^TX) - \tr(R)}$.  If $\tr(P)
    \neq 1$, then $g(P,R) = -\infty$, since we can take $t$ to
    $-\infty$ while keeping $X$ fixed. On the other hand, for $\tr(P)
    = 1$, the dual function simplifies to
  \begin{equation}\label{eq:g-raw}
    g(P,R) = \inf_{X \succ 0}{\tr(PX^{-1}) + \tr(ARA^TX) - \tr(R)}.
  \end{equation}
  Since $X\succ 0$ implies $X^{-1}\succ 0$, $g(P,R) \geq -\tr(R) >
  -\infty$ whenever $\tr(P) = 1$. Therefore, $g(P,R)$ is continuous
  over the set of diagonal positive semidefinite $P$, $R$ such that
  $\tr(P) = 1$. For the rest of the proof we assume that $P$ and
  $ARA^T$ are rank $m$. This is without loss of generality by the
  continuity of $g$ and because both assumptions can be satisfied by
  adding arbitrarily small perturbations to $P$ and $R$. (Here we use
  the fact that $A$ is rank $m$.)
  
  After differentiating the right hand side of \eqref{eq:g-raw} with
  respect to $X$, we get the first-order optimality condition 
  \begin{equation}
    \label{eq:fo-optimality}
    X^{-1}PX^{-1} =ARA^T.
  \end{equation}
  Multiplying by $P^{1/2}$ on the left and the right and taking square
  roots gives the equivalent condition $P^{1/2}X^{-1}P^{1/2} =
  (P^{1/2}ARA^TP^{1/2})^{1/2}$. This equation has a unique solution,
  since $P$ and $ARA^T$ were both assumed to be invertible. Since
  $\tr(PX^{-1}) = \tr(P^{1/2}X^{-1}P^{1/2})$ and also, by
  \eqref{eq:fo-optimality}, $\tr(ARA^TX) = \tr(X^{-1}P) =
  \tr(PX^{-1})$, we simplify $g(P,R)$ to
  \begin{equation}\label{eq:g-final}
    g(P,R) = 2\tr((P^{1/2}ARA^TP^{1/2})^{1/2}) - \tr(R) =
    2\|P^{1/2}AR^{1/2}\|_{S_1} - \tr(R).
  \end{equation}
  We showed that
  \eqref{eq:ellips-obj}--\eqref{eq:ellips-enclose} satisfies Slater's
  condition and therefore strong duality holds, so by Theorem~\ref{thm:slater} and
  Lemma~\ref{lm:ellips-program}, 
  %\begin{equation*}
  $\mu^2 = \max\{g(P,R): \tr(P) = 1, P,R \succeq 0, \text{ diagonal}\}$.
  %\end{equation*}
  Let us define new variables $Q$ and $c$, where $c = \tr(R)$ and $Q =
  R/c$. Then we can re-write $g(P,R)$ as 
  \begin{equation*}
    g(P,R) = g(P,Q,c) = 2\|P^{1/2}A(cQ)^{1/2}\|_{S_1} - \tr(cQ) =
    2\sqrt{c}\|P^{1/2}AQ^{1/2}\|_{S_1} - c. 
  \end{equation*}
  From the first-order optimality condition $\frac{dg}{dc} =
  0$, we see that maximum of $g(P,Q,c)$ is achieved when $c =
  \|P^{1/2}AQ^{1/2}\|_{S_1}^2$ and is equal to
  $\|P^{1/2}AQ^{1/2}\|_{S_1}^2$. Therefore, maximizing $g(P,R)$ over
  diagonal positive semidefinite $P$ and $R$ such that $\tr(P) = 1$ is
  equivalent to the optimization problem
  \eqref{eq:nuclear-obj2}--\eqref{eq:nuclear-pos2}. This completes the proof.}
\end{proof}

\subsection{Spectral Lower Bounds via Restricted Invertibility}

In this subsection we relate the dual formulations of the
min-ellipsoid problem from Section~\ref{sect:min-ellips} to the dual
of vector discrepancy, and $\specLB$ in particular. The connection is
via the restricted invertibility principle and gives our main lower
bounds on hereditary (vector) discrepancy.

\begin{lemma}\label{lm:bt-lb}
  Let $A$ be an $m$ by $n$ real matrix, and let $Q \succeq 0$ be a diagonal
  matrix such that $\tr(Q) = 1$. Then there exists a submatrix $A|_S$
  of $A$ such that $|S|\sigma_{\min}(A|_S)^2 \geq
  \frac{c^2\|AQ^{1/2}\|_{S_1}^2}{(\log m)^2}$, for a universal
  constant $c > 0$. Moreover, given $A$ as input, $S$ can be
  computed in deterministic polynomial time. 
\end{lemma}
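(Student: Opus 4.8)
The plan is to combine a dyadic decomposition of the spectrum of $AQ^{1/2}$ with the weighted restricted invertibility principle, Theorem~\ref{thm:bt}. Write $\sigma_i = \sigma_i(AQ^{1/2})$ and $M = \|AQ^{1/2}\|_{S_1} = \sum_i \sigma_i$; we may assume $M > 0$ (otherwise the bound is trivial) and $m \ge 2$. First I would observe that at most $m$ of the $\sigma_i$ can be nonzero, so those with $\sigma_i \le M/(2m)$ contribute at most $M/2$ to $\sum_i\sigma_i$, and hence the $\sigma_i$ in the range $(M/(2m),\,M]$ still sum to at least $M/2$. Partitioning this range into the $L = \lceil\log_2(2m)\rceil = O(\log m)$ dyadic bands $I_l = \{i : M2^{-(l+1)} < \sigma_i \le M2^{-l}\}$ and applying pigeonhole, there is a band $I_{l^\ast}$ with $\sum_{i\in I_{l^\ast}}\sigma_i \ge M/(2L) = \Omega(M/\log m)$. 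Put $k = |I_{l^\ast}|$ and $\tau = M2^{-(l^\ast+1)}$; then $\tau < \sigma_i \le 2\tau$ for all $i \in I_{l^\ast}$, so $k\tau \ge \tfrac12\sum_{i\in I_{l^\ast}}\sigma_i = \Omega(M/\log m)$.

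The heart of the argument is to pass from the full matrix to this single band. Let $\Pi \in \mathcal{P}_k$ be the orthogonal projection onto the span of the left singular vectors of $AQ^{1/2}$ with indices in $I_{l^\ast}$, and set $B = \Pi A Q^{1/2} = (\Pi A)Q^{1/2}$. The nonzero singular values of $B$ are exactly $\{\sigma_i : i\in I_{l^\ast}\}$, all lying in $(\tau, 2\tau]$, so $\|B\|_{HS}^2 \ge k\tau^2$ while $\|B\|_2^2 \le 4\tau^2$; in particular the robust rank $\|B\|_{HS}^2/\|B\|_2^2$ is at least $k/4$. Assuming for now $k \ge 16$, I would apply Theorem~\ref{thm:bt} to the matrix $\Pi A$, the same diagonal $Q$, and $\epsilon = \tfrac12$: since $\lfloor k/16\rfloor \le \tfrac14\,\|B\|_{HS}^2/\|B\|_2^2$, the theorem produces (in deterministic polynomial time) a set $S$ with $|S| = \lfloor k/16\rfloor$ and $\sigma_{\min}((\Pi A)|_S)^2 \ge \tfrac14\|B\|_{HS}^2 \ge \tfrac14 k\tau^2$. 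As $\|A|_S x\|_2 \ge \|\Pi A|_S x\|_2$ for every vector $x$, we have $\sigma_{\min}(A|_S) \ge \sigma_{\min}((\Pi A)|_S)$, and therefore $|S|\,\sigma_{\min}(A|_S)^2 \ge \tfrac{k}{32}\cdot\tfrac14 k\tau^2 = \tfrac{1}{128}(k\tau)^2 = \Omega\!\left(M^2/(\log m)^2\right)$, which is the desired bound.

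Two loose ends remain. In the small case $k < 16$ we have $\Omega(M/\log m) \le \sum_{i\in I_{l^\ast}}\sigma_i \le 2k\tau < 32\tau$, so $\tau = \Omega(M/\log m)$; and since $\tr(Q) = 1$ gives $\max_j\|a_j\|_2^2 \ge \|AQ^{1/2}\|_{HS}^2 \ge \|AQ^{1/2}\|_2^2 = \sigma_1^2 \ge \tau^2$, the singleton $S = \{j^\ast\}$ for a column $a_{j^\ast}$ of largest Euclidean norm already gives $|S|\,\sigma_{\min}(A|_S)^2 = \|a_{j^\ast}\|_2^2 = \Omega(M^2/(\log m)^2)$. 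For computability, the SVD of $AQ^{1/2}$, the band $I_{l^\ast}$ of largest partial sum, the projection $\Pi$, and the set $S$ from Theorem~\ref{thm:bt} are all computable deterministically in polynomial time. The main obstacle this route is designed to sidestep is that a direct application of restricted invertibility to $AQ^{1/2}$ is far too lossy when the spectrum is spread over many scales: both $\|AQ^{1/2}\|_{HS}^2$ and the robust rank can be polynomially smaller than $M^2 = \|AQ^{1/2}\|_{S_1}^2$, whereas within a single dyadic band the HS-norm, spectral-norm and nuclear-norm quantities are all comparable, so the only loss is the $O(\log m)$ from pigeonhole — which becomes the $(\log m)^2$ in the statement once it is squared against $\sigma_{\min}(A|_S)^2$.
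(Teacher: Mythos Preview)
Your proof is correct and follows essentially the same route as the paper: a dyadic decomposition of the singular values of $AQ^{1/2}$, pigeonhole to find a band contributing $\Omega(\|AQ^{1/2}\|_{S_1}/\log m)$ to the nuclear norm, projection onto the corresponding left singular subspace, and then the weighted restricted invertibility principle (Theorem~\ref{thm:bt}) applied to $\Pi A$ with the same $Q$. Your version is in fact slightly more careful than the paper's, which does not separately treat the case of a small band and is less explicit about the constants.
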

\begin{proof}
  By homogeneity of the nuclear norm and the smallest singular value,
  it suffices to show that if $\|AQ^{1/2}\|_{S_1}^2 = 1$, then
  $|S|\sigma_{\min}(A|_S)^2 \geq \frac{c^2}{(\log m)^2}$ for a set $S
  \subseteq [n]$. 

  Let $T_k = \{i \in [m]: 2^{-k-1}\leq \sigma_i(AQ^{1/2}) \leq
  2^{-k}\}$ for an integer $0 \leq k\leq \log_2 m$, and $R = \{i \in
  [m]: \sigma_i(AQ^{1/2}) \leq \frac{1}{2m}\}$. Then 
  $$
  \sum_{k = 0}^{\log_2 m}\sum_{i \in T_k}{\sigma_i(AQ^{1/2})} 
  = 1 - \sum_{i \in R}{\sigma_i(AQ^{1/2})}\geq  1/2,
  $$
  since $|R| \leq m$. Therefore, by averaging, there exists a $k^*$
  such that $\sum_{i \in T_{k^*}}{\sigma_i(AQ^{1/2})} \geq
  \frac{1}{2\log_2 m}$. Let $\Pi$ be the projection operator onto the
  span of the left singular vectors of $AQ^{1/2}$ corresponding to the
  singular values $\sigma_i(AQ^{1/2})$ for $i \in T_{k^*}$. Setting
  $\tau = \frac{1}{2\log_2 m}$ and $r = |T_{k^*}| = \rank(\Pi
  AQ^{1/2})$, we have $\|\Pi AQ^{1/2}\|_{S_1} \geq \tau$ by the choice
  of $k^*$, and $\|\Pi AQ^{1/2}\|_2 \leq 2\tau/r$ because all values
  of $\Pi AQ^{1/2}$ are within a factor of $2$ from each
  other. Finally, applying Cauchy-Schwarz to the singular values of
  $\Pi AQ^{1/2}$, we have that $\|\Pi AQ^{1/2}\|_{HS} \geq
  \tau/r^{1/2}$. By Theorem~\ref{thm:bt} applied with $\epsilon =
  \frac{1}{2}$, there exists a set $S$ of size $|S| \geq r/16$ such
  that $\sigma_{\min}(\Pi A|_S)^2 \geq \tau^2/4r$, implying that
  \begin{equation*}
    |S|\sigma_{\min}(A|_S)^2 \geq |S|\sigma_{\min}(\Pi A|_S)^2 \geq \frac{1}{64}\tau^2.
  \end{equation*}
  Moreover, $S$ can be computed in deterministic polynomial time.
\end{proof}

\begin{theorem}\label{thm:main-lb}
  Let $\mu= \min \{\|E\|_\infty: \forall j \in [n]: a_j \in E\}$ for a rank $m$
  matrix $A = (a_j)_{j = 1}^n$. Then
  $$
  \mu = O(\log m)\ \hvecdisc(A).
  $$
  Moreover, we can compute in deterministic polynomial time a set $S
  \subseteq [n]$ such that $\mu =  O(\log
  m)\vecdisc(A|_S)$. 
\end{theorem}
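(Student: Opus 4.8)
The plan is to route through the convex dual established in Theorem~\ref{thm:nuclear}. That theorem tells us $\mu^2 = \max \|P^{1/2}AQ^{1/2}\|_{S_1}^2$ over diagonal $P,Q \succeq 0$ with $\tr(P) = \tr(Q) = 1$; fix a maximizer $(P,Q)$, so that $\|P^{1/2}AQ^{1/2}\|_{S_1} = \mu$. The idea is that a large nuclear norm of $P^{1/2}AQ^{1/2}$ forces some column submatrix of $P^{1/2}A$ to have a large least singular value, and such a submatrix is exactly a dual certificate for the vector discrepancy of the corresponding submatrix of $A$ via Lemma~\ref{lm:speclb}.

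Concretely, I would apply Lemma~\ref{lm:bt-lb} to the row-reweighted matrix $B \triangleq P^{1/2}A$ with the diagonal column weight $Q$ (which satisfies $Q \succeq 0$, $\tr(Q) = 1$). This yields a set $S \subseteq [n]$ with $|S|\,\sigma_{\min}(B|_S)^2 \geq c^2 \|BQ^{1/2}\|_{S_1}^2 / (\log m)^2 = c^2\mu^2/(\log m)^2$. Since $P$ acts on the rows, $B|_S = (P^{1/2}A)|_S = P^{1/2}(A|_S)$, so $\sqrt{|S|}\,\sigma_{\min}(P^{1/2}(A|_S)) \geq c\mu/\log m$. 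Now apply Lemma~\ref{lm:speclb} to the $m \times |S|$ matrix $A|_S$ with the same diagonal matrix $P$ (still $P \succeq 0$, $\tr(P) = 1$): this gives $\vecdisc(A|_S) \geq \sqrt{|S|}\,\sigma_{\min}(P^{1/2}(A|_S)) \geq c\mu/\log m$. Rearranging, $\mu \leq \frac{1}{c}(\log m)\,\vecdisc(A|_S)$, and since $\vecdisc(A|_S) \leq \hvecdisc(A)$ we get $\mu = O(\log m)\,\hvecdisc(A)$; this simultaneously exhibits the promised submatrix $A|_S$.

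For the algorithmic claim, note that the dual program \eqref{eq:nuclear-obj2}--\eqref{eq:nuclear-pos2} is equivalent to the Lagrange dual constructed in the proof of Theorem~\ref{thm:nuclear}, hence is a concave maximization over a convex compact set and can be solved to within any prescribed multiplicative accuracy in polynomial time; in particular we can compute diagonal $P,Q \succeq 0$ with $\tr(P) = \tr(Q) = 1$ and $\|P^{1/2}AQ^{1/2}\|_{S_1} \geq \mu/2$, which costs only an extra constant factor. Forming $B = P^{1/2}A$ and running the deterministic algorithm of Lemma~\ref{lm:bt-lb} (which itself invokes the Spielman--Srivastava construction behind Theorem~\ref{thm:bt}) then produces $S$ in deterministic polynomial time, and the chain above shows $\mu = O(\log m)\,\vecdisc(A|_S)$.

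The main obstacle is really just getting the reductions to line up: Lemma~\ref{lm:bt-lb} is phrased for a single generic matrix with a single column weight, so one must be disciplined about applying it to the \emph{row-reweighted} matrix $P^{1/2}A$ rather than to $A$, and then recognizing that $\sigma_{\min}$ of a column-submatrix of $P^{1/2}A$ is precisely the quantity appearing in $\specLB$ and in the spectral lower bound Lemma~\ref{lm:speclb}. The homogeneity normalization is already absorbed into the statement of Lemma~\ref{lm:bt-lb}, and the only point needing a word of justification on the algorithmic side is that the nuclear-norm dual is efficiently solvable, which holds because it is a Lagrange dual and therefore concave.
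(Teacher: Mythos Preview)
Your proof is correct and follows essentially the same route as the paper: take an optimal $(P,Q)$ from Theorem~\ref{thm:nuclear}, apply Lemma~\ref{lm:bt-lb} to $P^{1/2}A$ with column weight $Q$ to extract $S$ with $\sqrt{|S|}\,\sigma_{\min}(P^{1/2}A|_S) \geq c\mu/\log m$, and finish via Lemma~\ref{lm:speclb}. Your treatment of the algorithmic claim is in fact slightly more careful than the paper's, which simply assumes access to optimal $P,Q$ without commenting on approximate solvability of the dual; your observation that a constant-factor approximation to the dual suffices is a worthwhile addition.
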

\begin{proof}
  Let $P$ and $Q$ be optimal solutions for
  \eqref{eq:nuclear-obj2}-\eqref{eq:nuclear-pos2}. By
  Theorem~\ref{thm:nuclear}, $\mu =
  \|P^{1/2}AQ^{1/2}\|_{S_1}$. Then, by Lemma~\ref{lm:bt-lb}, applied
  to the matrices $P^{1/2}A$ and $Q$, there exists a set $S \subseteq
  [n]$, computable in deterministic polynomial time, such that 
  \begin{equation}\label{eq:main-lb}
    \specLB(A) \geq \sqrt{|S|}\sigma_{\min}(P^{1/2}A|_S) \geq \frac{c
      \|P^{1/2}AQ^{1/2}\|_{S_1}}{\log m} = \frac{c\mu}{\log m}.
  \end{equation}
\end{proof}

The determinant lower bound of Lovasz, Spencer, and
Vesztergombi~\cite{LSV} is equal to the maximum of $|\det A_{S,T}|$
over all submatrices $A_{S,T}$ of $A$. We note that up to the log
factor, the lower bound~\eqref{eq:main-lb} is at least as
strong. In particular, assume the determinant lower bound is maximized by
a $k\times k$ submatrix $A_{S,T}$ induced by a subset $S$ of the rows
and a subset $T$ of the columns. Then we can set $P =
\frac{1}{k}\diag(\mathbf{1}_S)$  and $Q =
\frac{1}{k}\diag(\mathbf{1}_T)$, and $\|P^{1/2}AQ^{1/2}\|_{S_1} =
\frac{1}{k} \|A_{S,T}\|_{S_1}$ is at least as large as $|\det
A_{S,T}|$ by the geometric mean-arithmetic mean inequality applied to
the singular valus of $A_{S,T}$.

\section{The Approximation Algorithm}

We are now ready to give our approximation algorithm for hereditary
vector discrepancy and hereditary discrepancy. In fact, the algorithm
is a straightforward consequence of the upper and lower bounds we
proved in the prior sections.

\begin{theorem}[Theorems~\ref{thm:vecdisc-apx},\ref{thm:herdisc-apx},
  restated] 
  There exists a polynomial time algorithm that, on input an $m \times
  n$ real matrix $A = (a_j)_{j=1}^n$, computes a value $\alpha$ such
  that the following inequalities hold
  \begin{align*}
    \alpha &\leq \hvecdisc(A) \leq O(\log m) \cdot \alpha\\
    \alpha &\leq \herdisc(A) \leq O(\log^{3/2} m) \cdot \alpha
  \end{align*}
  Moreover, the algorithm finds a submatrix $A|_S$ of $A$,
  such that $\alpha \leq \vecdisc(A|_S)$.
\end{theorem}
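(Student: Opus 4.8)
The plan is to output $\alpha$ equal to $\mu/\Theta(\log m)$, where $\mu = \min\{\|E\|_\infty : \forall j\in[n]:\ a_j \in E\}$ is the minimum width of an ellipsoid containing the columns of $A$, and then read off all four inequalities directly from Theorems~\ref{thm:main-ub} and~\ref{thm:main-lb}, which already carry the entire analytic burden. First I would dispose of a technicality: the two theorems I invoke assume $\rank(A) = m$, so if $A$ is rank-deficient I replace it by $A$ together with $m - \rank(A)$ extra columns $\delta e_i$ chosen to span a complement of $\range(A)$, with $\delta$ small enough that $\hvecdisc$, $\herdisc$ of $A$ and $\vecdisc$ of every relevant submatrix change by at most a $(1+o(1))$ factor; this also makes the ellipsoid program \eqref{eq:ellips-obj}--\eqref{eq:ellips-enclose} well-posed (Lemma~\ref{lm:ellips-program}). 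Henceforth assume $\rank(A) = m$.

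Next, the algorithm solves the convex program \eqref{eq:ellips-obj}--\eqref{eq:ellips-enclose} -- convex by Lemma~\ref{lm:ellips-program}, with a bounded and strictly feasible domain -- to within a multiplicative factor, say $4/3$, on the optimal value $\mu^2$ (equivalently, on the dual nuclear-norm value \eqref{eq:nuclear-obj2}), using the ellipsoid method or an interior point method in time polynomial in the input size; this yields $\tilde\mu$ with $\mu \le \tilde\mu \le \tfrac{2}{\sqrt 3}\,\mu \le 2\mu$. In parallel it runs the procedure of Theorem~\ref{thm:main-lb}, which (via Theorem~\ref{thm:nuclear}, Lemma~\ref{lm:bt-lb}, and the constructive restricted invertibility of Theorem~\ref{thm:bt}) computes in deterministic polynomial time a set $S \subseteq [n]$ with $\vecdisc(A|_S) \ge c\,\mu/\log m$ for the universal constant $c$ of that theorem. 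Finally, set $\alpha := \tilde\mu/(2c\log m)$, so that $\alpha \le \mu/(c\log m) \le \vecdisc(A|_S)$; this establishes the ``moreover'' clause.

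It remains to verify the four bounds. Since $\vecdisc(A|_S) \le \hvecdisc(A)$ by definition of hereditary vector discrepancy, and $\vecdisc(A|_S) \le \disc(A|_S) \le \herdisc(A)$, the lower bounds $\alpha \le \hvecdisc(A)$ and $\alpha \le \herdisc(A)$ follow from $\alpha \le \vecdisc(A|_S)$. For the upper bounds, Theorem~\ref{thm:main-ub} gives $\hvecdisc(A) \le \|E\|_\infty = \mu \le \tilde\mu = 2c\log m \cdot \alpha = O(\log m)\,\alpha$ and $\herdisc(A) = O(\sqrt{\log m})\,\mu \le O(\sqrt{\log m})\,\tilde\mu = O(\log^{3/2} m)\,\alpha$; undoing the rank perturbation only affects these by the harmless $(1+o(1))$ factor. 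I do not expect a genuine obstacle here -- the statement is essentially bookkeeping on top of Theorems~\ref{thm:main-ub} and~\ref{thm:main-lb} -- so the only points needing mild care are the algorithmic ones: confirming the ellipsoid program is solvable to multiplicative accuracy in polynomial time (it is, being convex with a strictly feasible, bounded, well-conditioned domain), making sure the constant hidden in $\alpha$ matches the constant certified by the $S$ from Theorem~\ref{thm:main-lb}, and handling the rank-deficient case cleanly.
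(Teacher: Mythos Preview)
Your proposal is correct and follows essentially the same approach as the paper: reduce to the full-rank case by a tiny perturbation, compute (an approximation to) $\mu = \min\{\|E\|_\infty:\ a_j \in E\}$ via the convex program~\eqref{eq:ellips-obj}--\eqref{eq:ellips-enclose}, output $\alpha = \Theta(\mu/\log m)$, and read off all four inequalities from Theorems~\ref{thm:main-ub} and~\ref{thm:main-lb}, with $S$ supplied by the latter. If anything, you are slightly more careful than the paper in tracking the multiplicative error from solving the convex program only approximately.
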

\begin{proof}
  We first ensure that the matrix $A$ is of rank $m$ by adding a tiny
  full rank perturbation to it, and adding extra columns if
  necessary\footnote{There are other, more numerically stable ways to
    reduce to the full rank case, e.g.~by projecting $A$ onto its
    range and modifying the norms we consider accordingly. We choose
    the perturbation approach for simplicity.}. By making the
  perturbation small enough, we can ensure that it affects
  $\herdisc(A)$ and $\hvecdisc(A)$ negligibly. Let $\mu =
  \min\{\|E\|_\infty: \forall j \in [n]: a_j \in E\}$.  The value
  $\alpha$ we output is $\alpha = \mu/(C\log m)$, where $C$
  is a sufficiently large constant so that the asymptotic expression
  in Theorem~\ref{thm:main-lb} holds. The approximation guarantees
  follow from Theorems~\ref{thm:main-ub} and~\ref{thm:main-lb}, and
  $S$ is computed as in Theorem~\ref{thm:main-lb}.

  To compute $\alpha$ in polynomial time, we solve
  \eqref{eq:ellips-obj}--\eqref{eq:ellips-enclose}. By
  Lemma~\ref{lm:ellips-program}, this is a convex minimization
  problem, and as such can be solved using the ellipsoid method up to
  an $\epsilon$-approximation in time polynomial in the input size and
  in $\log \epsilon^{-1}$. The optimal value is equal to
  $\mu$ by Lemma~\ref{lm:ellips-program}, and, therefore,
  we can compute an arbitrarily good approximation to $\alpha$ in
  polynomial time.  
\end{proof}

{
\section{A Geometric Consequence}

In this section we derive a geometric consequence of
Lemma~\ref{lm:bt-lb}. Specifically, we prove that any convex body $K$ is
contained in an ellipsoid whose Gaussian width is bounded in terms of
the Kolmogorov widths of $K$. While not necessary for our
approximation algorithm, this result may be of independent interest.

Let us first define the Kolmogorov widths for a convex body $K$.
\begin{definition}
  The Kolmogorov width $d_k(K)$ of a symmetric convex body $K\subseteq
  \R^n$ is equal to $d_k(K) \triangleq \min_{\Pi \in
    \mathcal{P}_{n-k+1}}{\|\Pi K\|_2}$, i.e. the minimum radius (in $\ell_2$) of any
  projection of $K$ of co-dimension $k-1$.
\end{definition}
We note that Kolmogorov width is more generally defined for linear
operators between Banach spaces, and the definition above is the
special case of the Kolmogorov width of the identity operator $I:X
\rightarrow \ell_2$, where $X$ is a finite-dimensional Banach space
with unit ball $K$.

Lemma~\ref{lm:bt-lb} implies the following result.
\begin{theorem}\label{thm:geometric}
  Let $K \subseteq \R^n$ be a
  symmetric convex body. There exists an ellipsoid $E=FB_2^n$
  containing $K$ such that
  \begin{equation*}
    \|F\|_{HS} \leq (C\log n)\max_{k = 1}^n{\sqrt{k} d_k(K)} ,
  \end{equation*}
  for a universal constant $C$.
\end{theorem}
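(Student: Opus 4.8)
The plan is to characterize $\min\{\|F\|_{HS} : K \subseteq FB_2^n\}$ by a convex‑duality argument parallel to Theorem~\ref{thm:nuclear}, and then estimate the resulting dual quantity by means of Lemma~\ref{lm:bt-lb}. We may assume $K$ spans $\R^n$ (otherwise restrict to $\vspan K$, or add an arbitrarily small ball; note $d_k(K)=0$ once $k$ exceeds $\dim\vspan K$, so neither side changes). Writing $M = (FF^T)^{-1}$, containment $K\subseteq FB_2^n$ is equivalent to $\langle M, xx^T\rangle\le 1$ for all $x\in K$, and $\|F\|_{HS}^2 = \tr(M^{-1})$; so we study $\min\{\tr(M^{-1}) : M\succ 0,\ \langle M,xx^T\rangle\le 1\ \forall x\in K\}$. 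This is a convex program: the constraint is linear over the compact convex set $\conv\{xx^T:x\in K\}$, so by Carath\'eodory it reduces to a semidefinite program with at most $\binom{n+1}{2}+1$ linear constraints, and $M=\varepsilon I$ is strictly feasible for small $\varepsilon$, so Slater's condition and strong duality hold. Exactly as in the proof of Theorem~\ref{thm:nuclear}, differentiating the Lagrangian $\tr(M^{-1}) + \langle M, W\rangle - \nu(K)$, where $W=W_\nu\triangleq\int_K xx^T\,d\nu(x)$ for a dual measure $\nu\ge 0$ on $K$, gives the first‑order condition $M^{-2}=W$; substituting and optimizing over the total mass of $\nu$ shows the optimum equals $\sup\{\tr(W_\nu^{1/2})^2 : \nu\ \text{a probability measure on}\ K\}$. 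By Carath\'eodory we may take $\nu$ finitely supported, so $W_\nu = AQA^T$ with the columns of $A$ lying in $K$, $Q\succeq 0$ diagonal, $\tr(Q)=1$, and then $\tr(W_\nu^{1/2}) = \|AQ^{1/2}\|_{S_1}$.

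It thus suffices to prove $\|AQ^{1/2}\|_{S_1} = O(\log n)\,\max_k\sqrt k\, d_k(K)$ for every such pair $(A,Q)$. Apply Lemma~\ref{lm:bt-lb} to $A$ with weight matrix $Q$ (here the ambient dimension $m$ of that lemma is $n$): it yields a set $S\subseteq[n]$ with $\|AQ^{1/2}\|_{S_1}\le \tfrac{\log n}{c}\sqrt{|S|}\,\sigma_{\min}(A|_S)$. The crux is to bound $\sqrt{|S|}\,\sigma_{\min}(A|_S)$ by $O(\max_k\sqrt k\, d_k(K))$. Put $s=|S|$ and $B=(A|_S)(A|_S)^T=\sum_{j\in S}a_ja_j^T$, so $\sigma_{\min}(A|_S)^2=\lambda_s(B)$. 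By the Ky Fan variational formula for the smallest $n-k+1$ eigenvalues, for every $k$,
\[
\sum_{i\ge k}\lambda_i(B)=\min_{\Pi\in\mathcal{P}_{n-k+1}}\tr(\Pi B)=\min_{\Pi\in\mathcal{P}_{n-k+1}}\sum_{j\in S}\|\Pi a_j\|_2^2\le s\cdot\min_{\Pi\in\mathcal{P}_{n-k+1}}\|\Pi K\|_2^2=s\, d_k(K)^2,
\]
using $a_j\in K$. Taking $k=\lceil s/2\rceil$ and noting $\lambda_s(B)$ is the smallest of the $\ge s/2$ eigenvalues indexed $\lceil s/2\rceil,\dots,s$, we get $\tfrac{s}{2}\lambda_s(B)\le\sum_{i\ge\lceil s/2\rceil}\lambda_i(B)\le s\, d_{\lceil s/2\rceil}(K)^2$, i.e.\ $\sigma_{\min}(A|_S)\le\sqrt2\, d_{\lceil s/2\rceil}(K)$, hence $\sqrt{|S|}\,\sigma_{\min}(A|_S)\le\sqrt{2s}\, d_{\lceil s/2\rceil}(K)\le 2\,\max_k\sqrt k\, d_k(K)$. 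Combining with the output of Lemma~\ref{lm:bt-lb} gives $\|AQ^{1/2}\|_{S_1}\le\tfrac2c(\log n)\max_k\sqrt k\, d_k(K)$, hence the dual optimum, and therefore $\min\{\|F\|_{HS}:K\subseteq FB_2^n\}$, is $O(\log n)\max_k\sqrt k\, d_k(K)$, which is the theorem.

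The step I expect to be the main obstacle is precisely the bound on $\sigma_{\min}(A|_S)$. The naive estimate $\sigma_{\min}(A|_S)^2\le|S|\, d_{|S|}(K)^2$ (obtained by feeding a single test vector through the projection) is off by a factor of $|S|$ and only yields the weaker bound $O(\log n)\max_k k\, d_k(K)$ — losing, e.g., a full $\sqrt n$ factor on a Euclidean ball. Recovering the $\sqrt k$ requires the observation that $\sigma_{\min}(A|_S)^2$ is controlled not by one eigenvalue but by the \emph{average of the bottom half} of the eigenvalues of $\sum_{j\in S}a_ja_j^T$, whose sum is a tail Ky Fan quantity bounded by $|S|\,d_{\lceil|S|/2\rceil}(K)^2$; the extra factor of $|S|$ there cancels the one coming from the $|S|$‑fold sum over $j\in S$. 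Everything else — the duality setup, Slater, Carath\'eodory — is routine and mirrors the treatment of Theorem~\ref{thm:nuclear}. (One can alternatively bypass Lemma~\ref{lm:bt-lb} and apply the Ky Fan estimate directly to $W_\nu$, getting $\lambda_k(W_\nu)\le\tfrac{4}{k^2}(\max_j\sqrt j\, d_j(K))^2$ and summing $\tr(W_\nu^{1/2})=\sum_k\sqrt{\lambda_k(W_\nu)}=O(\log n)\max_j\sqrt j\, d_j(K)$ via the harmonic series; I would present the version through Lemma~\ref{lm:bt-lb} to keep the narrative consistent with Section~\ref{sect:min-ellips}.)
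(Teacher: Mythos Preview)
Your proposal is correct and follows essentially the same route as the paper: set up the min-trace ellipsoid problem, dualize to $\max\|AQ^{1/2}\|_{S_1}^2$ over probability weights on points of $K$, apply Lemma~\ref{lm:bt-lb}, and then show $\sqrt{|S|}\,\sigma_{\min}(A|_S)\le 2\max_k\sqrt{k}\,d_k(K)$ via an eigenvalue argument with $k=\lceil|S|/2\rceil$. The only cosmetic differences are that the paper discretizes $K$ with a net rather than a measure plus Carath\'eodory, and that the paper phrases the key eigenvalue step through the Cauchy interlacing theorem (bounding $\lambda_k(U^TMU)\ge\lambda_s(M)$ and then averaging) rather than through the Ky Fan tail formula; both arguments produce the identical inequality $d_{\lceil s/2\rceil}(K)^2\ge\tfrac12\sigma_{\min}(A|_S)^2$.
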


The proof of the result relies on Lemma~\ref{lm:bt-lb} and an
optimization problem over ellipsoids containing $K$ that is closely
related to the problem of minimizing width in coordinate directions,
discussed in prior sections. Let $v_1, \ldots, v_N$ be points in
$\R^n$, and consider the problem of minizing $\|F\|_{HS}^2$ subject to
$v_1, \ldots, v_N \in E = FB_2^n$. The problem can be formulated as
\begin{align}
  &\text{Minimize } \tr(X^{-1}) \label{eq:trace-obj}
  \text{ s.t. }\\
  &X\succ 0\\
  &\forall i \in [N]: v_i^TXv_i \leq 1.\label{eq:enclose-tr}
\end{align}
The constraints \eqref{eq:enclose-tr} are affine, and the convexity of
the objective \eqref{eq:trace-obj} follows from
Lemma~\ref{lm:inverse-convex}. An argument analogous to the one in the
proof of Theorem~\ref{thm:nuclear} shows that the Lagrange dual
function for \eqref{eq:trace-obj}--\eqref{eq:enclose-tr} is
\begin{equation*}
  g(R) = \|VR^{1/2}\|_{S_1} - \tr(R),
\end{equation*}
where $V$ is the matrix whose columns are $v_1, \ldots, v_N$, and $R$
is a non-negative $N\times N$ diagonal matrix. Again analogously to
Theorem~\ref{thm:nuclear}, the Lagrange dual problem is to maximize
$g(R)$ over all non-negative diagonal $R$, and by strong duality we have the equality
\begin{align}
\min\{\|F\|_{HS}^2: \forall i \in [N]\ v_i \in E = FB_2^n\} &= \max\{g(R): R \succeq 0,
  \text{diagonal}\} \label{eq:mintrace-dual}\\
  &= \max\{ \|VQ^{1/2}\|^2_{S_1}: Q \succeq 0,
    \text{diagonal}, \tr(Q) = 1\}. \notag
\end{align}

\junk{The optimization problem \eqref{eq:trace-obj}--\eqref{eq:enclose-tr} can
be formulated equivalently as an optimization problem over a compact
set, and therefore the minimum is achieved. We call the minimizing
ellipsoid the \emph{min-trace ellipsoid} for $v_1, \ldots, v_N$.}

\vspace{1em}
\begin{proofof}{Theorem~\ref{thm:geometric}}
  Let $v_1, \ldots, v_N$ be chosen to form a sufficiently dense net on
  the boundary of $K$ and let $E = FB_2^n$ be the ellipsoid containing
  $v_1, \ldots, v_N$ that minimizes $\|F\|_{HS}^2$; by taking $N$
  sufficiently large, we can ensure that $K \subseteq (1+\epsilon)E$
  for an arbitrary small $\epsilon$.

  Let $V$ be the matrix whose columns are the points $v_1, \ldots,
  v_N$. By \eqref{eq:mintrace-dual} and Lemma~\ref{lm:bt-lb} (with $V$ used
  in the role of $A$), there exists a set  $S \subseteq [N]$ and an
  absolute constant $c$ such that
  \begin{equation}
    \label{eq:matr-singval-lb}
      |S|\sigma_{\min}(V|_S)^2 \geq \frac{c}{(\log n)^2} \|F\|_{HS}^2.  
  \end{equation}
We claim that for $s = |S|$, $k = \lceil s/2
  \rceil$, and any $\Pi \in \mathcal{P}_{n-k+1}$ there exists an $i \in S$
  such that $\|\Pi v_i\|_2^2 \geq \sigma_{\min}(V|_S)^2/2$. This
  suffices to prove the theorem, since, together
  with \eqref{eq:matr-singval-lb}, it implies that
  $kd_{k}(K)^2 \geq \frac{c/2}{(\log n)^2} \|F\|_{HS}$.

  Define $M \triangleq (V|_S)(V|_S)^T$ and fix some $\Pi \in
  \mathcal{P}_{n-k+1}$.   By averaging, it suffices to
  show that 
  \[
  \frac{1}{s} \sum_{i \in S}{\|\Pi v_i\|_2^2} 
  = \frac{1}{s}\tr((V|_S)^T\Pi(V|_S)) \geq \sigma_{\min}(V|_S)^2/2. 
  \]
  Let $\Pi = UU^T$, where $U$ is a matrix
  with $n-k + 1$ mutually orthogonal unit columns.  Then, by the
  Cauchy interlace theorem (see Lemma~\ref{lm:interlace} in
  Appendix~\ref{app:vector-komlos}),
  \begin{equation*}
    \lambda_{k}(U^TMU) \geq \lambda_{2k-1}(M) \geq
    \lambda_{s}(M) =\sigma_{\min}(V|_S)^2. 
  \end{equation*}
  Therefore, we have
  \begin{align*}
    \frac{1}{s}\tr((V|_S)^T\Pi (V|_S)) &=
    \frac{1}{s}\tr((U^TV|_S)^T(U^TV|_S))
    =     \frac{1}{s}\tr((U^TV|_S)(U^TV|_S)^T)\\
    &= \frac{1}{s}\tr(U^TMU)
    \geq  \frac{k}{s}\lambda_k(U^TMU) \geq \frac{1}{2}\sigma_{\min}(V|_S)^2.
  \end{align*}
  As remarked above, this completes the proof of the theorem together
  with \eqref{eq:matr-singval-lb}. 
\end{proofof}
\vspace{1em}

The Hilbert-Schmidt norm $\|F\|_{HS}$ has several natural geometric
interpretations in terms of the ellipsoid $E = FB_2^n$. On one hand,
$\|F\|_{HS}^2$ is equal to the sum of squared lengths (in $\ell_2$) of
the major axes of $E$. By an easy calculation, $\|F\|_{HS}$ is also
equivalent up to constants to the norm $\ell^*(K) \triangleq \E
\|g\|_{K^\circ} = \E \max_{x \in K}{|\langle x, g\rangle |}$, where
$g$ is a standard $m$-dimensional Gaussian random variable (see,
e.g.~\cite{talagrand-chaining}). This quantity is also known as the
Gaussian width of $K$. Phrased in these terms,
Theorem~\ref{thm:geometric} shows that for any $n$-dimensional convex
symmetric $K$ there exists an ellipsoid $E$ containing $K$ such that 
\begin{equation}\label{eq:gauss-width}
  \ell^*(E) \leq (C\log n) \max_{k = 1}^n{\sqrt{k} d_k(K)}. 
\end{equation}
A qualitatively weaker bound follows from a theorem of Carl and
Dudley's chaining bound. Carl~\cite{carl-apxnum} showed
that for an absolute constant $C_1$, 
\[
\max_{k=1}^n{\sqrt{k} e_k(K)} \leq C_1 \max_{k =
  1}^n{\sqrt{k}d_k},
\]
where $e_k(K)$ is the $k$-th entropy number of
$K$, i.e.~the least $r$ such that $K$ can be covered by at most
$2^{k-1}$ copies of $rB_2^n$. Dudley's chaining
argument~\cite{dudley-chaining,talagrand-chaining} implies that there
exists a constant $C_2$ such that
$\ell^*(K) \leq (C_2\log n)\max_{k=1}^n{\sqrt{k} e_k(K)}$; combining the two
bounds we have that 
\begin{equation}\label{eq:carl-dudley}
  \ell^*(K) \leq (C_3\log n) \max_{k = 1}^n{\sqrt{k} d_k(K)},
\end{equation}
where $C_3 = C_1C_2$.
This is readily implied by \eqref{eq:gauss-width} (up to the value of
the constant), because $K
\subseteq E$ implies $\ell^*(K) \leq \ell^*(E)$. However, there are
examples where \eqref{eq:gauss-width} is near-tight while
\eqref{eq:carl-dudley} is loose. For example, for the $\ell_1^n$-ball
$B_1^n$, $\max_{k=1}^n \sqrt{k} d_k(B_1^n) = \Omega(\sqrt{n})$ and
$\ell^*(B_1^n) = \Theta(\sqrt{\log n})$, while for any ellipsoid $E$
containing $B_1$ we have $\ell^*(E) =\Omega(\sqrt{n})$, as can be seen
from \eqref{eq:mintrace-dual}.}
\section{Conclusion}

We gave an $O(\log^{3/2} n)$-approximation algorithm for the hereditary discrepancy of a matrix $A$, by approximately characterizing the hereditary vector discrepancy of a matrix in terms of a simple convex program: that of minimizing $\|E\|_\infty$ over all $E$ that contain the columns of $A$.

Our
lower bound is ``constructive'': we can construct in polynomial time a
submatrix $A|_S$ demonstrating the lower bound on $\hvecdisc(A)$ and
hence on $\herdisc(A)$. Our upper bound for $\hvecdisc(A)$ is also
``constructive'' in that the ellipsoid $E^*$ gives a recipe to
construct a vector solution to the $\vecdisc(A|_S)$ given any $S$. Our
$O(\log^{3/2})$ upper bound for $\herdisc(A)$ is however
non-constructive as it uses the result of Banaszczyk~\cite{bana},
whose proof does not yield an efficient algorithm for computing the
sign vector $x$. We
can however use the result of Bansal to algorithmically get a coloring
for any given $S$, at the cost of losing a factor of $\sqrt{\log n}$
in the approximation.

We leave open several questions of interest. One natural question is
whether our approximation ratios can be improved. The best known
hardness of approximating hereditary discrepancy is $2$, but we conjecture
that the hardness is superconstant. Another interesting
question is whether the guarantee for Bansal's algorithm
(Theorem~\ref{thm:bansal}) can be improved by a factor of
$O(\sqrt{\log m})$, which would make it tight. This question was
previously posed by Matou\v{s}ek~\cite{Matousek11}. Such an improvement would
also imply a constructive proof of Banaszczyk's theorem. A further
question concerns the complexity of computing $\herdisc(A)$
exactly. Deciding whether $\herdisc(A) \leq t$ for any $t$ is
naturally in $\Pi^{\mathsf{P}}_2$, but not know to be in
$\mathsf{NP}$. Is this problem complete for $\Pi^{\mathsf{P}}_2$? 

%\ktnote{Feel free to edit/cut as you feel appropriate.}

\confoption{}{\section*{Acknowledgements}

The first named author thanks Daniel Dadush for useful discussions,
and Assaf Naor for bringing up the question about a geometric
equivalent of Lemma~\ref{lm:bt-lb}.}

\bibliographystyle{alpha}
\bibliography{discrepancy}

\begin{thebibliography}{AGH13}

\bibitem[AGH13]{AustrinGH13}
Per Austrin, Venkatesan Guruswami, and Johan H{\aa}stad.
\newblock $(2+\epsilon)$-{SAT} is {NP}-hard.
\newblock In {\em ECCC}, 2013.

\bibitem[Ban98]{bana}
Wojciech Banaszczyk.
\newblock Balancing vectors and gaussian measures of n-dimensional convex
  bodies.
\newblock {\em Random Structures \& Algorithms}, 12(4):351--360, 1998.

\bibitem[Ban10]{Bansal10}
N.~Bansal.
\newblock Constructive algorithms for discrepancy minimization.
\newblock In {\em Foundations of Computer Science (FOCS), 2010 51st Annual IEEE
  Symposium on}, pages 3--10. IEEE, 2010.

\bibitem[BS95]{beck-sos}
J\'{o}zsef Beck and Vera~T. S\'{o}s.
\newblock Discrepancy theory.
\newblock In R.~L. Graham, M.~Gr\"{o}tschel, and L.~Lov\'{a}sz, editors, {\em
  Handbook of Combinatorics (vol. 2)}, pages 1405--1446. MIT Press, Cambridge,
  MA, USA, 1995.

\bibitem[BT87]{bour-tza}
J.~Bourgain and L.~Tzafriri.
\newblock Invertibility of large submatrices with applications to the geometry
  of banach spaces and harmonic analysis.
\newblock {\em Israel journal of mathematics}, 57(2):137--224, 1987.

\bibitem[Car81]{carl-apxnum}
Bernd Carl.
\newblock Entropy numbers, $s$-numbers, and eigenvalue problems.
\newblock {\em Journal of Functional Analysis}, 41(3):290--306, 1981.

\bibitem[Cha00]{Chazelle}
Bernard Chazelle.
\newblock {\em The Discrepancy Method: Randomness and Complexity}.
\newblock Cambridge University Press, 2000.

\bibitem[CNN11]{CNN}
M.~Charikar, A.~Newman, and A.~Nikolov.
\newblock Tight hardness results for minimizing discrepancy.
\newblock In {\em Proceedings of the Twenty-Second Annual ACM-SIAM Symposium on
  Discrete Algorithms}, pages 1607--1614. SIAM, 2011.

\bibitem[Dud67]{dudley-chaining}
Richard~M Dudley.
\newblock The sizes of compact subsets of hilbert space and continuity of
  gaussian processes.
\newblock {\em Journal of Functional Analysis}, 1(3):290--330, 1967.

\bibitem[GH62]{gh-h-tum}
Alain Ghouila-Houri.
\newblock Caract{\'e}risation des matrices totalement unimodulaires.
\newblock {\em CR Acad. Sci. Paris}, 254:1192--1194, 1962.

\bibitem[LM12]{lovettmeka}
Shachar Lovett and Raghu Meka.
\newblock Constructive discrepancy minimization by walking on the edges.
\newblock In {\em Foundations of Computer Science (FOCS), 2012 IEEE 53rd Annual
  Symposium on}, pages 61--67. IEEE, 2012.

\bibitem[LSV86]{LSV}
L.~Lov{\'a}sz, J.~Spencer, and K.~Vesztergombi.
\newblock Discrepancy of set-systems and matrices.
\newblock {\em European Journal of Combinatorics}, 7(2):151--160, 1986.

\bibitem[Mat13]{Matousek11}
Ji{\v{r}}{\'\i} Matou{\v{s}}ek.
\newblock The determinant bound for discrepancy is almost tight.
\newblock {\em Proceedings of the American Mathematical Society},
  141(2):451--460, 2013.

\bibitem[Mat14]{matousek-says}
Jiri Matou{\v{s}}ek.
\newblock Personal communication, 2014.

\bibitem[MN12]{halfspaces}
S.~Muthukrishnan and Aleksandar Nikolov.
\newblock Optimal private halfspace counting via discrepancy.
\newblock In {\em STOC '12: Proceedings of the 44th symposium on Theory of
  Computing}, pages 1285--1292, New York, NY, USA, 2012. ACM.

\bibitem[Nik13]{komlos-sdp}
Aleksandar Nikolov.
\newblock The {K\'{o}mlos} conjecture holds for vector colorings.
\newblock {\em Under submission.}, 2013.

\bibitem[NTZ13]{NTZ}
Aleksandar Nikolov, Kunal Talwar, and Li~Zhang.
\newblock The geometry of differential privacy: the sparse and approximate
  cases.
\newblock In {\em Proceedings of the 45th annual ACM symposium on Symposium on
  theory of computing}, STOC '13, pages 351--360, New York, NY, USA, 2013. ACM.

\bibitem[Rot13]{rothvoss-binpacking}
Thomas Rothvo{\ss}.
\newblock Approximating bin packing within {O}(log {OPT} * log log {OPT}) bins.
\newblock In {\em FOCS}, pages 20--29, 2013.

\bibitem[Rot14]{Rothvoss14-giann}
Thomas Rothvo{\ss}.
\newblock Constructive discrepancy minimization for convex sets.
\newblock {\em CoRR}, abs/1404.0339, 2014.

\bibitem[Sch72]{schmidt}
Wolfgang Schmidt.
\newblock Irregularities of distribution, vii.
\newblock {\em Acta Arithmetica}, 21(1):45--50, 1972.

\bibitem[Sey80]{seymour-tum}
Paul~D Seymour.
\newblock Decomposition of regular matroids.
\newblock {\em Journal of combinatorial theory, Series B}, 28(3):305--359,
  1980.

\bibitem[Spe85]{spencer-six}
Joel Spencer.
\newblock Six standard deviations suffice.
\newblock {\em Transactions of the American Mathematical Society},
  289(2):679--706, 1985.

\bibitem[SS10]{bt-constructive}
D.A. Spielman and N.~Srivastava.
\newblock An elementary proof of the restricted invertibility theorem.
\newblock {\em Israel Journal of Mathematics}, pages 1--9, 2010.

\bibitem[Tal05]{talagrand-chaining}
Michel Talagrand.
\newblock {\em The Generic Chaining: Upper and Lower Bounds for Stochastic
  Processes}.
\newblock Springer, 2005.

\bibitem[Ver01]{vershynin}
R.~Vershynin.
\newblock John's decompositions: Selecting a large part.
\newblock {\em Israel Journal of Mathematics}, 122(1):253--277, 2001.

\end{thebibliography}

%\newpage
\appendix

\junk{\section{The Spectral Lower Bound}

Here we present, for completeness a proof that least singular values
lower bound vector discrepancy
\begin{lemma}[Lemma~\ref{lm:speclb} restated]
  For any $m \times n$ matrix $A$, and any $m\times m$ diagonal matrix
  $P \geq 0$ with $\sum_{i = 1}^m{P_{ii}^2} = 1$, we have
  \begin{equation*}
    \vecdisc(A) \geq \sqrt{n} \sigma_{\min}(PA).
  \end{equation*}
\end{lemma}
\begin{proof}
  We will show that $\vecdisc(A) = \lambda$ implies the
  existence of a vector $x$ such that $\|x\|_2 = \sqrt{n}$ and
  $\|PAx\|_2 \leq \lambda$. The lemma then follows from the
  variational characterization of singular values.

  Let the vector discrepancy of $A$ be $\vecdisc(A) =
  \lambda$ and let $u_1, \ldots, u_n$ be vectors that achieve this
  value, i.e.
  \begin{align*}
    &\max_{i = 1}^m \left\| \sum_{j = 1}^n{A_{ij} u_j} \right\|_2 = \lambda\\
    &\forall j \in [n]: \|u_j\|_2^2 = 1. 
  \end{align*}
  Let $g \in \R^n$ be an $n$-dimensional standard Gaussian, i.e.~for
  each $i$, $g_i \sim N(0,1)$, where $N(0,1)$ is the one dimensional
  standard Gaussian distribution. Define a vector $\bar{x} \in \R^n$
  by $\bar{x}_i = \langle g, u_i \rangle$. By standard stability properties of
  the Gaussian distribution and H\"{o}lder's inequality,
  \begin{align*}
    \E \|PA\bar{x}\|_2^2 &= \sum_{i = 1}^m{P_{ii}^2\ \E \left|\left\langle
      \sum_{j =  1}^n{A_{ij} u_j}, g\right\rangle\right|^2} \\
      &= \sum_{i = 1}^m{P_{ii}^2\left\| \sum_{j =  1}^n{A_{ij} u_j} \right\|_2^2} \leq \lambda^2\\
    \E \|\bar{x}\|_2^2 &= \sum_{i = 1}^n{\|u_i\|_2^2} = n
  \end{align*}
  We have that
  \begin{equation*}
    \E \|PA\bar{x}\|_2^2 \leq \frac{\lambda^2}{n} \E \|\bar{x}\|_2^2,
  \end{equation*}
  so, by averaging, there exists an $x \in \R^n$ such that
  \begin{equation*}
    \sigma_{\min}(PA) \leq \frac{\|PAx\|_2}{\|x\|} \leq
    \sqrt{\frac{\lambda^2}{n}}.
  \end{equation*}
  This completes the proof.
\end{proof}
}

\section{Weighted Restricted Invertibility}
\label{app:weighted-rip}

In this section we show that the weighted restricted invertibility
principle reduces to the standard version of the principle. Let us
first give a statement of the principle, in a version proved by
Spielman and Srivastava.

\begin{theorem}[\cite{bt-constructive}]\label{thm:bt-standard}
  Let $\epsilon > 0$, and let $A$ be an $m$ by $n$ real matrix. For
  any integer $k$ such that $k \leq \epsilon^2
  \frac{\|A\|_{HS}^2}{\|A\|_2^2}$ there exists a subset
  $S \subseteq [n]$ of size $S = k$ such that
  %\begin{equation*}
  $\sigma_{\min}(A|_S)^2 \geq (1-\epsilon)^2\frac{\|A\|_{HS}^2}{n}$. 
%\end{equation*}
  Moreover, $S$ can be computed in deterministic polynomial time. 
\end{theorem}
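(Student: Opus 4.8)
The plan is to prove Theorem~\ref{thm:bt-standard} by the greedy ``barrier'' argument of Batson, Spielman and Srivastava, which is also what delivers the deterministic polynomial-time selection. First I would make two harmless normalizations: replacing $A$ by $A/\|A\|_2$ scales both sides of the target inequality $\sigma_{\min}(A|_S)^2\ge(1-\epsilon)^2\|A\|_{HS}^2/n$ and the constraint on $k$ by the same factor, so we may assume $\|A\|_2=1$, i.e.\ $M:=AA^T\preceq I_m$; and replacing $\R^m$ by $\range(A)$ changes neither $n$, nor $\|A\|_{HS}$, nor $\sigma_{\min}$ of any column submatrix, so we may assume $\rank(A)=m$ and $M\succ 0$. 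Writing $A=(a_j)_{j=1}^n$, the task is then to choose indices $i_1,\dots,i_k$ one at a time so that, with $A_t:=[a_{i_1}\mid\cdots\mid a_{i_t}]$, the selected columns stay linearly independent and $\sigma_{\min}(A_k)^2\ge\tau:=(1-\epsilon)^2\|A\|_{HS}^2/n$.

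Second, I would run the selection while maintaining a \emph{lower barrier} $\ell_t\in\R$ and the potential
\[
  \Phi_t\;:=\;\tr\!\Big[\big(A_tA_t^{T}-\ell_t\,\Pi_t\big)^{+}\Big],
\]
where $\Pi_t$ is the orthogonal projection onto $\range(A_t)$ and $(\cdot)^{+}$ is the Moore--Penrose inverse; finiteness of $\Phi_t$ is exactly the assertion that every nonzero eigenvalue of $A_tA_t^{T}$ exceeds $\ell_t$, i.e.\ $\sigma_{\min}(A_t)^2>\ell_t$, and a uniform bound $\Phi_t\le\Phi$ makes this quantitative. Start from the empty selection, $\ell_0=0$, so $\Phi_0=0$. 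The core is a one-step lemma: provided the columns chosen so far are independent and $\Phi_t\le\Phi$, and $\delta_t$ is chosen appropriately (a schedule calibrated to the robust rank $\|A\|_{HS}^2/\|A\|_2^2$), there is an index $j$ with $a_j\notin\range(A_t)$ --- so adding it raises the rank by one --- such that, setting $\ell_{t+1}=\ell_t+\delta_t$, one still has $\Phi_{t+1}\le\Phi_t$. Its proof is by averaging the change of potential over all candidate columns: the Sherman--Morrison--Woodbury identity writes $\Phi_{t+1}$ after the update as $\Phi_t$ (after the barrier bump) minus an explicit $a_j$-dependent quadratic-form term plus a term accounting for the newly created eigenvalue, and summing over $j$ using $\sum_j a_j a_j^{T}=M\preceq I_m$ shows the average net change is $\le 0$ for the chosen $\delta_t$; hence some individual $j$ is at least as good.

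Finally, I would iterate the lemma $k$ times. The hypothesis $k\le\epsilon^2\|A\|_{HS}^2/\|A\|_2^2$ is precisely what lets the increments $\delta_0,\dots,\delta_{k-1}$ be chosen so that $\Phi_t\le\Phi$ holds throughout while $\ell_k=\sum_{t=0}^{k-1}\delta_t\ge\tau$; then $\Phi_k<\infty$ forces $\sigma_{\min}(A_k)^2\ge\ell_k\ge\tau$ on the $k$-dimensional span of the chosen columns, which is the claim for $S=\{i_1,\dots,i_k\}$. Deterministic polynomial time is built in: at each step one evaluates, for each of the $n$ candidates, the (bordered) rank-one update of $\Phi$ --- a handful of $m\times m$ matrix operations and quadratic forms --- picks the minimizer, which the averaging argument certifies is $\le\Phi_t$, and proceeds.

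The step I expect to be the main obstacle is this one-step averaging lemma. Getting the bookkeeping exactly right --- fixing the schedule $\delta_t$, controlling the extra positive contribution that appears precisely because the rank increases (the reason a pseudoinverse rather than an ordinary inverse is needed), and verifying that the averaged net change of potential is nonpositive under the stated bound on $k$ --- is the delicate, computation-heavy heart of the argument; the reductions, the telescoping of the barriers, and the constructivity are routine by comparison.
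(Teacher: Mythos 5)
You should first be aware that the paper does not prove Theorem~\ref{thm:bt-standard} at all: it is imported verbatim as a black box from Spielman and Srivastava~\cite{bt-constructive}, and the only thing the appendix actually proves is the reduction of the weighted version (Theorem~\ref{thm:bt}) to this standard statement via Lemma~\ref{lm:uniform}. So there is no ``paper's proof'' to match; what you have sketched is an outline of the proof in the cited reference itself. Your identification of the method is correct --- the lower-barrier potential argument with a Moore--Penrose inverse on the current range, a rank-increasing column chosen by averaging, and a telescoping schedule of barrier shifts is exactly the Spielman--Srivastava argument, and the deterministic polynomial-time claim does fall out of it as you say. Your normalizations are also fine (the constraint on $k$ is in fact scale-invariant rather than ``scaled by the same factor,'' but that is harmless).

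However, as a proof the proposal has a genuine gap, and you have located it yourself: the one-step lemma is only described, never proved. All of the quantitative content of the theorem lives there. The hypothesis $k\le\epsilon^2\|A\|_{HS}^2/\|A\|_2^2$ and the constant $(1-\epsilon)^2\|A\|_{HS}^2/n$ are not derived anywhere in your write-up; they can only emerge from the explicit choice of the increments $\delta_t$ and from the Sherman--Morrison computation showing that the averaged potential change is nonpositive, together with the separate verification that the averaging can be restricted to columns $a_j\notin\range(A_t)$ so that the rank actually increases and the newly created eigenvalue clears the raised barrier. Note also that your stated averaging input, $\sum_j a_ja_j^T=M\preceq I_m$, is not literally the identity used in \cite{bt-constructive}: their argument keeps the isotropic system $\sum_j v_jv_j^T=I$ (here $v_j=e_j\in\R^n$) separate from the operator $L=A$, and both $\sum_j v_jv_j^T=I$ and $\sum_j\|Lv_j\|^2=\|A\|_{HS}^2$ enter the average in different places; collapsing these into a single inequality $M\preceq I$ is an adaptation you would need to check, not a routine substitution. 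Until the one-step lemma is stated with its exact constants and verified, the argument establishes nothing beyond the (correct) shape of the known proof.
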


The reduction of Theorem~\ref{thm:bt} to Theorem~\ref{thm:bt-standard}
is based on the following simple technical lemma.

% We first state an easy lemma that lets us simplify the problem
% somewhat.

\begin{lemma}\label{lm:uniform}
  Let $Q\succeq 0$ be a diagonal matrix with rational entries, such that
  $\tr(Q) = 1$. Then for any $m$ by $n$ matrix
  $A = (a_j)_{j=1}^n$, there exists a $m \times n_Q$ matrix $A_Q$ such that
  $A_QA_Q^T = n_QAQA^T$. Moreover, all columns of $A_Q$ are columns of
  $A$.
\end{lemma}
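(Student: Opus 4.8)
The plan is to use the rationality of $Q$ to replace the weighted sum of outer products $AQA^T = \sum_{j=1}^n q_j a_j a_j^T$ by an \emph{unweighted} sum over a multiset of columns of $A$. First I would write each diagonal entry of $Q$ over a common denominator: since the entries $q_j$ are nonnegative rationals, there is a positive integer $n_Q$ and nonnegative integers $c_1, \ldots, c_n$ with $q_j = c_j/n_Q$ for every $j \in [n]$. Because $\tr(Q) = \sum_{j=1}^n q_j = 1$, this common denominator satisfies $\sum_{j=1}^n c_j = n_Q$.

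Next I would define $A_Q$ to be the matrix whose columns are obtained by taking, for each $j \in [n]$, exactly $c_j$ copies of the column $a_j$ of $A$ (in any fixed order). By construction the number of columns of $A_Q$ is $\sum_{j=1}^n c_j = n_Q$, so $A_Q$ is $m \times n_Q$, and every column of $A_Q$ is a column of $A$, as required.

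Finally I would verify the matrix identity by expanding $A_Q A_Q^T$ as a sum of rank-one contributions, one per column: each of the $c_j$ copies of $a_j$ contributes $a_j a_j^T$, so $A_Q A_Q^T = \sum_{j=1}^n c_j\, a_j a_j^T = n_Q \sum_{j=1}^n q_j\, a_j a_j^T = n_Q\, A Q A^T$, where the last equality uses the fact that $Q$ is diagonal, so $AQA^T = \sum_{j=1}^n q_j a_j a_j^T$.

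There is no genuine obstacle here: the argument is an elementary common-denominator and column-replication trick. The hypotheses are used only in two minor ways — rationality of $Q$ is exactly what lets us choose a finite common denominator $n_Q$, and $\tr(Q) = 1$ is what makes the column count of $A_Q$ come out to be precisely $n_Q$ (matching the statement). If one did not insist on this exact value, any common multiple of the denominators of the $q_j$ would serve equally well.
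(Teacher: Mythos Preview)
Your proposal is correct and is essentially the same argument as the paper's: clear a common denominator $n_Q$ so that $n_Q Q$ has nonnegative integer diagonal entries $c_j$ summing to $n_Q$, replicate each column $a_j$ exactly $c_j$ times, and expand $A_Q A_Q^T$ as a sum of rank-one outer products. The only cosmetic difference is that the paper takes $n_Q$ to be the \emph{least} common denominator, whereas you allow any common denominator; as you yourself note, this does not matter.
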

\begin{proof}
  Let $n_Q$ be the least common denominator of all diagonal entries of $Q$,
  that is $n_QQ = R$ for  an integral diagonal matrix  $R$. Let
  then $A_Q$ be a matrix with $R_{jj}$ copies of each $a_j$. Clearly,
  $$
  A_QA_Q^T = \sum_{j=1}^n{R_{jj}a_ja_j^T} = ARA^T = n_Q AQA^T.
  $$
  Observe, finally, that the number of columns of $A_Q$ is equal to
  $\sum_{j=1}^n{R_{jj}} = n_Q\sum_{j=1}^n{Q_{jj}} = n_Q$. 
\end{proof}

\begin{proofof}{Theorem~\ref{thm:bt}}
    By introducing a tiny perturbation to $Q$, we can make it rational
    while changing $\|AQ^{1/2}\|_{HS}$ and $\|AQ^{1/2}\|_2$
    by an arbitrarily small amount. Therefore, we assume that $Q$ is
    rational. Then, by Lemma~\ref{lm:uniform}, there exists a matrix
    $A_Q$ with $n_Q$ columns all of which are columns of $A$, such
    that $A_QA_Q^T = n_Q AQA^T$. Let $k$ be arbitrary integer such
    that 
    \[
    k \leq \epsilon^2 \frac{\|AQ^{1/2}\|_{HS}^2}{\|AQ^{1/2}\|_2^2} 
    = \epsilon^2 \frac{\tr(AQA^T)}{\lambda_{\max}(AQA^T)}
    = \epsilon^2 \frac{n_Q\tr(A_QA_Q^T)}{n_Q\lambda_{\max}(A_QA_Q^T)}
    = \epsilon^2 \frac{\|A_Q\|_{HS}^2}{\|A_Q\|_2^2},
    \]
    where $\lambda_{\max}(M)$ is used to denote the largest eigenvalue
    of a matrix $M$. By Theorem~\ref{thm:bt-standard}, there exists a
    set $S_Q$ of size $k$, such that 
    \[
    \sigma_{\min}(A_Q|_{S_Q})^2 \geq   (1-\epsilon)^2\frac{\|A_Q\|_{HS}^2}{n_Q}
    = (1-\epsilon)^2 \frac{\tr(A_QA_Q^T)}{n_Q} = (1-\epsilon)^2 \tr(AQA^T) = (1-\epsilon)^2\|AQ^{1/2}\|_{HS}^2.
    \]
    But since all columns of $A_Q$ are also columns of $A$, and no
    column in $A_Q|_{S_Q}$ can be repeated or otherwise
    $\sigma_{\min}(A_Q|_{S_Q}) = 0$, there exists a set $S \subseteq
    [n]$ such that $\sigma_{\min}(A|_S)^2 \geq
    (1-\epsilon)^2\|AQ^{1/2}\|_{HS}^2$. 
\end{proofof}

\section{Vector Discrepancy for the Koml\'{o}s Problem}
\label{app:vector-komlos}

Theorem~\ref{thm:komlos-sdp} follows from the arguments
in~\cite{komlos-sdp} with few modifications. Here we sketch the
argument. 

We need a well-known lemma, also known as the Cauchy Interlace
Theorem. It follows easily from the variational characterization of
eigenvalues.
\begin{lemma}\label{lm:interlace}
  Let ${M}$ be a symmetric real matrix with eigenvalues $\lambda_1
  \geq \ldots \geq \lambda_n$. Let also ${U}\in \mathbb{R}^{n \times
    k}$ be a matrix with mutually orthogonal unit columns. Then for $1
  \leq i \leq k$, 
  $$
  \lambda_{n-k+i}(M) \leq \lambda_i({U^T} {M} {U}) \leq \lambda_{i}(M).
  $$
\end{lemma}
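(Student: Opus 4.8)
The plan is to derive both inequalities from the Courant--Fischer variational (min--max) characterization of eigenvalues of a symmetric matrix. Recall that for a symmetric $n\times n$ matrix $M$ with eigenvalues $\lambda_1 \geq \cdots \geq \lambda_n$ and any $1 \leq i \leq n$,
\[
\lambda_i(M) = \max_{\substack{V \subseteq \R^n \\ \dim V = i}} \min_{\substack{x \in V \\ \|x\|_2 = 1}} x^T M x = \min_{\substack{W \subseteq \R^n \\ \dim W = n - i + 1}} \max_{\substack{x \in W \\ \|x\|_2 = 1}} x^T M x .
\]
The key observation I would use is that, since $U$ has $k$ mutually orthogonal unit columns, the map $y \mapsto Uy$ is a linear isometry of $\R^k$ onto the $k$-dimensional subspace $\range(U) \subseteq \R^n$: in particular $\|Uy\|_2 = \|y\|_2$, $y^T (U^T M U) y = (Uy)^T M (Uy)$, and $y \mapsto Uy$ carries $j$-dimensional subspaces of $\R^k$ bijectively to $j$-dimensional subspaces of $\range(U)$.

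For the upper bound I would apply the ``max--min'' form to the $k\times k$ matrix $U^T M U$: letting $S$ range over $i$-dimensional subspaces of $\R^k$,
\[
\lambda_i(U^T M U) = \max_{\dim S = i} \min_{\substack{y \in S \\ \|y\|_2 = 1}} (Uy)^T M (Uy) = \max_{\dim S = i} \min_{\substack{z \in US \\ \|z\|_2 = 1}} z^T M z \leq \max_{\substack{T \subseteq \R^n \\ \dim T = i}} \min_{\substack{z \in T \\ \|z\|_2 = 1}} z^T M z = \lambda_i(M),
\]
the inequality holding because each $US$ is one particular $i$-dimensional subspace of $\R^n$, so the outer maximum is over a smaller family. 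For the lower bound I would instead invoke the ``min--max'' form: a subspace $S \subseteq \R^k$ with $\dim S = k - i + 1$ is sent to a subspace $US \subseteq \R^n$ of the same dimension, and $k - i + 1 = n - (n - k + i) + 1$, so
\[
\lambda_i(U^T M U) = \min_{\dim S = k - i + 1} \max_{\substack{z \in US \\ \|z\|_2 = 1}} z^T M z \geq \min_{\substack{T \subseteq \R^n \\ \dim T = k - i + 1}} \max_{\substack{z \in T \\ \|z\|_2 = 1}} z^T M z = \lambda_{n - k + i}(M),
\]
again because subspaces of the form $US$ form a subfamily of all $(k-i+1)$-dimensional subspaces of $\R^n$. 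Combining the two displays gives $\lambda_{n-k+i}(M) \leq \lambda_i(U^T M U) \leq \lambda_i(M)$ for every $1 \leq i \leq k$.

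This proof is essentially routine, and I do not expect a genuine obstacle. The only point requiring care is the dimension bookkeeping: matching ``$(k-i+1)$-dimensional subspace'' with the correct index $n-k+i$ in the min--max form of Courant--Fischer, and verifying that $U$ having orthonormal columns is precisely what makes the restriction of the quadratic form $x \mapsto x^T M x$ to $\range(U)$ agree with the quadratic form of $U^T M U$ (both as function values and in the correspondence of subspaces). One could alternatively prove it by an eigenvector-counting / dimension-intersection argument, but the Courant--Fischer route is the cleanest and matches the remark in the excerpt that the lemma ``follows easily from the variational characterization of eigenvalues.''
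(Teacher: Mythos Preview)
Your proposal is correct and follows exactly the approach the paper indicates: the lemma is stated there as the Cauchy Interlace Theorem with the remark that it ``follows easily from the variational characterization of eigenvalues,'' and no further proof is given. Your Courant--Fischer argument, with the isometry $y\mapsto Uy$ and the dimension count $k-i+1=n-(n-k+i)+1$, is precisely that easy derivation.
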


The following is an immediate consequence of
Lemma~\ref{lm:interlace}. 
\begin{lemma}\label{lm:interlace-det}
  Let ${M} \in \mathbb{R}^{n \times n}: {M} \succeq 0$ be a
  symmetric real matrix with eigenvalues $\sigma_1 \geq \ldots \geq \sigma_n \geq
  0$. Let also ${U}\in \mathbb{R}^{n \times k}$ be a matrix with
  mutually orthogonal unit columns. Then $\det({U^T} {M}
  {U}) \leq \sigma_1 \ldots \sigma_k$. 
\end{lemma}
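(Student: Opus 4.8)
The plan is to read off the statement directly from the Cauchy interlace theorem (Lemma~\ref{lm:interlace}) together with the fact that the determinant of a positive semidefinite matrix is the product of its eigenvalues. First I would observe that since $M \succeq 0$ and $U$ has orthonormal columns, the $k \times k$ matrix $U^T M U$ is itself positive semidefinite; hence all of its eigenvalues $\lambda_1(U^T M U) \geq \cdots \geq \lambda_k(U^T M U)$ are nonnegative, and $\det(U^T M U) = \prod_{i=1}^k \lambda_i(U^T M U)$.

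Next I would apply Lemma~\ref{lm:interlace} with this $M$ and $U$: the upper bound there gives $\lambda_i(U^T M U) \leq \lambda_i(M) = \sigma_i$ for every $1 \leq i \leq k$ (using that the eigenvalues of a PSD matrix coincide with its singular values, which is how the $\sigma_i$ are named in the statement). Since both sides of each of these inequalities are nonnegative, I can multiply the $k$ inequalities termwise without reversing any of them, obtaining
\begin{equation*}
  \det(U^T M U) = \prod_{i=1}^k \lambda_i(U^T M U) \leq \prod_{i=1}^k \sigma_i = \sigma_1 \cdots \sigma_k,
\end{equation*}
which is exactly the claimed bound.

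There is essentially no obstacle here: the only point requiring a moment's care is the nonnegativity needed to (i) identify $\det$ with the product of eigenvalues and (ii) legitimately multiply the interlacing inequalities term by term, and both follow immediately from $M \succeq 0 \Rightarrow U^T M U \succeq 0$. So the proof is a two-line corollary of Lemma~\ref{lm:interlace}, and I would present it as such.
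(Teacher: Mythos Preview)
Your proof is correct and is exactly the intended argument: the paper states the lemma as an immediate consequence of Lemma~\ref{lm:interlace} without giving any further details, and your two-line derivation via $\lambda_i(U^T M U) \leq \sigma_i$ together with nonnegativity is precisely how that immediacy is cashed out.
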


The next lemma follows from strong duality for semidefinite
programming (see~\cite{Matousek11}). 
\begin{lemma}\label{lm:sdp-dual}
  For any real $m \times n$ matrix ${A}$, the minimum of
  $\|AXA^T\|_2^2$ over $X \succeq 0$ such that $\forall j \in [n]:
  X_{jj} = 1$ is equal to $D^2$ if and only if there exists a matrix $P
  \succeq 0$, $\tr(P) = 1$, and a diagonal matrix $Q\geq 0$, such that
  \begin{equation}\label{eq:cond-unif-w}
    \tr(Q) \geq D^2
  \end{equation}
 and for all  ${z} \in \mathbb{R}^n$ 
  \begin{equation} \label{eq:fact-unif}
    z^TA^TPAz \geq z^TQz.
  \end{equation}
\end{lemma}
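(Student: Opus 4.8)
The plan is to exhibit the quantity on the left of the statement as the optimal value of a semidefinite program and to read the characterization off its Lagrange dual via strong duality. Write $D^{2}$ for this quantity; it equals the square of $\delta:=\min\{\|AXA^{T}\|_{2}:X\succeq 0,\ X_{jj}=1\ \forall j\}$ (and squaring is a monotone bijection of $[0,\infty)$, so the two readings are interchangeable for the argument). First I would put $\delta$ into conic form: minimize $t$ over $t\in\R$ and symmetric $X$ subject to $X\succeq 0$, $X_{jj}=1$ for every $j$, and $tI-AXA^{T}\succeq 0$; since $AXA^{T}\succeq 0$ the last constraint is equivalent to $t\ge\|AXA^{T}\|_{2}$, so the optimum is $\delta$. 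Slater's condition is immediate: $X=I$ together with $t=\|AA^{T}\|_{2}+1$ gives $X\succ 0$, $tI-AXA^{T}\succ 0$, and the affine equalities $X_{jj}=1$ hold exactly, so by Theorem~\ref{thm:slater} (in its semidefinite form) strong duality holds and the dual optimum is attained.

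Next I would form the Lagrangian using a PSD multiplier $P\succeq 0$ for $tI-AXA^{T}\succeq 0$, a PSD multiplier $W\succeq 0$ for $X\succeq 0$, and free scalars $\nu_{j}$ for the equalities. Collecting terms, the coefficient of $t$ forces $\tr(P)=1$; stationarity in $X$ forces $W=A^{T}PA+\diag(\nu)$, so that $W\succeq 0$ becomes $A^{T}PA\succeq -\diag(\nu)$; and the remaining objective is $-\sum_{j}\nu_{j}$. Writing $Q=\diag(-\nu_{1},\dots,-\nu_{n})$, this is exactly the program: maximize $\tr(Q)$ over $P\succeq 0$ with $\tr(P)=1$ and diagonal $Q$ with $A^{T}PA\succeq Q$, i.e.\ with \eqref{eq:fact-unif}; strong duality identifies its value with $D^{2}$, so \eqref{eq:cond-unif-w} holds at a dual optimum. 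The converse (``if'') direction is just weak duality: for any feasible $X$ and any such $(P,Q)$,
\[
\|AXA^{T}\|_{2}\;\ge\;\langle P,AXA^{T}\rangle\;=\;\langle A^{T}PA,X\rangle\;\ge\;\langle Q,X\rangle\;=\;\sum_{j=1}^{n}Q_{jj}X_{jj}\;=\;\tr(Q),
\]
using $\tr(P)=1$, $P\succeq0$ for the first step, $A^{T}PA\succeq Q$ and $X\succeq0$ for the third, and $Q$ diagonal with $X_{jj}=1$ for the last; hence $D^{2}\ge\tr(Q)$, and \eqref{eq:cond-unif-w} forces equality.

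The one subtlety, and the step I expect to be the main obstacle, is that the multiplier $Q$ for the \emph{equality} constraints $X_{jj}=1$ is a priori sign-unconstrained, whereas the statement asserts $Q\succeq 0$ (this nonnegativity is what is used downstream in Appendix~\ref{app:vector-komlos}, where one restricts the dual solution to a set of coordinates on which $Q$ is positive definite). I would argue that among dual optima there is always one with $Q\succeq 0$. A clean route: by minimax, $D^{2}=\min\{\|AZA^{T}\|_{2}:Z\succeq0,\ Z_{jj}=1\}$, and for an optimal $P^{\star}$ (which may be taken supported on the top eigenspace of $AZ^{\star}A^{T}$ for an optimal $Z^{\star}$) the inner maximization $\max\{\tr(Q):A^{T}P^{\star}A\succeq Q\}$ is dual to $\min\{\langle A^{T}P^{\star}A,Z\rangle:Z\succeq0,\ Z_{jj}=1\}$; complementary slackness at $(Q^{\star},Z^{\star})$ gives $(A^{T}P^{\star}A-Q^{\star})Z^{\star}=0$, hence $Q^{\star}_{jj}=(A^{T}P^{\star}A\,Z^{\star})_{jj}$, and the structure of $P^{\star}$ forces this diagonal to be nonnegative. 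A coarser but robust alternative is to note that relaxing $X_{jj}=1$ to $X_{jj}\le1$ yields a program whose Lagrange dual automatically has $Q\succeq0$ and whose optimum agrees with the original for the matrices to which the lemma is applied. With $Q\succeq0$ secured, the computation above proves both directions, and the result is precisely strong duality for the semidefinite program above, as claimed (cf.~\cite{Matousek11}).
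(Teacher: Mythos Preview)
The paper does not actually prove this lemma; it only asserts that it ``follows from strong duality for semidefinite programming'' and cites~\cite{Matousek11}. Your approach is therefore the same as the paper's, and your derivation of the dual pair is correct up to the identification of the dual program as: maximize $\tr(Q)$ over $P\succeq 0$ with $\tr(P)=1$ and diagonal $Q$ with $A^{T}PA\succeq Q$.

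Two concrete problems remain. First, a bookkeeping slip: your primal minimizes $t$ with $tI\succeq AXA^{T}$, so its optimum is $\delta=D$, and your own weak-duality chain gives $\|AXA^{T}\|_{2}\ge\tr(Q)$, i.e.\ $D\ge\tr(Q)$. Strong duality therefore yields $\max\tr(Q)=D$, not $D^{2}$; the sentence ``strong duality identifies its value with $D^{2}$'' is not what your computation supports. (This mismatch is plausibly a typo in the lemma as stated; you should flag it rather than silently carry it.)

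Second, and this is the genuine gap, neither of your arguments for $Q\ge 0$ works. Your ``coarser'' route, relaxing $X_{jj}=1$ to $X_{jj}\le 1$, fails outright: $X=0$ is then feasible with objective $0$, so the relaxed optimum is $0$ for every $A$ and never agrees with the original. Your complementary-slackness route is in the right direction but incomplete: from $(A^{T}P^{\star}A-Q^{\star})Z^{\star}=0$ you correctly obtain $Q^{\star}_{jj}=e_{j}^{T}(A^{T}P^{\star}A)\,Z^{\star}e_{j}$, but this is the $(j,j)$ entry of a product of two PSD matrices, which in general can be negative; the bare assertion that ``the structure of $P^{\star}$ forces this diagonal to be nonnegative'' is not justified and needs an actual argument (for instance, exploiting that $P^{\star}$ may be taken supported on the top eigenspace of $AZ^{\star}A^{T}$ together with $Z^{\star}_{jj}=1$). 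Since the nonnegativity of $Q$ is precisely what the downstream proof in Appendix~\ref{app:vector-komlos} uses, this is the step that must be closed.
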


The final lemma we need was proved in~\cite{komlos-sdp}. It can be
proved using the fact that the function $\sum_{i= 1}^n{e^{z_i}}$ is
symmetric and convex in $z = (z_i)_{i =1}^n$, and therefore is
Schur-convex. 
\begin{lemma}
  \label{lm:prod-sum}
  Let $x_1 \geq \ldots \geq x_n > 0$ and $y_1 \geq \ldots \geq y_n >
  0$ such that
  \begin{equation*}\label{eq:prod}
    \forall k \leq n: x_1 \ldots x_k \geq y_1 \ldots y_k
  \end{equation*}
  Then,
  \begin{equation*}\label{eq:sum}
    \forall k \leq n: x_1 + \ldots + x_k \geq y_1 + \ldots + y_k.
  \end{equation*}
\end{lemma}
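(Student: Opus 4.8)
The plan is to take logarithms, which converts the multiplicative hypothesis into an additive, majorization-type one, and then to exploit the Schur-convexity of the map $z \mapsto \sum_i e^{z_i}$ (this is the route the remark preceding the statement has in mind).

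\emph{Step 1: passing to logarithms.} Put $a_i = \log x_i$ and $b_i = \log y_i$ for $i \in [n]$. Since $\log$ is increasing, $a_1 \ge \dots \ge a_n$ and $b_1 \ge \dots \ge b_n$ remain non-increasing, the hypothesis $\prod_{i=1}^k x_i \ge \prod_{i=1}^k y_i$ becomes exactly $\sum_{i=1}^k a_i \ge \sum_{i=1}^k b_i$ for every $k \le n$, and the conclusion $\sum_{i=1}^k x_i \ge \sum_{i=1}^k y_i$ becomes $\sum_{i=1}^k e^{a_i} \ge \sum_{i=1}^k e^{b_i}$.

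\emph{Step 2: reduction to majorization and Schur-convexity.} Fix $k \le n$ and restrict attention to the truncated vectors $a' = (a_1,\dots,a_k)$ and $b' = (b_1,\dots,b_k)$ in $\R^k$, which are already written in non-increasing order. The hypothesis says precisely that $b'$ is \emph{weakly submajorized} by $a'$: every partial sum of $a'$ dominates the corresponding partial sum of $b'$ (the total sums need not agree). By the standard completion lemma for weak majorization, there is a vector $c \in \R^k$ with $b'_i \le c_i$ for all $i$ and with $c$ majorized, in the usual equal-sum sense, by $a'$; concretely, one raises the smallest coordinates of $b'$ (``water-filling from below'') until the total sum reaches $\sum_i a'_i$, which keeps the vector non-increasing and keeps all its partial sums at most those of $a'$. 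Since $t \mapsto e^t$ is non-decreasing, $\sum_{i=1}^k e^{b'_i} \le \sum_{i=1}^k e^{c_i}$; and $z \mapsto \sum_i e^{z_i}$ is symmetric and convex, hence Schur-convex, so $c \prec a'$ gives $\sum_{i=1}^k e^{c_i} \le \sum_{i=1}^k e^{a'_i}$. Chaining these yields $\sum_{i=1}^k e^{b_i} \le \sum_{i=1}^k e^{a_i}$, which is the assertion for index $k$; as $k$ was arbitrary, the proof is complete. The main obstacle is the completion step: one must check that the water-filling construction really produces a $c$ that dominates $b'$ coordinatewise, stays non-increasing, and is majorized by $a'$ — a standard but slightly fiddly bookkeeping argument about weak versus exact majorization (see, e.g., Marshall and Olkin's book on majorization). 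The other ingredient, that a symmetric convex function is Schur-convex, is classical; I expect a write-up to spend most of its length on these two standard facts (or on a self-contained substitute), since the logarithm reduction in Step 1 is immediate.

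\emph{A self-contained alternative} avoids all majorization machinery. After Step 1, fix $k$ and note that the hypothesis already implies, for \emph{every} real $s$, the elementary inequality $\sum_{i=1}^k (a_i - s)_+ \ge \sum_{i=1}^k (b_i - s)_+$: if $q$ is the number of indices $i \le k$ with $b_i > s$, then, since $b$ is non-increasing, $\sum_{i=1}^k (b_i - s)_+ = \big(\sum_{i=1}^q b_i\big) - q s \le \big(\sum_{i=1}^q a_i\big) - q s \le \sum_{i=1}^k (a_i - s)_+$, using $\sum_{i=1}^q a_i \ge \sum_{i=1}^q b_i$ (the hypothesis at level $q \le k$) and $(t)_+ \ge t$. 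Integrating this inequality against $e^s\,\diff s$ and using the identity $e^t = \int_{-\infty}^{\infty} (t-s)_+\, e^s\,\diff s$ (which reduces to $\Gamma(2) = 1$ after the substitution $u = t-s$) gives $\sum_{i=1}^k e^{a_i} \ge \sum_{i=1}^k e^{b_i}$ directly, hence $\sum_{i=1}^k x_i \ge \sum_{i=1}^k y_i$.
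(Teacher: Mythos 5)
Your main argument is exactly the route the paper intends (the paper only cites \cite{komlos-sdp} and hints at the Schur-convexity of $z\mapsto\sum_i e^{z_i}$): take logarithms, observe weak submajorization of the truncated vectors, complete to exact majorization, and apply the increasing Schur-convex function $\sum_i e^{z_i}$; the water-filling completion does go through (the final level $\lambda$ is at least the average of $a_p,\dots,a_k$, which dominates every later tail average since $a$ is non-increasing), so the proof is correct. Your self-contained alternative via $e^t=\int_{-\infty}^{\infty}(t-s)_+e^s\,\mathrm{d}s$ is also correct and neatly bypasses the majorization bookkeeping entirely.
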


\begin{theorem}[Theorem~\ref{thm:komlos-sdp}, restated]
  For any $m \times n$ matrix $A = (a_i)_{i = 1}^n$ satisfying
  $\forall i\in [n]: \|a_i\|_2 \leq 1$ there exists a $n\times n$
  matrix $X \succeq 0$ such that $\forall i \in [n]: X_{jj} = 1$ and
  $\|AXA^T\|_2 \leq 1$. 
\end{theorem}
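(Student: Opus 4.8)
The plan is to prove the equivalent statement $\min\{\|AXA^T\|_2^2 : X\succeq0,\ X_{jj}=1\ \forall j\}\le 1$; since the set of correlation matrices is compact this minimum is attained, which then yields the desired $X$. To bound the minimum I would use its dual description: by Lemma~\ref{lm:sdp-dual}, if the minimum equalled some $D^2>1$ there would exist a feasible dual pair $(P,Q)$ --- with $P\succeq0$, $\tr(P)=1$, $Q$ diagonal and $\succeq0$, and $A^TPA\succeq Q$ --- with $\tr(Q)\ge D^2>1$. So it suffices to show that \emph{every} such feasible pair satisfies $\tr(Q)\le1$.

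So fix such a $(P,Q)$. Let $S=\{j:Q_{jj}>0\}$ and $k=|S|$, so $\tr(Q)=\sum_{j\in S}Q_{jj}$. Restricting $A^TPA\succeq Q$ to the rows and columns indexed by $S$ gives $(A|_S)^TP(A|_S)\succeq Q|_S\succ0$; since the right-hand side has full rank $k$, so does the left, which forces $k\le m$. Next I would sort $S=\{i_1,\dots,i_k\}$ so that $Q_{i_1i_1}\ge\cdots\ge Q_{i_ki_k}>0$, and for $1\le j\le k$ set $B_j:=A|_{\{i_1,\dots,i_j\}}\in\R^{m\times j}$, a matrix whose $j$ columns are unit-norm columns of $A$. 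Because $B_j^TPB_j$ is the principal $j\times j$ submatrix of $A^TPA$ on these indices, it dominates $\diag(Q_{i_1i_1},\dots,Q_{i_ji_j})$ in the PSD order, so by monotonicity of the determinant on the PSD cone,
\[
  \prod_{l=1}^{j}Q_{i_li_l}\ \le\ \det(B_j^TPB_j).
\]

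The heart of the argument is the bound $\det(B_j^TPB_j)\le\prod_{i=1}^{j}\lambda_i(P)$, where $\lambda_1(P)\ge\lambda_2(P)\ge\cdots$ are the eigenvalues of $P$. To prove it I would take a QR factorization $B_j=WR$ with $W\in\R^{m\times j}$ having orthonormal columns and $R\in\R^{j\times j}$ (one may assume $B_j$ has rank $j$, since otherwise the left side of the displayed inequality vanishes, contradicting $Q_{i_li_l}>0$). Then $\det(B_j^TPB_j)=\det(R)^2\det(W^TPW)=\det(B_j^TB_j)\cdot\det(W^TPW)$. The Gram matrix $B_j^TB_j$ has all diagonal entries equal to $1$ (unit-norm columns), so Hadamard's inequality gives $\det(B_j^TB_j)\le1$; and the Cauchy interlace theorem (Lemma~\ref{lm:interlace}) gives $\lambda_i(W^TPW)\le\lambda_i(P)$ for $i\le j$, hence $\det(W^TPW)=\prod_{i=1}^{j}\lambda_i(W^TPW)\le\prod_{i=1}^{j}\lambda_i(P)$ since $W^TPW\succeq0$. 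Combining everything, $\prod_{l=1}^{j}Q_{i_li_l}\le\prod_{i=1}^{j}\lambda_i(P)$ for every $j\le k$. Finally I would apply Lemma~\ref{lm:prod-sum} with $x_i=\lambda_i(P)$ and $y_i=Q_{i_ii_i}$ (both nonincreasing and, for $i\le k$, positive: the $\lambda_i(P)$ because $\prod_{i\le k}\lambda_i(P)\ge\prod_lQ_{i_li_l}>0$) to conclude $\sum_{l=1}^{k}Q_{i_li_l}\le\sum_{i=1}^{k}\lambda_i(P)\le\tr(P)=1$, i.e.\ $\tr(Q)\le1$, completing the proof.

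The step I expect to be the main obstacle is the key determinant inequality $\det(B_j^TPB_j)\le\prod_{i\le j}\lambda_i(P)$: cruder estimates --- Hadamard applied directly to $P^{1/2}B_j$ gives only $\|P\|_2^{j}$, and Cauchy--Binet gives only the $j$-th elementary symmetric polynomial of the eigenvalues of $P$ --- are too weak for the final majorization to close with $\sum_i x_i\le1$. The QR split is precisely what separates the dependence on the shape of $P$ (controlled by Cauchy interlacing) from the conditioning of the selected columns of $A$ (controlled by Hadamard, which is where the unit-norm hypothesis enters). Compared with the vector-discrepancy bound of~\cite{komlos-sdp}, the only change is that here $P$ ranges over all PSD matrices rather than only diagonal ones, which is exactly what strengthens $\vecdisc(A)\le1$ to the spectral bound $\|AXA^T\|_2\le1$.
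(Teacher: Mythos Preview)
Your proof is correct and is essentially the same argument as the paper's: both use the dual characterization of Lemma~\ref{lm:sdp-dual}, bound $\det(B_j^TPB_j)$ by factoring through an orthonormal basis of the range of $B_j$ (your QR split $B_j=WR$ is exactly the paper's $A_{[k]}=U_kU_k^TA_{[k]}$), apply Hadamard to the Gram factor and Cauchy interlacing to $W^TPW$, and finish with Lemma~\ref{lm:prod-sum}. One inconsequential slip: the hypothesis is $\|a_i\|_2\le1$, not $=1$, so the diagonal of $B_j^TB_j$ is $\le1$ rather than $=1$, but Hadamard's inequality still yields $\det(B_j^TB_j)\le1$ and the rest goes through unchanged.
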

\begin{proof}
    We will use Lemma~\ref{lm:sdp-dual} with $D =
  \sqrt{1+\epsilon}$ for an arbitrary $\epsilon > 0$.
  Assume for contradiction that there exist $P$ and $Q$ such that
  \eqref{eq:cond-unif-w} and \eqref{eq:fact-unif} are satisfied.
  Let us define $q_i = Q_{ii}$, and also  $p_i = \sigma_{i}(P)$. 
  Let, without loss of generality, $q_1 \geq \ldots \geq q_n > 0$. Denote by
  ${A_{[k]}}$ the matrix $(a_1, \ldots, a_k)$
  and by ${Q_{k}}$ the diagonal matrix with $q_1, \ldots, q_k$ on
  the diagonal. We first show that
  \begin{equation}
    \label{eq:eig-lb}
    \forall k \leq n: \det({A_{[k]}^T}{P}{A_{[k]}}) \leq
    p_1\ldots p_k.
  \end{equation}
  Let ${u_1}, \ldots {u_k}$ be an orthonormal basis for the
  range of ${A_{[k]}}$ and let ${U_k}$ be the matrix
  $({u_1}, \ldots {u_k})$. Then ${A_{[k]}} =
  {U_k}{U_k^T}{A_{[k]}}$. Each column of the square
  matrix ${U_k^T}{A_{[k]}}$ has norm at most $1$, and, by
  Hadamard's inequality, 
  \begin{equation*}
    \det({A^T_{[k]}U_k}) =  \det({U_k^T}{A_{[k]}}) \leq 1. 
  \end{equation*}
  Therefore, 
  \begin{equation*}
    \forall k \leq n: \det({A_{[k]}^T}{P}{A_{[k]}}) \leq
    \det({U_k^T}{P}{U_k}).
  \end{equation*}
  By Lemma~\ref{lm:interlace-det}, we have that
  $\det({U_k^T}{P}{U_k}) \leq p_1 \ldots p_k$, which
  proves (\ref{eq:eig-lb}). 

  By \eqref{eq:fact-unif} we know that for all $k$ and for all
  ${u} \in \mathbb{R}^k$, ${u^TA_{[k]}^TPA_{[k]}u} \geq
  {u^TQ_ku}$, since we can freely choose ${z}$ such that $z_i
  = 0$ for all $i>k$. Then, we have that
  \begin{equation}
    \label{eq:eig-ub}
    \forall k\leq n: \det({A_{[k]}^T}{P}{A_{[k]}}) \geq
    \det({Q_k}) = q_1\ldots q_k
  \end{equation}
  Combining (\ref{eq:eig-lb}) and (\ref{eq:eig-ub}), we have that 
  \begin{equation}
    \label{eq:p_k-w_k}
    \forall k \leq n: p_1 \ldots p_k \geq q_1 \ldots q_k
  \end{equation}
  By Lemma~\ref{lm:prod-sum}, (\ref{eq:p_k-w_k}) implies that $1 = \sum_{j = 1}^m{p_j} \geq
  \sum_{j = 1}^n{p_j} \geq \sum_{i = 1}^n{q_i} \geq 1+ \epsilon$, a
  contradiction. 

\end{proof}

\confoption{\section{Convex Duality}

We recall several basic facts from the theory of convex optimization,
which will be needed in the proof of Theorem~\ref{thm:nuclear} in Appendix~\ref{app:dual}. Assume we are given the following optimization problem:
\begin{align}
  &\text{Minimize } f_0(x)\label{eq:general-obj}\\
  &\text{s.t.}\notag\\
  &\forall 1\leq i \leq m: f_i(x) \leq 0.\label{eq:general-constr}
\end{align}
The Lagrange dual function associated with
\eqref{eq:general-obj}--\eqref{eq:general-constr} is defined as 
%\begin{equation*}
  $g(y) = \inf_x f_0(x) + \sum_{i = 1}^m{y_if_i(x)}$,
%\end{equation*}
  where the infimum is over the intersection of the domains of
  $f_1,\ldots,\ldots f_m$, and $y \in \R^m$, $y \geq 0$. Since $g(y)$
  is the infimum of affine functions, it is a concave
  function. Moreover, $g$ is upper semi-continuous, and therefore
  continuous over the convex set $\{y: g(y) > -\infty\}$. 

  For any $x$ which is feasible for
  \eqref{eq:general-obj}--\eqref{eq:general-constr}, and any $y \geq
  0$, $g(y) \leq f_0(x)$. This fact is known as \emph{weak
    duality}. The \emph{Lagrange dual problem} is defined as
\begin{align}
  &\text{Maximize } g(y)
  \text{ s.t. }
  y \geq 0.\label{eq:L-dual}
\end{align}
\emph{Strong duality} holds when the optimal value of
\eqref{eq:L-dual} equals  the optimal
value of \eqref{eq:general-obj}--\eqref{eq:general-constr}. Slater's
condition is a commonly used sufficient condition for strong
duality. We state it next.

\begin{theorem}[Slater's Condition]\label{thm:slater}
  Assume $f_0, \ldots, f_m$ in the problem
  \eqref{eq:general-obj}--\eqref{eq:general-constr} are convex
  functions over their respective domains, and for some $k \geq 0$,
  $f_1, \ldots, f_k$ are affine functions. Let there be a point $x$ in
  the relative interior of the domains of $f_0, \ldots, f_m$, so that
  $f_i(x) \leq 0$ for $1 \leq i \leq k$ and $f_j(x) < 0$ for $k+1 \leq
  j \leq m$. Then the minimum of
  \eqref{eq:general-obj}--\eqref{eq:general-constr} equals the maximum
  of \eqref{eq:L-dual}, and the maximum of
  \eqref{eq:L-dual} is achieved if it is
  finite.
\end{theorem}

For more information on convex programming and duality, we refer the
reader to the book by Boyd and Vandenberghe~\cite{BoydV04-convexopt}.

\section{Proofs from Section~\ref{sect:min-ellips}}
\label{app:dual}

Recall that for a block
matrix \[U = \left(\begin{array}{cc}X &Y\\Y^T &Z\end{array}\right),\]
the \emph{Schur complement} of an invertible block $Z$ in $U$ is $X -
Y^TZ^{-1}Y$. When $Z \succ 0$, $U \succeq 0$ if and only if $X -
Y^TZ^{-1}Y \succeq 0$. This fact easily implies the convexity of the
matrix inverse.

\begin{lemma}\label{lm:inverse-convex}
  For any two $m\times m$ matrices $X \succ 0$ and $Y \succ 0$,
  $(\frac{1}{2}X + \frac{1}{2}Y)^{-1} \preceq \frac{1}{2}X^{-1} +
  \frac{1}{2}Y^{-1}$. 
\end{lemma}
\begin{proof}
  Define the matrices
  \begin{equation*}
    U = \left(
      \begin{array}{cc}
        X^{-1} &I\\
        I &X
      \end{array}
      \right)
      \;\;\;\;\;
      V = \left(
      \begin{array}{cc}
        Y^{-1} &I\\
        I &Y
      \end{array}
      \right).
  \end{equation*}
  The Schur complement of $X$ in $U$ is $0$, and
  therefore $U \succeq 0$, and analogously $V \succeq 0$. Therefore
  $U + V \succeq 0$, and the Schur complement of $X + Y$ in $U+V$ is also
  positive semidefinite, i.e.~$X^{-1} + Y^{-1} - 4(X+Y)^{-1} \succeq
  0$. This completes the proof, after re-arranging terms.
\end{proof}

\begin{proofof}{Lemma~\ref{lm:ellips-program}}
    Let $\lambda$ be the optimal value of
    \eqref{eq:ellips-obj}--\eqref{eq:ellips-enclose} and $\mu =
    \min\{\|E\|_\infty: \forall j \in [n]: a_j \in E\}$. Given a
    feasible $X$ for \eqref{eq:ellips-obj}--\eqref{eq:ellips-enclose},
    set $E = X^{-1/2}B_2^m$ (this is well-defined since $X \succ
    0$). Then for any $j \in [n]$, $\|a_j\|_E = a_j^TXa_j \leq 1$ by
    \eqref{eq:ellips-enclose}, and, therefore, $a_j \in E$. Also, by
    \eqref{eq:ellips-infty}, $\|E\|_\infty^2 = \max_{i = 1}^m
    e_i^TXe_i \leq t$. This shows that $\mu \leq \lambda$. In the
    reverse direction, let $E = FB_2^m$ be such that $\forall j\in
    [n]: a_j \in E$. Then, because $A$ is full rank, $F$ is also full
    rank and invertible, and we can define $X = (FF^T)^{-1}$ and $t =
    \|E\|_\infty^2$. Analogously to the calculations above, we can
    show that $X$ and $t$ are feasible, and therefore $\lambda \leq
    \mu$.

  The objective function and the constraints \eqref{eq:ellips-enclose}
  are affine, and therefore convex. To show \eqref{eq:ellips-width} are
  also convex, let $X_1, t_1$ and $X_2, t_2$ be two feasible solutions. Then,
  Lemma~\ref{lm:inverse-convex} implies that 
  for any $i$, $e_i^T(\frac{1}{2}X_1 + \frac{1}{2}X_2)^{-1}e_i \leq
  e_i^T\frac{1}{2}X_1^{-1}e_i + \frac{1}{2}e_i^TX_2^{-1}e_i \leq \frac{1}{2}t_1 +
  \frac{1}{2}t_2$, so constraints \eqref{eq:ellips-width} are convex as well.
\end{proofof}

\vspace{1em}
\begin{proofof}{Theorem~\ref{thm:nuclear}}
We shall prove the theorem by showing that the convex optimization
problem \eqref{eq:ellips-obj}--\eqref{eq:ellips-enclose} satisfies
Slater's condition, and that its Lagrange dual is equivalent to
\eqref{eq:nuclear-obj2}--\eqref{eq:nuclear-pos2}. Let us first
verify Slater's condition. We define the domain for constraints
\eqref{eq:ellips-width} as the open cone $\{X: X \succ 0\}$, which
makes the constraint $X \succ 0$ implicit. Let $X =
\frac{1}{\|A\|_{1\rightarrow 2}}I$, and $t = \|A\|_{1\rightarrow 2}+\varepsilon$
for some $\varepsilon > 0$. Then the affine constraints
\eqref{eq:ellips-enclose} are satisfied exactly, and the constraints
\eqref{eq:ellips-width} are satisfied with slack since $\varepsilon
> 0$. Moreover, by Lemma~\ref{lm:ellips-program}, all the
constraints and the objective function are convex. Therefore,
\eqref{eq:ellips-obj}--\eqref{eq:ellips-enclose} satisfies Slater's
condition, and consequently strong duality holds.

The Lagrange dual function for
\eqref{eq:ellips-obj}--\eqref{eq:ellips-enclose} is
\begin{equation*}
  g(p,r) = \inf_{t, X \succ 0}{t + \sum_{i = 1}^m{p_i(e_i^TX^{-1}e_i
      - t)} + \sum_{j = 1}^n{r_j(a_j^TXa_j - 1)} },
\end{equation*}
with dual variables $p \in \R^m$ and $r \in \R^n$, $p, r \geq
0$. Equivalently, writing $p$ as a diagonal matrix $P \in
\R^{m\times m}$, $P \succeq 0$, $r$ as a diagonal matrix $R \in
\R^{n\times n}$, $R \succeq 0$, we have $g(P,R) = \inf_{t,X\succ
  0}{t + \tr(PX^{-1}) - \tr(tP) + \tr(ARA^TX) - \tr(R)}$.  If $\tr(P)
\neq 1$, then $g(P,R) = -\infty$, since we can take $t$ to
$-\infty$ while keeping $X$ fixed. On the other hand, for $\tr(P)
= 1$, the dual function simplifies to
\begin{equation}\label{eq:g-raw}
  g(P,R) = \inf_{X \succ 0}{\tr(PX^{-1}) + \tr(ARA^TX) - \tr(R)}.
\end{equation}
Since $X\succ 0$ implies $X^{-1}\succ 0$, $g(P,R) \geq -\tr(R) >
-\infty$ whenever $\tr(P) = 1$. Therefore, $g(P,R)$ is continuous
over the set of diagonal positive semidefinite $P$, $R$ such that
$\tr(P) = 1$. For the rest of the proof we assume that $P$ and
$ARA^T$ are rank $m$. This is without loss of generality by the
continuity of $g$ and because both assumptions can be satisfied by
adding arbitrarily small perturbations to $P$ and $R$. (Here we use
the fact that $A$ is rank $m$.)

After differentiating the right hand side of \eqref{eq:g-raw} with
respect to $X$, we get the first-order optimality condition 
\begin{equation}
  \label{eq:fo-optimality}
  X^{-1}PX^{-1} =ARA^T.
\end{equation}
Multiplying by $P^{1/2}$ on the left and the right and taking square
roots gives the equivalent condition $P^{1/2}X^{-1}P^{1/2} =
(P^{1/2}ARA^TP^{1/2})^{1/2}$. This equation has a unique solution,
since $P$ and $ARA^T$ were both assumed to be invertible. Since
$\tr(PX^{-1}) = \tr(P^{1/2}X^{-1}P^{1/2})$ and also, by
\eqref{eq:fo-optimality}, $\tr(ARA^TX) = \tr(X^{-1}P) =
\tr(PX^{-1})$, we simplify $g(P,R)$ to
\begin{equation}\label{eq:g-final}
  g(P,R) = 2\tr((P^{1/2}ARA^TP^{1/2})^{1/2}) - \tr(R) =
  2\|P^{1/2}AR^{1/2}\|_{S_1} - \tr(R).
\end{equation}
We showed that
\eqref{eq:ellips-obj}--\eqref{eq:ellips-enclose} satisfies Slater's
condition and therefore strong duality holds, so by Theorem~\ref{thm:slater} and
Lemma~\ref{lm:ellips-program}, 
% \begin{equation*}
$\mu^2 = \max\{g(P,R): \tr(P) = 1, P,R \succeq 0, \text{ diagonal}\}$.
% \end{equation*}
It remains to show that $g(P,R)$ is maximized when $\tr(R) =
\|P^{1/2}AR^{1/2}\|_{S_1}^2$. To this end, let us define new variables
$Q$ and $c$, where $c = \tr(R)$ and $Q = R/c$. Then we can re-write
$g(P,R)$ as
\begin{equation*}
  g(P,R) = g(P,Q,c) = 2\|P^{1/2}A(cQ)^{1/2}\|_{S_1} - \tr(cQ) =
  2\sqrt{c}\|P^{1/2}AQ^{1/2}\|_{S_1} - c. 
\end{equation*}
From the first-order optimality condition $\frac{dg}{dc} =
0$, we see that maximum of $g(P,Q,c)$ is achieved when $c =
\|P^{1/2}AQ^{1/2}\|_{S_1}^2$ and is equal to
$\|P^{1/2}AQ^{1/2}\|_{S_1}^2$. Therefore, maximizing $g(P,R)$ over
diagonal positive semidefinite $P$ and $R$ such that $\tr(P) = 1$ is
equivalent to the optimization problem
\eqref{eq:nuclear-obj2}--\eqref{eq:nuclear-pos2}. This completes the
proof. 
\end{proofof}}{}
\end{document}